\documentclass[ejsv2,noshowframe]{imsart}

\RequirePackage[authoryear]{natbib}
\RequirePackage[colorlinks,citecolor=blue,urlcolor=blue]{hyperref}
\RequirePackage{graphicx}

\arxiv{2410.20319}
\startlocaldefs
\theoremstyle{plain}

\newtheorem{theorem}{Theorem}[section]
\newtheorem{proposition}{Proposition}[section]
\newtheorem{lemma}[theorem]{Lemma}
\theoremstyle{definition}

\theoremstyle{remark}

\usepackage{algorithm2e}
\usepackage{stackengine}
\usepackage{xcolor}

\usepackage{amssymb,amsmath,mathrsfs, bbm}
\usepackage{bm}
\usepackage{graphicx}
\usepackage{enumerate}
\usepackage{multirow}
\newtheorem{condition}{Condition}
\newcommand{\bfm}[1]{\ensuremath{\mathbf{#1}}}

\def\ba{\bfm a}   \def\bA{\bfm A}  
\def\bb{\bfm b}     
     
   \def\bD{\bfm D}

\def\bh{\bfm h}   \def\bH{\bfm H}  
   \def\bI{\bfm I}  
   \def\bJ{\bfm J}

   \def\bQ{\bfm Q}  
     \def\RR{\mathbb{R}}

   \def\bU{\bfm U}  
\def\bv{\bfm v}   \def\bV{\bfm V}  
     
\def\bx{\bfm x}   \def\bX{\bfm X}  
\def\by{\bfm y}     
\def\bz{\bfm z}   \def\bZ{\bfm Z}

\def\bzero{\bfm 0}

\newcommand{\bfsym}[1]{\ensuremath{\boldsymbol{#1}}}

\def\balpha{\bfsym \alpha}
\def\bbeta{\bfsym \beta}
       
       \def\bDelta{\bfsym \Delta}
\def\bfeta{\bfsym \eta}          
\def\bmu{\bfsym \mu}             

\def\btheta{\bfsym \theta}       
\def\beps{\bfsym \varepsilon}

\renewcommand{\hat}{\widehat}
\renewcommand{\tilde}{\widetilde}

\DeclareMathOperator*{\argmin}{arg\,min}


\endlocaldefs

\begin{document}
\begin{frontmatter}
\title{High-dimensional partial linear model with trend filtering}
\runtitle{High-dimensional partial linear model with trend filtering}

\begin{aug}
\author[A]{\fnms{Sang Kyu}~\snm{Lee}\ead[label=e1]{sangkyulee@konkuk.ac.kr}},
\author[B]{\fnms{Erikka}~\snm{Loftfield}\ead[label=e2]{erikka.loftfield@nih.gov}},
\author[C]{\fnms{Hyokyoung G.}~\snm{Hong}\ead[label=e3]{grace.hong@nih.gov}}
\and
\author[D]{\fnms{Haolei}~\snm{Weng}\ead[label=e4]{wenghaol@msu.edu}}\footnote{Corresponding author}
\address[A]{Department of Applied Statistics, Konkuk University\printead[presep={,\ }]{e1}}

\address[B]{Metabolic Epidemiology Branch, Division of Cancer Epidemiology and Genetics, National Cancer Institute\printead[presep={,\ }]{e2}}

\address[C]{Biostatistics Branch, Division of Cancer Epidemiology and Genetics,
National Cancer Institute\printead[presep={,\ }]{e3}}

\address[D]{Department of Statistics and Probability,
Michigan State University\printead[presep={,\ }]{e4}}

\runauthor{S. K. Lee et al.}
\end{aug}

\begin{abstract}
Understanding the links between diet, metabolic changes, and health outcomes is a key focus in nutritional science and broader biological research. Analyzing relationships, such as those between ultra-processed food (UPF) intake and metabolites, offers insights into potential biomarkers for diet-related diseases and public health applications. However, these analyses are challenging due to high-dimensional data structures and complex, often nonlinear associations between covariates and health outcomes. Traditional linear models and conventional nonparametric methods often lack the flexibility to accurately capture such complexities in biological data. To address these challenges, we propose a high-dimensional partial linear regression model that captures both linear and nonlinear effects, combining the interpretability of linear models with the adaptability of nonparametric approaches. Our model leverages trend filtering to handle local smoothness variations effectively and achieves minimax optimal rates, making it suitable for complex biological datasets. We apply this model to data from the Interactive Diet and Activity Tracking in AARP (IDATA) Study, demonstrating its utility in identifying biomarkers associated with UPF intake and illustrating its potential for broader applications in dietary, metabolic, and health-related research.
\end{abstract}

\begin{keyword}[class=MSC]
\kwd[Primary ]{62G99}
\kwd{62J07}
\kwd{62J02}
\end{keyword}

\begin{keyword}
\kwd{High-dimensional data analysis}
\kwd{Partial linear models}
\kwd{Trend filtering}
\kwd{Ultra-processed food biomarkers}
\end{keyword}

\end{frontmatter}
\tableofcontents

\sloppy
\section{Introduction}  
\label{sec:intro}

{


Ultra-processed food (UPF), characterized by high levels of synthetic additives, unhealthy fats, and added sugars, is associated with negative health impacts, including obesity and metabolic disorders. Investigating these relationships helps identify biomarkers that reflect dietary effects and guide strategies for improving public health and nutrition.


Studying this relationship is challenging due to the high dimensionality of metabolomic data. Accounting for key confounding variables, such as age \citep{fu2024association} and obesity-related measurements \citep{canhada2020ultra}, adds further complexity, as these factors may exhibit nonlinear associations with UPF intake. Similar challenges arise in other fields; for instance, lung cancer research faces similar challenges when associating mutation signatures with DNA methylation, where nonlinear relationships between factors like smoking history or tumor purity and mutation signatures add further complexity \citep{zhang2024apobec}.}



Traditional linear models are often limited in capturing nonlinear associations, as they are designed to model only linear relationships. While nonparametric smoothing methods, such as splines and local polynomial smoothing, can accommodate certain nonlinear relationships, they may lack the flexibility necessary to capture more intricate, {highly heterogeneous associations between predictors and responses}, ultimately resulting in suboptimal predictive performance. 
These limitations underscore the need for an innovative modeling approach capable of flexibly capturing such complex associations.


{In this paper, we propose a high-dimensional partial linear regression model in which the response variable $y$ is modeled as having a linear relationship with a high-dimensional vector of predictors $\mathbf{x} \in \mathbb{R}^p$ and a nonlinear relationship with \textcolor{black}{a one-dimensional predictor $\mathbf{z} \in \mathbb{R}$.} More specifically, we consider the following partial linear model:
\begin{align*}
y = \mathbf{x}' \boldsymbol{\beta}^0 + g_0(\mathbf{z}) + \varepsilon,
\end{align*}
where $\boldsymbol{\beta}^0$ is a sparse $p$-dimensional vector of coefficients, $g_0(\cdot)$ is an unknown nonparametric function, $\mathbf{z}$ is a univariate predictor chosen based on prior knowledge or structure selection methods \citep{zhang2011linear, huang2012semiparametric, lian2015separation}, and $\varepsilon$ is a zero-mean error term. 

Partial linear models have been widely used in various fields due to their flexibility in modeling complex relationships by combining the interpretability of linear models with the adaptability of nonparametric components.
However, their traditional application has largely been confined to settings where the dimension of  $\boldsymbol{\beta}^0$ is small or fixed relative to $n$; see, for instance, \cite{engle1986semiparametric, chen1988convergence, wahba1990spline, mammen1997penalized, hardle2000partially, bunea2004consistent, xie2009scad}. For high-dimensional $p$, partial linear models have also been extended \citep{muller2015partial, ma2016asymptotic, zhu2017nonasymptotic, yu2019minimax, zhu2019high, lv2022debiased, fu2024semiparametric}, but these extensions are limited by the assumption that $g_0$ lies within a smooth function class (e.g., Sobolev or Hölder). These methods commonly apply penalized regression approaches, such as LASSO \citep{tibshirani1996regression} or SCAD \citep{fan2001variable}, to estimate the sparse vector $\boldsymbol{\beta}^0$ and use nonparametric smoothing methods, such as splines or local polynomial smoothing, for $g_0$. However, the assumption that $g_0$ belongs to a smooth function class may impose overly restrictive conditions in practical settings, as $g_0$ may exhibit heterogeneous smoothness.

Similar to \cite{muller2015partial} and \cite{yu2019minimax}, our model accommodates cases where $p$ is comparable to or larger than the sample size $n$, but unlike their approach, we do not require $g_0$ to belong to a smooth function class.
We assume that $g_0$ belongs to the following function class:
\begin{align}
\label{tvf:def}
\mathcal{V}_k(C) := \Big\{ g_0 : [0, 1] \rightarrow \mathbb{R} : \text{TV}(g_0^{(k)}) \leq C \Big\},
\end{align}
where $\text{TV}(\cdot)$ denotes the total variation operator, $g_0^{(k)}$ represents the $k$-th weak derivative of $g_0$, and $C > 0$ is a constant.
The function class $\mathcal{V}_k(C)$ allows for heterogeneous smoothness of $g_0$, providing greater flexibility in capturing local behavior than traditional Sobolev or Hölder classes.

To estimate $g_0$, we utilize \textit{trend filtering} \citep{steidl2006splines, kim2009ell_1}, which extends the idea of \textit{locally adaptive regression splines} \citep{mammen1997locally}. Trend filtering estimates $g_0$ by minimizing an objective function regularized with a total variation penalty.
}
In the univariate setting, the trend filtering estimator is given by
\begin{align}
\min_{g\in \mathcal{H}^k_n} \frac{1}{2} \sum_{i=1}^{n} \left( y_i - g(z_i) \right)^2 + {\color{black}\gamma}  \text{TV}(g^{(k)}), \label{eq:locally adaptive} 
\end{align}
where {\color{black}$\gamma \geq 0$} is a tuning parameter and $\mathcal{H}^k_n$ is an $n$-dimensional space spanned by spline-like basis functions known as falling factorial basis (see Section \ref{model:estimation} for details). This method adapts to local smoothness variations more effectively 
than traditional smoothers and achieves minimax optimal rates over $\mathcal{V}_k(C)$ \citep{donoho1998minimax, tibshirani2014adaptive}. Unlike linear smoothers, such as local polynomials or splines, which struggle to adapt to changes in smoothness, 
trend filtering provides a robust solution for estimating functions with varying smoothness levels.

Recent advancements in trend filtering have significantly expanded its range of applications, including univariate nonparametric regression under strong sparsity \citep{guntuboyina2020adaptive, ortelli2021prediction}, graph trend filtering \citep{wang2016trend, madrid2020adaptive}, functional trend filtering \citep{wakayama2023trend}, scalar-on-image regression models \citep{wang2017generalized}, additive models \citep{sadhanala2019additive, petersen2019data, tan2019doubly}, quantile regression models \citep{madrid2022risk}, and spatiotemporal models \citep{padilla2023temporal, rahardiantoro2024spatio}, among others and with further references therein. Note that while \cite{petersen2019data} emphasizes practical and algorithmic advances for an additive modeling approach that adaptively selects variables, linearity, and knot locations via trend filtering, our work instead provides a comprehensive theoretical framework under a partial linear model. In particular, we assume a known nonparametric component to accommodate highly heterogeneous smoothness, offering a distinct perspective compared to purely additive methods.



In this paper, we extend trend filtering to the high-dimensional partial linear model setting. 
The partial linear model integrates parametric and nonparametric components, presenting unique estimation challenges. Jointly estimating these components requires carefully balancing bias and variance to avoid overfitting. These complexities are especially significant in the context of trend filtering, affecting both theoretical analysis and practical implementation.
From a theoretical standpoint, we prove that our estimate for $\bbeta^0$ attains the oracle rate $(s\log p)/n$ as if $g_0$ were known, and the convergence rate for $g_0$ exhibits a phase transition between $(s\log p)/n$ and the optimal nonparametric rate $n^{-(2k+2)/(2k+3)}$. This dual-rate result highlights the unique adaptive benefits of partial linear trend filtering, distinguishing it from existing partial linear estimators that do not incorporate trend filtering or from trend filtering methods that have not established such optimal rate results. Notably, the conditions we impose are more relaxed than, or at least comparable to, those typically required in existing partial linear model frameworks \citep{muller2015partial, yu2019minimax}. 
{For implementation, we develop a blockwise coordinate descent algorithm to enable efficient computation and have made the method accessible through an accompanying {R} package.}



We apply the proposed model to examine the association between UPF intake and high-dimensional metabolite profiles, accounting for potential nonlinear and non-smooth effects of age, BMI, hip circumference, and waist circumference. The dataset originates from the Interactive Diet and Activity Tracking in AARP (IDATA) Study, conducted by the National Cancer Institute (NCI). This study was specifically designed to support research on dietary intake, nutrition, and cancer prevention \citep{subar2020performance}.

Building on this dataset, our method utilizes high-dimensional feature selection within a partial linear model framework, incorporating trend filtering to enhance the detection of biomarkers associated with UPF intake. By improving prediction accuracy, this approach not only identifies meaningful metabolites but also contributes to the development of evidence-based public health guidelines and dietary recommendations.


{
The remainder of the paper is organized as follows. Section \ref{main:part:paper}
 introduces the proposed method and examines its statistical properties. In Section \ref{sec:simulation}, we present a comprehensive numerical simulation study, and in Section \ref{real:data}, we analyze the relationship between UPF intake and metabolite profiles, accounting for nonlinear confounding effects. Section \ref{discuss:con} concludes with remarks on potential extensions.}

\subsection{Notations}
{We begin by introducing the notation that will be used throughout the paper.} For $\ba=(a_1,\ldots,a_p)' \in \mathbb{R}^p$, denote $ \|\ba\|_q = (\sum_{i=1}^p |a_i|^q)^{\frac{1}{q}}$ for $q\in [1,\infty)$ and $\|\ba\|_{\infty} = \max_{1\leq i\leq p} |a_i|$. For two vectors $\ba,\bb\in \mathbb{R}^n$, we write $\|\ba\|^2_n=\frac{1}{n}\ba'\ba, \langle \ba,\bb \rangle_n=\frac{1}{n}\ba'\bb$. For a vector $\ba\in \mathbb{R}^n$ and a function $g: \mathbb{R}\rightarrow \mathbb{R}$, let $g(\ba)=(g(a_1),\ldots, g(a_n))'$. Given a square matrix $\bA=(a_{ij}) \in \mathbb{R}^{p\times p}$, $\lambda_{\max}(\bA)$ and $\lambda_{\min}(\bA)$ represent its largest and smallest eigenvalues respectively. For a general matrix $\bA=(a_{ij}) \in \RR^{p\times q}$, $\|\bA\|_2$ denotes its spectral norm; $\|\bA\|_{\max} = \max_{ij}|a_{ij}|, \|\bA\|_F=\sqrt{\sum_{i,j} a_{ij}^2}$. For $a,b\in\RR$, $a\wedge b = \min(a,b), a\vee b = \max(a,b)$. For a set $A$, $ \mathbbm{1}_A(\cdot)$ is the usual indicator function, and $|A|$ to be its cardinality. Moreover, $a_n\lesssim b_n~ (a_n\gtrsim b_n)$ means there exists some constant $C>0$ such that $a_n\leq Cb_n~(a_n\geq C b_n)$ for all $n$; thus $a_n\lesssim b_n~ (a_n\gtrsim b_n)$ is equivalent to $a_n=O(b_n)~(a_n=\Omega(b_n))$; $a_n\asymp b_n$ if and only if $a_n\lesssim b_n$ and $b_n\gtrsim a_n$; $a_n\gg b_n$ means $b_n=o(a_n)$. We put subscript $p$ on $O$ and $o$ for random variables. For i.i.d. samples $\{w_1,\ldots, w_n\}$ from a distribution $Q$ supported on some space $\mathcal{W}$, denote by $Q_n$ the associated empirical distribution. The $L_2(Q)$ and $L_2(Q_n)$ norms for functions $f: \mathcal{W}\rightarrow \mathbb{R}$ are: $\|f\|^2_{L_2(Q)}=\int_{\mathcal{W}}f^2(w)dQ(w), \|f\|^2_{L_2(Q_n)}=\frac{1}{n}\sum_{i=1}^nf^2(w_i)$. For simplicity we will abbreviate subscripts and write $\|f\|, \|f\|_n$ for $\|f\|_{L_2(Q)}, \|f\|_{L_2(Q_n)}$ respectively, whenever $Q$ is the underlying distribution of the covariates. For a random variable $x\in \mathbb{R}$, we also write $\|x\|$ for $\sqrt{\mathbb{E}x^2}$. Given random variables $z_1,z_2,\ldots, z_n$, the order statistics are denoted by $z_{(1)}\leq z_{(2)}\leq \cdots \leq z_{(n)}$. The sub-Gaussian norm of a random variable $x \in \mathbb{R}$ is defined as $\|x\|_{\psi_2}=\inf\{t>0: \mathbb{E}\exp(x^2/t^2)\leq 2\}$.

\section{Trend filtering in high-dimensional partial linear models}
\label{main:part:paper}

\subsection{Problem setting and the proposed method}
\label{model:estimation}

We consider the partial linear regression model:
\begin{align*}
y=\bx'\bbeta^0+g_0(z)+\varepsilon,
\end{align*}
where $\varepsilon$ is independent of $(\bx,z) \in \mathbb{R}^{p+1}$, $\bbeta^0\in \mathbb{R}^p$ has the support $S=\{j: \beta^0_j \neq 0\}$ with $|S|=s$, and $g_0: [0,1] \rightarrow \mathbb{R}$ is a nonparametric function. Without loss of generality, the support of $z$ is assumed as $[0,1]$. It can be relaxed to any compact interval. Let $\{(y_i,\bx_i, z_i)\}_{i=1}^{n}$ be $n$ independent observations of $(y,\bx,z)$, and denote $\by=(y_1,\ldots, y_n)', \bX=(\bx_1,\ldots, \bx_n)', \bz=(z_1,\ldots, z_n)'$. We focus on the high-dimensional setting in which the dimension $p$ can be much larger than the sample size $n$, and assume ${\rm TV}(g_0^{(k)})\leq L_g$ with some constant $L_g > 0$ to allow for a large degree of heterogeneous smoothness of $g_0$.

Expanding upon the univariate trend filtering \citep{tibshirani2014adaptive} discussed in Section \ref{sec:intro}, we consider the following $k$th degree partial linear trend filtering estimation,
\begin{align}
(\hat{\bbeta},\hat{g})\in \argmin_{\bbeta\in \mathbb{R}^p, g\in \mathcal{H}^k_n} \frac{1}{2}\|\by-\bX\bbeta-g(\bz)\|_n^2+\lambda\|\bbeta\|_1+\gamma {\rm TV}(g^{(k)}), \label{est:p}
\end{align}
where $\lambda,\gamma \geq 0$ are tuning parameters, and $\mathcal{H}^k_n$ is the span of the $k$th degree falling factorial basis functions defined over the ordered input points $z_{(1)}<z_{(2)}<\cdots<z_{(n)}$. The set of basis functions take the form \citep{tibshirani2014adaptive, wang2014falling, tibshirani2022divided},
\begin{align}
    q_{i}(t) &= \prod_{l=1}^{i-1}(t - z_{(l)}),~~i=1,\dots,k+1, \label{ffb:one}\\
    q_{i+k+1}(t) &= \prod_{l=1}^{k}(t - z_{(i+l)})  \mathbbm{1}\{ 
t > z_{(i+k)}\},~~i=1,\dots,n-k-1,  \label{ffb:two}
\end{align}
where we adopt the convention $\prod_{i=1}^0c_i=1$. The above falling factorial basis looks similar to the standard truncated power basis for $k$th degree splines with knots at $z_{(k+1)},\ldots, z_{(n-1)}$. In fact, it is straightforward to verify that the two bases are equal when $k=0,1$, and they span different spaces when $k\geq 2$--the falling factorial functions in \eqref{ffb:two} are piecewise polynomials with discontinuities in their derivatives of orders $1,2,\ldots, k-1$. Define the matrix $\bQ\in \mathbb{R}^{n\times n}$ with entries $q_{\ell k}=q_{\ell}(z_k), 1\leq \ell, k\leq n$. Then for any $g\in \mathcal{H}^k_n$, we can write $g(\bz)=\bQ\balpha$ for some $\balpha \in \mathbb{R}^n$. The estimation in \eqref{est:p} is thus equivalent to 
\begin{align}
(\hat{\bbeta},\hat{\balpha})\in \argmin_{\bbeta\in \mathbb{R}^p, \balpha \in \mathbb{R}^n} \frac{1}{2}\|\by-\bX\bbeta-\bQ\balpha\|_n^2+\lambda\|\bbeta\|_1+\gamma k!\sum_{l=k+2}^{n} |\alpha_{l}|. \label{est:falling}
\end{align}
Further representing $\bQ\balpha=\btheta$ and using the formula for $\bQ^{-1}$ \citep{wang2014falling}, we can reformulate the optimization \eqref{est:falling} as
\begin{align}
(\hat{\bbeta},\hat{\btheta})\in \argmin_{\bbeta\in \mathbb{R}^p, \btheta\in \mathbb{R}^n} \frac{1}{2}\|\by-\bX\bbeta-\btheta\|_n^2+\lambda\|\bbeta\|_1+\gamma \| D^{ (\bz, k+1)} \btheta \|_1, \label{est:d}
\end{align}
where $D^{(\bz, k+1)} \in \mathbb{R}^{(n-k-1)\times n}$ is the discrete difference operator of order $k+1$. {\color{black} Formulation \eqref{est:d} shares similarity with the method in  \cite{drikvandi2025high}, where they apply distinct penalties to two separate parameter blocks by treating some covariates as being of interest and others as nuisance. The key difference between their approach and our approach is: (1) Their approach is focused on high-dimensional estimation and inference under linear models, while our approach centers on the estimation under high-dimensional partial linear models; (2) Their approach uses smooth penalties to shrink parameters of interest and control variance, while ours employs a non-smooth penalty to estimate the nonlinear function and achieve local adaptivity (see next paragraph for more details).} When $k=0$,
\begin{align}
    D^{(\bz, 1)} = \begin{pmatrix}
        -1 & 1 & 0 & \cdots & 0 & 0  \\
        0 & -1 & 1 & \cdots & 0 & 0 \\
        \vdots &&&&& \\
        0 & 0 & 0 & \cdots & -1 & 1
    \end{pmatrix} \in \mathbb{R}^{(n-1) \times n}. \label{mat:difference operator}
\end{align}
For $k \geq 1$, the difference operator is defined recursively, that is,
\begin{align*}
    D^{(\bz,k+1)} = D^{(\bz,1)} \cdot \text{diag} \left( \frac{k}{z_{(k+1)} - z_{(1)}}, \cdots, \frac{k}{z_{(n)} - z_{(n-k)}}  \right) \cdot D^{(\bz,k)}.
\end{align*}
Here, $D^{(\bz, 1)}$ is defined as the form of \eqref{mat:difference operator} with the dimension of $(n-k-1) \times (n-k)$. The problem \eqref{est:d} is a generalized LASSO problem. {The sparsity and banded structure of the penalty matrix $D^{(\bz,k+1)}$ provides significant advantages in solving the optimization problem
\eqref{est:d}. 
Once $\hat{\btheta}$ is computed from \eqref{est:d}, the estimator $\hat{g}$ in \eqref{est:p} can be obtained as $\hat{g}(t)=\sum_{\ell=1}^n\hat{\alpha}_{\ell}q_{\ell}(t)$ with $\hat{\balpha}=\bQ^{-1}\hat{\btheta}$.

{Our estimation approach in \eqref{est:p} introduces a doubly penalized least squares estimator, similar to those proposed by \cite{muller2015partial} and \cite{yu2019minimax}.} This approach involves two penalties: the first shrinkage penalty induces sparsity on the parametric part, and the second smoothness penalty controls the complexity of the nonparametric part. The primary distinction between our approach and theirs lies in the estimation method of the function $g_0$. While both of their methods employ the smoothing spline technique with a squared $\ell_2$ penalty, our method utilizes trend filtering based on an $\ell_1$ type penalty which can achieve a finer degree of local adaptivity.  Given the structure of the partial linear model, a better estimation of $g_0$ is expected to lead to a better estimation of $\bbeta^0$. Therefore, our method improves over the methods in \cite{muller2015partial} and \cite{yu2019minimax} when $g_0$ possesses heterogeneous smoothness. The results on degrees of freedom provide a clear rationale for our method's superior ability to adapt to heterogeneous smoothness. 

\subsection{Degrees of freedom}

We assume $\bX$ and $\bz$ are fixed with $z_{(1)}<z_{(2)}<\cdots<z_{(n)}$ to examine 
the degrees of freedom for the proposed partial linear trend filtering method \eqref{est:d}. Recall that for given data $\by\in \mathbb{R}^n$ with $\mathbb{E}(\by)=\bmu, {\rm Cov}(\by)=\sigma^2 \bI$, the effective degrees of freedom \citep{stein1981estimation, hastie1990generalized} of $\hat{\bmu}$, as an estimator of $\bmu$, is defined as
\begin{align}
\label{df:orig}
    {\rm df}\left(\hat{\bmu}\right) &= \frac{1}{\sigma^2} \sum_{i=1}^{n} \text{Cov} (\hat{\mu}_i, y_i). 
\end{align}
The degrees of freedom measures the complexity of an estimator and plays an important role  in model assessment and selection. Since \eqref{est:d} follows the generalized LASSO form, we can apply established results on the generalized LASSO \citep{tibshirani2011solution, tibshirani2012degrees} to derive the degrees of freedom for \eqref{est:d}.  

\begin{proposition} \label{prop:df}
Consider $( \hat{\bbeta}, \hat{\btheta})$ from \eqref{est:d}. Define the two active sets 
\[
\mathcal{A}=\{1\leq j\leq p: \hat{\beta}_j\neq 0\}, \quad \mathcal{B}=\{1\leq j\leq n-k-1: (D^{(\bz, k+1)}\hat{\btheta})_j\neq 0\}.
\] 
Assume the Gaussian partial linear model $\by \sim \mathcal{N}\left(\bX\bbeta^0 + g_0(\bz), \sigma^2 \bI\right)$.
\begin{itemize}
\item[(i)] For any fixed $\bX, \bz$ and $\lambda \geq 0, \gamma \geq 0$,  the degrees of freedom for the fitting $\bX \hat{\bbeta} + \hat{\btheta}$ is
\begin{align*}
    {\rm df}\left(\bX \hat{\bbeta} + \hat{\btheta}\right) =\mathbb{E}\Big[{\rm dim}\big({\rm Col}(\bX_{\mathcal{A}})+{\rm Null}(D^{(\bz, k+1)}_{-\mathcal{B}})\big)\Big],
\end{align*}
where ${\rm Col}(\bX_{\mathcal{A}})$ is the subspace spanned by the columns of $\bX$ that are indexed by $\mathcal{A}$, and ${\rm Null}(D^{(\bz, k+1)}_{-\mathcal{B}})$ is the nullspace of the matrix $D^{(\bz, k+1)}$ after removing the rows indexed by $\mathcal{B}$. 
\item[(ii)] In addition, denote the first $k+1$ columns and last $n-k-1$ columns of $\bQ$ in \eqref{est:falling} by  $\bQ_1\in \mathbb{R}^{n\times (k+1)},\bQ_2\in \mathbb{R}^{n\times (n-k-1)}$ respectively. Let $\bU\bU'$ be the projection operator onto the space orthogonal to ${\rm Col}(\bQ_1)$ where $\bU \in \mathbb{R}^{n\times (n-k-1)}$ has orthonormal columns. If the matrix $[\bU'\bX, \frac{\lambda}{\gamma k!}\bU'\bQ_2]$ has columns in general position \citep{tibshirani2013lasso}, then
\begin{align*}
    {\rm df}\left(\bX \hat{\bbeta} + \hat{\btheta}\right) &=\mathbb{E}\Big[{\rm dim}\big({\rm Col}(\bX_{\mathcal{A}}))+{\rm dim}({\rm Null}(D^{(\bz, k+1)}_{-\mathcal{B}})\big)\Big] \\
    &=\mathbb{E}\big[|\mathcal{A}|+|\mathcal{B}|\big]+k+1.
\end{align*}
\end{itemize}
\end{proposition}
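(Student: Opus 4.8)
The plan is to recognize \eqref{est:d} as a single generalized lasso and then apply the degrees-of-freedom calculus of \cite{tibshirani2011solution, tibshirani2012degrees}. First I would stack the parameters into $\bm=(\bbeta',\btheta')'\in\mathbb{R}^{p+n}$, write the fit as $[\bX,\bI]\bm$, and merge the two penalties by factoring out $\gamma$, so the regularizer reads $\gamma\|\tilde D\bm\|_1$ with $\tilde D={\rm blockdiag}(\tfrac{\lambda}{\gamma}\bI_p,\,D^{(\bz,k+1)})$. (Rescaling the loss by $n$ only inflates $\lambda,\gamma$ and changes neither the solution nor its active sets, so the $\tfrac1n$ normalization is immaterial for degrees of freedom.) The boundary set of $\tilde D\bm$ then splits into $\mathcal{A}$ on the top block and $\mathcal{B}$ on the bottom block, so deleting these boundary rows leaves a matrix whose null space is $\{(\bbeta',\btheta')':{\rm supp}(\bbeta)\subseteq\mathcal{A},\ D^{(\bz,k+1)}_{-\mathcal{B}}\btheta=\bzero\}$. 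Pushing this null space through $[\bX,\bI]$ produces exactly ${\rm Col}(\bX_{\mathcal{A}})+{\rm Null}(D^{(\bz,k+1)}_{-\mathcal{B}})$, and the generalized-lasso degrees-of-freedom identity, which equates ${\rm df}$ with the expected dimension of this image, delivers part (i). Crucially, no rank or position assumption is used here, which matters because $\bX$ may fail to have full column rank when $p>n$.

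For part (ii) I would first fix the two dimensions. Since $D^{(\bz,k+1)}\in\mathbb{R}^{(n-k-1)\times n}$ is a full-row-rank difference operator whose rows remain independent after any deletion, $D^{(\bz,k+1)}_{-\mathcal{B}}$ has rank $n-k-1-|\mathcal{B}|$, giving ${\rm dim}({\rm Null}(D^{(\bz,k+1)}_{-\mathcal{B}}))=|\mathcal{B}|+k+1$. Using $D^{(\bz,k+1)}\bQ=k!\,[\bzero,\bI_{n-k-1}]$ (equivalently $D^{(\bz,k+1)}\bQ_1=\bzero$ and $D^{(\bz,k+1)}\bQ_2=k!\bI$), writing $\btheta=\bQ_1\balpha_1+\bQ_2\balpha_2$ shows $\btheta\in{\rm Null}(D^{(\bz,k+1)}_{-\mathcal{B}})$ iff ${\rm supp}(\balpha_2)\subseteq\mathcal{B}$, so ${\rm Null}(D^{(\bz,k+1)}_{-\mathcal{B}})={\rm Col}(\bQ_1)+{\rm Col}((\bQ_2)_{\mathcal{B}})$. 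Both equalities in part (ii) then reduce to a single linear-independence claim: the vectors $\{\bx_j:j\in\mathcal{A}\}$, the columns of $\bQ_1$, and $\{\bq_{2,l}:l\in\mathcal{B}\}$ are jointly independent in $\mathbb{R}^n$.

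To bring in the general-position hypothesis I would connect \eqref{est:falling} to a reduced ordinary lasso. Because $\balpha_1$ is unpenalized, profiling it out projects the residual onto ${\rm Col}(\bQ_1)^\perp$; with $\bU\bU'=\bI-P_{{\rm Col}(\bQ_1)}$ and after rescaling $\balpha_2$ by $\lambda/(\gamma k!)$, the problem becomes an ordinary lasso with response $\bU'\by$ and design $\bW=[\bU'\bX,\ \tfrac{\lambda}{\gamma k!}\bU'\bQ_2]$, whose active set is exactly $\mathcal{A}\cup\mathcal{B}$. By \cite{tibshirani2013lasso}, columns of $\bW$ in general position force the active columns $\{\bU'\bx_j:j\in\mathcal{A}\}\cup\{\bU'\bq_{2,l}:l\in\mathcal{B}\}$ to be linearly independent. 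Given a relation $\bX_{\mathcal{A}}\bc+\bQ_1\bd+(\bQ_2)_{\mathcal{B}}\be=\bzero$, applying $\bU'$ annihilates the $\bQ_1$ term (as $\bU'\bQ_1=\bzero$) and leaves a dependence among the active columns of $\bW$, forcing $\bc=\bzero$ and $\be=\bzero$; the remaining $\bQ_1\bd=\bzero$ gives $\bd=\bzero$ since $\bQ_1$ has full column rank. This yields the independence, hence the direct sum and the count $|\mathcal{A}|+|\mathcal{B}|+k+1$.

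The step I expect to be most delicate is precisely this last translation: general position is a statement in the $(n-k-1)$-dimensional projected space, whereas the formula from part (i) lives in the ambient $\mathbb{R}^n$. The reconciliation hinges on isolating the $(k+1)$ unpenalized directions ${\rm Col}(\bQ_1)$ (annihilated by $\bU'$, hence invisible to the reduced lasso, yet contributing the additive $k+1$ to the degrees of freedom) from the penalized directions on which general position acts. Carrying this null-space bookkeeping correctly, and checking that profiling out $\balpha_1$ preserves the active sets $\mathcal{A}$ and $\mathcal{B}$, is the crux of the argument.
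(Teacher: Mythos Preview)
Your treatment of Part (i) matches the paper's exactly: both stack $(\bbeta,\btheta)$ into a single generalized lasso with design $[\bX,\bI]$ and block-diagonal penalty matrix, invoke Theorem~3 of \cite{tibshirani2012degrees}, and read off the null space blockwise.

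For Part (ii) your argument is correct but follows a genuinely different route from the paper. You stay inside the subspace formula from Part (i) and reduce both equalities to a single joint linear-independence claim for $\{\bx_j:j\in\mathcal{A}\}\cup{\rm cols}(\bQ_1)\cup\{\bq_{2,l}:l\in\mathcal{B}\}$, which you then prove by pushing a putative dependence through $\bU'$ to land among the active columns of the reduced lasso and invoking general position there. The paper, by contrast, does not revisit the Part (i) formula at all: it recomputes ${\rm df}$ from the covariance definition. Using the same reduced lasso \eqref{std:lasso}, it writes the fit as $\bX\hat{\bbeta}+\hat{\btheta}=\tilde{\bU}\tilde{\bU}'\by+\bU\bU'\bJ\hat{\bfeta}$ with $\bJ=[\bX,\tfrac{\lambda}{\gamma k!}\bQ_2]$, then splits the trace of the covariance: the projection $\tilde{\bU}\tilde{\bU}'\by$ contributes exactly $k+1$, the cross terms vanish because $\hat{\bfeta}$ depends on $\by$ only through $\bU'\by$ (orthogonal to $\tilde{\bU}'\by$), and the remainder is the degrees of freedom of the standard lasso $\bH\hat{\bfeta}$, which general position and uniqueness turn into $\mathbb{E}[|\mathcal{A}|+|\mathcal{B}|]$. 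Your approach makes the connection between Parts (i) and (ii) explicit---you identify exactly when the sum becomes direct---whereas the paper's sidesteps the null-space bookkeeping you flagged as delicate and obtains the $k+1$ term cleanly as the df of a pure linear smoother.
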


It is known that $\mathbb{E}\big[{\rm dim}({\rm Col}(\bX_{\mathcal{A}}))\big]$ is the degrees of freedom for $\bX\hat{\bbeta}$ when $\hat{\bbeta}$ is a standard LASSO estimate \citep{tibshirani2012degrees}, and $\mathbb{E}\big[{\rm dim}({\rm Null}(D^{(\bz, k+1)}_{-\mathcal{B}}))\big]$ is the degrees of freedom for $\hat{\btheta}$ if $\hat{\btheta}$ is from univariate trend filtering \citep{tibshirani2014adaptive}. Part (i) of Proposition \ref{prop:df} shows that the degrees of freedom for the partial linear trend filtering \eqref{est:d}--based on the idea of combining LASSO and trend filtering--equals to the expected dimension of the sum of ${\rm Col}(\bX_{\mathcal{A}})$ and ${\rm Null}(D^{(\bz, k+1)}_{-\mathcal{B}})$. When these two subspaces have no intersection except $\bzero$, the sum becomes direct sum so that 
\[
{\rm dim}\big({\rm Col}(\bX_{\mathcal{A}})+{\rm Null}(D^{(\bz, k+1)}_{-\mathcal{B}})\big)={\rm dim}\big({\rm Col}(\bX_{\mathcal{A}}))+{\rm dim}({\rm Null}(D^{(\bz, k+1)}_{-\mathcal{B}})\big).
\]
Part (ii) of Proposition \ref{prop:df} provides a sufficient condition for the above to hold. The result in Part (ii) admits a more direct interpretation: for an unbiased estimate of the degrees of freedom of $\bX \hat{\bbeta} + \hat{\btheta}$, we count the number of nonzeros in $ \hat{\bbeta}$ and the number of changes in the $(k+1)$th discrete derivative of $\hat{\btheta}$, and add them up together with $k+1$. The proof for Proposition \ref{prop:df} is provided in Appendix \ref{proof:prop}.


\begin{figure}[!t]
    \centering
    \includegraphics[width=0.75\textwidth, height=0.57\textheight]{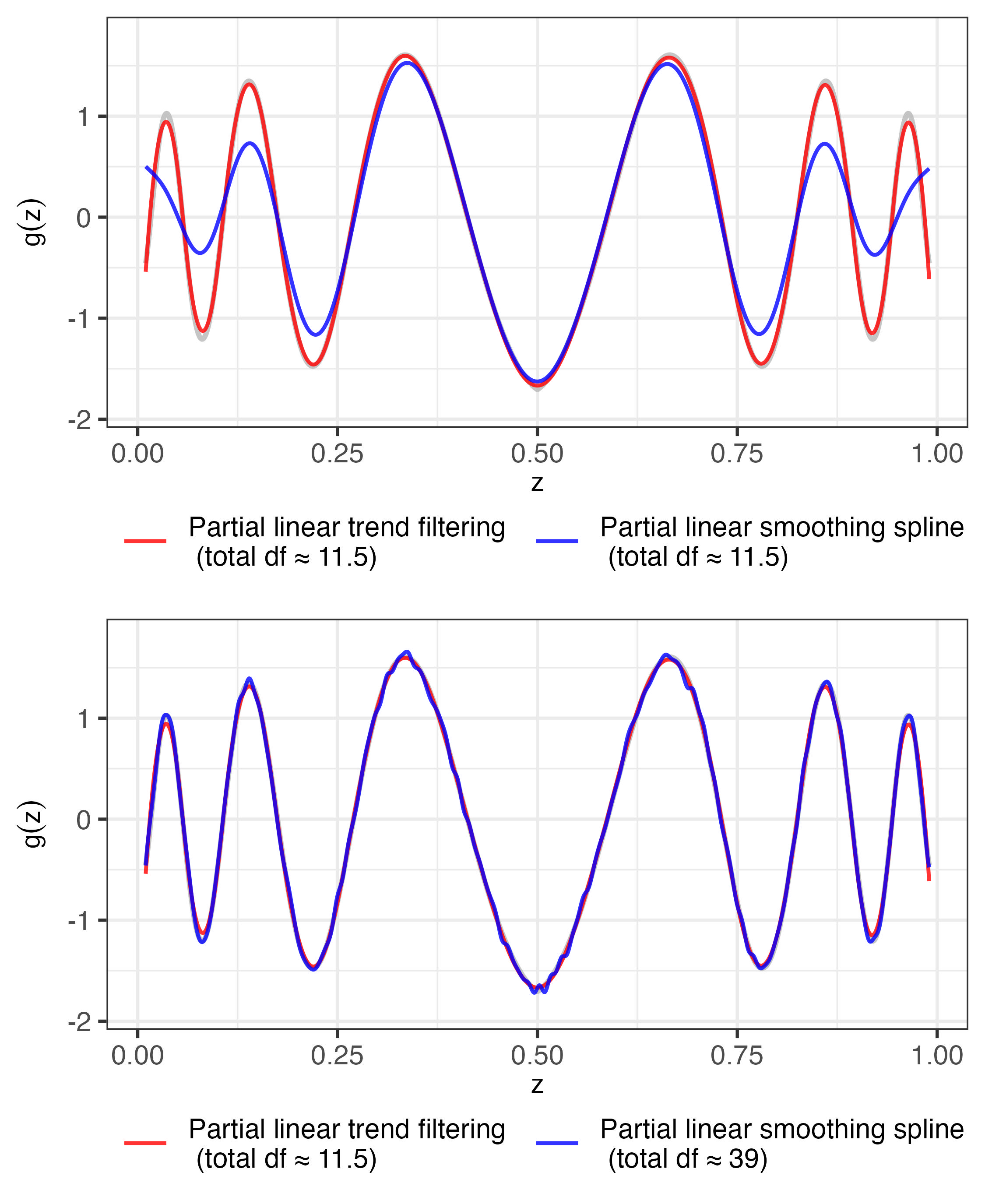}
    \caption{{Comparison of $\hat{g}(z)$ estimates using partial linear trend filtering (PLTF) and partial linear smoothing splines (PLSS) for two different total degrees of freedom.} The total degree of freedom is approximately 11.5 for both PLTF and PLSS for the upper plot, and 11.5 and 39 for the lower plot correspondingly. The grey line denotes the true function.}
    \label{fig:df}
\end{figure} 

The degrees of freedom allows us to calibrate model complexities for a fair comparison of different methods. Consider the method from \cite{muller2015partial, yu2019minimax} based on smoothing splines,
\begin{align}
\label{plss:def}
(\hat{\bbeta},\hat{g})\in \argmin_{\bbeta\in \mathbb{R}^p, g\in \mathcal{G}^k_n} \frac{1}{2}\|\by-\bX\bbeta-g(\bz)\|_n^2+\lambda\|\bbeta\|_1+\gamma \int_0^1(g^{((k+1)/2)}(t))^2dt,
\end{align}
where $\mathcal{G}^k_n$ is the space of $k$th degree natural splines with knots at the input points $z_1,\ldots, z_n$. We refer to it as \emph{partial linear smoothing spline}. Using $n$ basis functions spanning $\mathcal{G}^k_n$, the partial linear smoothing spline estimation can be rewritten in a form which is similar to \eqref{est:falling} or \eqref{est:d} except that the $\ell_1$ penalty for the nonparametric part is replaced with a squared $\ell_2$ penalty. We omit the detail as this is standard in the nonparametrics literature. Figure \ref{fig:df} presents a comparison between the partial linear smoothing spline \eqref{plss:def} and partial linear trend filtering \eqref{est:p} across various degrees of freedom (by changing the tuning parameters $(\lambda, \gamma)$). Note that the degrees of freedom for partial linear smoothing spline does not admit a simple form. We hence use the original definition \eqref{df:orig} to numerically compute it. In this comparison, we consider the model $g_0(z) = 2\min(z, 1-z)^{0.2} \sin\left\{2.85 \pi / (0.3 + \min(z, 1-z) )\right\}$ with setting $n=1000, s = 4$ and $p=100$.  Figure \ref{fig:df} illustrates that, when both methods are applied with the same total degrees of freedom of 11.5, the partial linear smoothing spline fails to adequately capture local smoothness at the function's boundaries, whereas the partial linear trend filtering method performs more effectively in this regard. Although increasing the degrees of freedom to 39 improves the fit of the partial linear smoothing spline at the boundaries, it leads to oversmoothing in the central regions of the function. {These observations highlight that partial linear trend filtering better adapts to varying levels of local smoothness compared to partial linear smoothing splines, providing a key motivation for our study.} 
We provide more empirical comparisons in Sections \ref{sec:simulation} and \ref{real:data}.

\subsection{Theoretical Properties}

In this section, we study the rate of convergence for the proposed estimators $(\hat{\bbeta},\hat{g})$ in \eqref{est:p}. We first introduce our technical conditions.

\begin{condition} \label{con:one}
The covariate $\bx=(x_1,\ldots, x_p)$ has sub-Gaussian coordinates: $\max_{1\leq j \leq p}\|x_j\|_{\psi_2}\leq K_x$. 
\end{condition}

\begin{condition} \label{con:two}
The noise $\varepsilon$ is sub-Gaussian: $\mathbb{E}(\varepsilon)=0, {\rm Var}(\varepsilon)=\sigma^2, \|\varepsilon\|_{\psi_2}\leq K_{\varepsilon}\sigma$.  
\end{condition}

\begin{condition} \label{con:three}
$z$ has a continuous distribution supported on $[0,1]$. Its density is bounded below by a constant $\ell_z>0$.  
\end{condition}

\begin{condition} \label{con:four}
Define $\bh(z)=(h_1(z),\ldots, h_p(z))=\mathbb{E}(\bx|z)$ and $\tilde{\bx}=\bx-\bh(z)$. Assume $\lambda_{\min}(\mathbb{E}\tilde{\bx}\tilde{\bx}')\geq \Lambda_{min}>0$ and $\lambda_{\max}(\mathbb{E}\bh(z)\bh(z)')\leq \Lambda_{max}<\infty$, {\color{black}where $\Lambda_{\min}$ and $\Lambda_{\max}$ are positive and bounded constants}. 
\end{condition}

\begin{condition} \label{con:five}
$\max_{1\leq j \leq p} {\rm TV}(h_j^{(k)})\leq L_h.  $
\end{condition}

\begin{condition} \label{con:six}
$\frac{s^2\log p +s\log^2 p}{n}=o(1)$ and $p\rightarrow \infty$, as $n\rightarrow \infty$. 
\end{condition}

Compared to \cite{muller2015partial}, Condition \ref{con:one} relaxes the assumption of $\bx$ from being uniformly bounded to sub-Gaussian. Compared to \cite{yu2019minimax}, Condition \ref{con:one} only requires sub-Gaussianity for marginal distributions of $\bx$, instead of the joint distribution. Condition \ref{con:two} is the same as in \cite{yu2019minimax}, relaxing the errors from being standard normal in \cite{muller2015partial} to sub-Gaussian. For Condition \ref{con:three}, the continuity assumption is very weak, and the lower bound on the density is mainly used to bound the maximum gap between adjacent input points with high probability. See \cite{wang2014falling, sadhanala2019additive} for similar assumptions in the context of trend filtering. Condition \ref{con:four} is common in semiparametric literature \citep{yu2011semi, muller2015partial, yu2019minimax}. It ensures that there is enough information in the data to identify the parameters in the linear part. Condition \ref{con:five} is similar to Condition 2.6 in \cite{muller2015partial} and Assumption A.5 in \cite{yu2019minimax}. This condition enables to obtain the fast rate for $\hat{\bbeta}$. Condition \ref{con:six} is a scaling condition in high dimension. 
{While the condition $(s^2\log p)/n=o(1)$ is stronger than the commonly assumed $(s\log p)/n=o(1)$  in the LASSO literature, it allows for the advantage of avoiding any assumptions about the joint distribution of $\bx$.} It is possible to only require the weaker condition $(s\log p)/n=o(1)$, if certain distributional assumption (e.g. joint sub-Gaussian) on $\bx$ is made. We leave this for a future study. 

Our main results consist of two parts, the result regarding $\hat{g}$ for the nonparametric part, and the result about $\hat{\bbeta}$ for the high-dimensional linear part. We now move on to the convergence rate result for $\hat{g}$. 

\begin{theorem}
\label{thm:g}
Assume Conditions \ref{con:one}-\ref{con:four} and \ref{con:six}. Choose $\lambda=c_1\sqrt{\frac{\log p}{n}}, \gamma =c_2(\frac{s\log p}{n}+n^{-\frac{2k+2}{2k+3}})$ with large enough constants $c_1,c_2>0$. Then, there exist constants $c_3,c_4,n_0>0$ such that any solution $\hat{g}$ in \eqref{est:p} satisfies 
\begin{gather*}
\|\hat{g}-g_0\|^2 \leq c_3\Big(\frac{s\log p}{n}+n^{-\frac{2k+2}{2k+3}}\Big), \\ \|\hat{g}-g_0\|_n^2\leq c_3\Big(\frac{s\log p}{n}+n^{-\frac{2k+2}{2k+3}}\Big),
\end{gather*} 
with probability at least $1-p^{c_4}-n^{c_4}$, as long as $n\geq n_0$. The constants $c_1,c_2,c_3,c_4,n_0$ may depend on $k, L_g, L_h, K_x, K_{\epsilon}, \ell_z, \Lambda_{min}, \Lambda_{max}, \sigma$. 
\end{theorem}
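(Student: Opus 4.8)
The plan is to start from the basic inequality implied by the optimality of $(\hat{\bbeta},\hat{g})$ in \eqref{est:p}. Writing $\hat{\bdelta}=\hat{\bbeta}-\bbeta^0$ and $\hat{v}=\hat{g}-g_0$, and substituting $\by=\bX\bbeta^0+g_0(\bz)+\beps$ with $\beps=(\varepsilon_1,\ldots,\varepsilon_n)'$, expanding the squared loss gives
\begin{align*}
\tfrac12\|\bX\hat{\bdelta}+\hat{v}(\bz)\|_n^2+\gamma\,{\rm TV}(\hat{g}^{(k)})\le \langle \beps,\bX\hat{\bdelta}+\hat{v}(\bz)\rangle_n+\lambda(\|\bbeta^0\|_1-\|\hat{\bbeta}\|_1)+\gamma\,{\rm TV}(g_0^{(k)}).
\end{align*}
The two penalty differences are treated as in the standard lasso and trend-filtering analyses: the $\ell_1$ difference, combined with a noise bound, yields a cone condition for $\hat{\bdelta}$, while the total variation difference is ultimately absorbed into the stochastic term.

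Next I would control the stochastic term by splitting it into a linear and a nonparametric piece. For the linear piece, $|\langle\beps,\bX\hat{\bdelta}\rangle_n|\le\|\tfrac1n\bX'\beps\|_\infty\|\hat{\bdelta}\|_1$, and under Conditions \ref{con:one}--\ref{con:two} a sub-Gaussian maximal inequality gives $\|\tfrac1n\bX'\beps\|_\infty\lesssim\sqrt{\log p/n}$ with high probability, which motivates the choice $\lambda\asymp\sqrt{\log p/n}$. For the nonparametric piece $\langle\beps,\hat{v}(\bz)\rangle_n$, I would decompose $\hat{v}$ into its projection onto the $(k{+}1)$-dimensional space of degree-$k$ polynomials (handled directly as a finite-dimensional Gaussian projection) plus a remainder whose $k$th derivative has total variation controlled by ${\rm TV}(\hat{g}^{(k)})+{\rm TV}(g_0^{(k)})\le {\rm TV}(\hat g^{(k)})+L_g$. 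Bounding the supremum of the empirical process over the resulting total-variation ball via the metric entropy of $\mathcal{V}_k(C)$ (which scales like $\epsilon^{-1/(k+1)}$) and Dudley's integral yields a bound of the form $\gamma\,{\rm TV}(\hat{g}^{(k)})+C\,n^{-(2k+2)/(2k+3)}$. This is exactly where the calibration $\gamma\asymp s\log p/n+n^{-(2k+2)/(2k+3)}$ enters.

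The step that makes the two parts separable is the orthogonalizing decomposition $\bX\hat{\bdelta}=\tilde{\bX}\hat{\bdelta}+\bH\hat{\bdelta}$, where $\bH$ has rows $\bh(z_i)$ and $\tilde{\bX}$ has rows $\tilde{\bx}_i=\bx_i-\bh(z_i)$. Since $\bH\hat{\bdelta}=\sum_j\hat{\delta}_j h_j(\cdot)$ is a function of $z$ whose $k$th derivative has total variation at most $L_h\|\hat{\bdelta}\|_1$ by Condition \ref{con:five}, it can be merged with $\hat{v}$ into a single function-of-$z$ component and absorbed into the same trend-filtering machinery. I would then establish a restricted-eigenvalue-type lemma: with high probability, uniformly over $\bdelta$ in the lasso cone and $\phi$ in the relevant function class,
\begin{align*}
\|\tilde{\bX}\bdelta+\phi(\bz)\|_n^2\gtrsim\|\tilde{\bX}\bdelta\|_n^2+\|\phi\|_n^2,\qquad \|\tilde{\bX}\bdelta\|_n^2\gtrsim\kappa\,\|\bdelta\|_2^2,
\end{align*}
the near-orthogonality stemming from $\mathbb{E}(\tilde{\bx}\mid z)=0$ (Condition \ref{con:four}) upgraded to a uniform empirical statement, and the compatibility bound from sub-Gaussian concentration of $\tilde{\bX}$ (Conditions \ref{con:one}, \ref{con:four}) valid under the scaling Condition \ref{con:six}.

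Feeding these bounds back into the basic inequality produces a quadratic inequality in $(\|\tilde{\bX}\hat{\bdelta}\|_n,\|\hat{v}\|_n)$; solving it and invoking the cone and compatibility conditions delivers simultaneously the oracle rate for the linear part and $\|\hat{v}\|_n^2\lesssim s\log p/n+n^{-(2k+2)/(2k+3)}$. The population bound $\|\hat{v}\|^2$ then follows from a uniform norm-equivalence over $\mathcal{V}_k(C)$, again via entropy bounds together with Condition \ref{con:three}, which guarantees the design points fill $[0,1]$ densely enough. I expect the main obstacle to be the joint empirical-process control underlying the restricted-eigenvalue lemma: disentangling $\tilde{\bX}\bdelta$ from the infinite-dimensional total-variation class uniformly, while simultaneously extracting the sharp $n^{-(2k+2)/(2k+3)}$ entropy rate, requires carefully interleaving sub-Gaussian design concentration with the metric entropy of $\mathcal{V}_k$ under the high-dimensional scaling of Condition \ref{con:six}.
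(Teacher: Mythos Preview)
Your high-level strategy---basic inequality, empirical-process control of the two noise pieces, orthogonalizing decomposition via $\tilde{\bx}=\bx-\bh(z)$, and an entropy bound over a total-variation class---uses the right ingredients, but there are two concrete gaps relative to the paper's proof.

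First, the basic inequality you write is not valid as stated: the optimization \eqref{est:p} is over $g\in\mathcal{H}_n^k$ (the span of the falling factorial basis), and $g_0\notin\mathcal{H}_n^k$ in general. The paper therefore compares not against $(\bbeta^0,g_0)$ but against $(\bbeta^0,\bar{g})$ with $\bar{g}\in\mathcal{H}_n^k$ an approximant satisfying ${\rm TV}(\bar{g}^{(k)})\le a_k\,{\rm TV}(g_0^{(k)})$ and $\|\bar{g}-g_0\|_\infty\lesssim\max_i(z_{(i+1)}-z_{(i)})$. This is why the paper introduces the event $\mathcal{T}_3$ controlling the maximum spacing of the order statistics, and why Condition~\ref{con:three} enters the argument at all.

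Second, and more substantively, the paper does \emph{not} establish your proposed restricted-eigenvalue/near-orthogonality lemma uniformly over a lasso cone and an unbounded TV class; that uniformity is exactly the obstacle you flag at the end, and your outline does not resolve it. The paper instead uses a localization device (inherited from \cite{muller2015partial}): it introduces the functional $\tau_{\delta_0,R}(f)=\frac{\lambda\|\bbeta\|_1+\gamma{\rm TV}(g^{(k)})}{20\delta_0R}+\|f\|$ and, via the convex combination $\tilde{\bbeta}=t\hat{\bbeta}+(1-t)\bbeta^0$, $\tilde{g}=t\hat{g}+(1-t)\bar{g}$ with $t=R/(R+\tau_{\delta_0,R}(\hat{f}-\bar{f}))$, shows that $\tau_{\delta_0,R}(\hat{f}-\bar{f})\le R$ on good events. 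All empirical-process and norm-comparison bounds (the events $\mathcal{T}_1,\mathcal{T}_2$) are then established only over the \emph{bounded} set $\{f:\tau_{\delta_0,R}(f)\le R\}$, which simultaneously caps $\|\bbeta\|_1$, ${\rm TV}(g^{(k)})$, and $\|g\|_\infty$ (via an interpolation inequality). This sidesteps the uniformity problem entirely. Incidentally, your route to the near-orthogonality invokes Condition~\ref{con:five} to merge $\bH\hat{\bdelta}$ into the TV class, but Theorem~\ref{thm:g} assumes only Conditions~\ref{con:one}--\ref{con:four} and \ref{con:six}; the paper handles $\bh(z)'\bbeta$ through the population orthogonal decomposition $\|\bx'\bbeta+g(z)\|^2=\|\tilde{\bx}'\bbeta\|^2+\|\bh(z)'\bbeta+g(z)\|^2$ and the bound $\|\bh(z)'\bbeta\|\le\Lambda_{max}^{1/2}\|\bbeta\|_2$ from Condition~\ref{con:four} alone.
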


\textcolor{black}{The expressions for the constants $\{c_i\}_{i=1}^4$, though potentially involved and not sharp, can be derived by keeping track of the explicit forms of constants $\{D_i\}$ in the proof of Theorem \ref{thm:g} (see Appendix \ref{prove:both} for the details). Given that our focus is on the convergence rate in terms of the parameters $\{n,s,p,k\}$, we do not define $\{c_i\}_{i=1}^4$ explicitly in the theorem. Similar treatments appear in the partial linar model and trend filtering literature \citep{yu2019minimax,tibshirani2014adaptive,sadhanala2019additive}. Viewing $\{c_i\}_{i=1}^4$ as fixed,} Theorem \ref{thm:g} shows that the integrated squared error and the input-averaged squared error have the same convergence rate, and the rate is determined by the maximum between a sparse estimation rate $(s\log p)/n$ and a nonparametric rate $n^{-(2k+2)/(2k+3)}$. When $g_0$ is sufficiently smooth, belonging to a $k$-order Sobolev or Holder class, a similar rate-switching phenomenon (switching between $(s\log p)/n$ and $n^{-(2k)/(2k+1)}$) has been revealed for partial linear smoothing spline \eqref{plss:def} \citep{muller2015partial}, and the rate is proved to be (nearly) minimax optimal \citep{yu2019minimax}. Given that the function class $\mathcal{V}_k(L_g)$ from \eqref{tvf:def} considered in Theorem \ref{thm:g} is larger than a $(k+1)$-order Sobolev class, the minimax lower bound derived for $(k+1)$-order Sobolev classes in \cite{yu2019minimax}, i.e. $(s\log(p/s))/n+n^{-(2k+2)/(2k+3)}$, implies that the rate obtained by our partial linear trend filtering \eqref{est:p} is (nearly) minimax optimal. In particular, when $(s\log p)/{n} = o(n^{-(2k+2)/(2k+3)})$, our method $\hat{g}$ achieves the optimal nonparametric rate $n^{-(2k+2)/(2k+3)}$ that is not attainable by partial linear smoothing spline (see related discussions in Section \ref{sec:intro}). We proceed to the convergence rate result for $\hat{\bbeta}$.
\begin{theorem} \label{thm:beta}
Assume Conditions \ref{con:one}-\ref{con:six}, with the same choice of $\lambda, \gamma$ in Theorem \ref{thm:g}, any solution $\hat{\bbeta}$ in \eqref{est:p} satisfies 
\begin{gather*}
\|\hat{\bbeta}-\bbeta^0\|_2^2 \leq c_3 \frac{s\log p}{n}, \\ \|\bx'(\hat{\bbeta}-\bbeta^0)\|^2\leq c_3 \frac{s\log p}{n}, \\ \|\bX(\hat{\bbeta}-\bbeta^0)\|_n^2 \leq c_3 \frac{s\log p}{n},
\end{gather*}
with probability at least $1-p^{c_4}-n^{c_4}$, as long as $n\geq n_0$. \textcolor{black}{The constants $c_1,c_2,c_3,c_4,n_0$ are identical to those in Theorem \ref{thm:g}.}
\end{theorem}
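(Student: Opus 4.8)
The plan is to follow the standard ``partialling-out'' strategy for partial linear models, but with the nonparametric error controlled through Theorem \ref{thm:g} rather than through a classical smoothness argument. Write $\bdelta=\hat{\bbeta}-\bbeta^0$, let $\bu=\bX\bdelta$ and $\bv=\hat{g}(\bz)-g_0(\bz)$ (as vectors in $\mathbb{R}^n$), and decompose the design as $\bX=\tilde{\bX}+\bH$, where $\tilde{\bX}$ has rows $\tilde{\bx}_i=\bx_i-\bh(z_i)$ and $\bH$ has rows $\bh(z_i)$. First I would record the basic inequality coming from the optimality of $(\hat{\bbeta},\hat{g})$ in \eqref{est:p}. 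Using $\by=\bX\bbeta^0+g_0(\bz)+\beps$ and the feasibility of $(\bbeta^0,\bar{g})$, where $\bar{g}\in\mathcal{H}^k_n$ interpolates $g_0$ at the design points with ${\rm TV}(\bar{g}^{(k)})$ comparable to ${\rm TV}(g_0^{(k)})$ (so that $\bar{g}(\bz)=g_0(\bz)$ and $\bv$ is unchanged), this gives
\begin{align*}
\tfrac12\|\bu+\bv\|_n^2\leq \langle\beps,\bu+\bv\rangle_n+\lambda(\|\bbeta^0\|_1-\|\hat{\bbeta}\|_1)+\gamma({\rm TV}(\bar{g}^{(k)})-{\rm TV}(\hat{g}^{(k)})).
\end{align*}

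Next I would control the stochastic terms. For the linear part, sub-Gaussianity of $\bx$ and $\varepsilon$ (Conditions \ref{con:one}--\ref{con:two}) give $\frac1n\|\bX'\beps\|_\infty\lesssim\sqrt{\log p/n}$ with high probability, so $\langle\beps,\bu\rangle_n\leq(\lambda/2)\|\bdelta\|_1$ once $\lambda=c_1\sqrt{\log p/n}$ is large enough; combined with the decomposition of $\|\hat{\bbeta}\|_1$ over $S$ and $S^c$, this yields the usual cone bound $\|\bdelta_{S^c}\|_1\lesssim\|\bdelta_S\|_1$ up to lower-order slack. For the nonparametric part, $\langle\beps,\bv\rangle_n$ together with the total-variation terms are absorbed exactly as in the proof of Theorem \ref{thm:g}, and Theorem \ref{thm:g} itself supplies $\|\bv\|_n^2\lesssim s\log p/n+n^{-(2k+2)/(2k+3)}$, which I would carry along as a known residual.

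The heart of the argument is to convert $\|\bu+\bv\|_n^2$ into a lower bound on $\|\bdelta\|_2^2$ that is \emph{free of the nonparametric rate}. Decompose $\bu+\bv=\tilde{\bX}\bdelta+(\bH\bdelta+\bv)$ and use the population orthogonality $\mathbb{E}[\tilde{\bx}\,f(z)]=\mathbb{E}[\mathbb{E}(\tilde{\bx}\mid z)f(z)]=0$, valid for every function $f$ of $z$, so that $\tilde{\bX}\bdelta$ is uncorrelated with every function of $z$, in particular with $\bH\bdelta=(\sum_j\delta_jh_j)(\bz)$ and, conditionally, with $\bv$. The delicate point is that a crude Cauchy--Schwarz bound on the empirical cross term $\langle\tilde{\bX}\bdelta,\bH\bdelta+\bv\rangle_n$ would inject the factor $\|\bv\|_n\asymp(s\log p/n+n^{-(2k+2)/(2k+3)})^{1/2}$ and thereby destroy the oracle rate. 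Instead I would bound it as a \emph{multiplier empirical process}: because each coordinate of $\tilde{\bx}_i$ is conditionally mean-zero given $z_i$, the maximal quantity $\max_j|\langle\tilde{\bX}_j,f(\bz)\rangle_n|$ (with $\tilde{\bX}_j$ the $j$th column) concentrates at rate $\sqrt{\log p/n}$ uniformly over $f$ in the relevant total-variation ball, whose radius is controlled by Condition \ref{con:five} and Theorem \ref{thm:g}. This yields $|\langle\tilde{\bX}\bdelta,\bH\bdelta+\bv\rangle_n|\lesssim\|\bdelta\|_1\sqrt{\log p/n}$, which on the cone is $\lesssim\sqrt{s\log p/n}\,\|\bdelta\|_2$, i.e.\ of the same order as the lasso penalty level and hence harmless.

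In parallel I would establish a restricted-eigenvalue statement $\|\tilde{\bX}\bdelta\|_n^2\gtrsim\Lambda_{min}\|\bdelta\|_2^2$ on the cone, using Condition \ref{con:four} for the population lower bound $\lambda_{\min}(\mathbb{E}\tilde{\bx}\tilde{\bx}')\geq\Lambda_{min}$ together with sub-Gaussian concentration of the empirical Gram matrix of $\tilde{\bX}$ restricted to sparse directions. Dispensing with any joint distributional assumption on $\bx$ forces this deviation to be controlled entrywise and then summed over the cone, which is exactly why the stronger scaling $s^2\log p/n=o(1)$ in Condition \ref{con:six} is needed. Combining the restricted eigenvalue with the basic inequality and the cross-term bound collapses everything to $\Lambda_{min}\|\bdelta\|_2^2\lesssim\sqrt{s\log p/n}\,\|\bdelta\|_2$, and solving this quadratic gives $\|\bdelta\|_2^2\lesssim s\log p/n$. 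The remaining two bounds follow quickly: $\|\bx'\bdelta\|^2=\bdelta'\mathbb{E}(\bx\bx')\bdelta\leq\lambda_{\max}(\mathbb{E}\bx\bx')\|\bdelta\|_2^2$, with $\lambda_{\max}(\mathbb{E}\bx\bx')<\infty$ by Condition \ref{con:four} and sub-Gaussianity, and $\|\bX\bdelta\|_n^2$ is handled by the same Gram-matrix concentration used for the restricted eigenvalue; both give the rate $s\log p/n$. I expect the uniform control of the multiplier cross term $\langle\tilde{\bX}\bdelta,\bv\rangle_n$ to be the main obstacle, since $\bv=\hat{g}(\bz)-g_0(\bz)$ is data-dependent and only known to lie in a random total-variation ball, so decoupling it from $\tilde{\bX}\bdelta$ requires a chaining/peeling argument over both the sparse cone and the total-variation ball, and it is precisely here that avoiding a joint sub-Gaussian assumption on $\bx$ forces the scaling in Condition \ref{con:six}.
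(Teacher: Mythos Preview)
Your plan has a structural gap that prevents reaching the oracle rate. By comparing $(\hat{\bbeta},\hat{g})$ with $(\bbeta^0,\bar{g})$ in the basic inequality, the right-hand side inevitably contains the nonparametric terms $\langle\beps,\bv\rangle_n$ and $\gamma({\rm TV}(\bar{g}^{(k)})-{\rm TV}(\hat{g}^{(k)}))$, which by the analysis of Theorem~\ref{thm:g} are only of order $R^2\asymp \frac{s\log p}{n}+n^{-\frac{2k+2}{2k+3}}$. This is an \emph{additive} residual, not one multiplicative in $\|\bdelta\|_1$ or $\|\bdelta\|_2$. After the orthogonal decomposition and the restricted-eigenvalue step you get $\Lambda_{min}\|\bdelta\|_2^2\lesssim \lambda\sqrt{s}\,\|\bdelta\|_2+R^2$, whose solution is $\|\bdelta\|_2^2\lesssim s\lambda^2+R^2$. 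When $n^{-\frac{2k+2}{2k+3}}\gg \frac{s\log p}{n}$, the $R^2$ term dominates and you only recover the slow rate of Theorem~\ref{thm:g}, not the oracle rate $\frac{s\log p}{n}$. Your sentence ``collapses everything to $\Lambda_{min}\|\bdelta\|_2^2\lesssim\sqrt{s\log p/n}\,\|\bdelta\|_2$'' silently drops this residual.

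The paper avoids this leak by a different comparison point in the basic inequality: instead of $(\bbeta^0,\bar{g})$ it uses $(\bbeta^0,\check{g})$ with $\check{g}(z)=\hat{g}(z)+\bar{\bh}(z)'(\hat{\bbeta}-\bbeta^0)$, where $\bar{h}_j\in\mathcal{H}_n^k$ approximates $h_j$ (this is where Condition~\ref{con:five} enters). Because $\hat{g}$ appears on both sides, its contribution cancels and the inequality isolates $\|\tilde{\bX}\bdelta\|_n^2$ directly; crucially, every remaining term on the right---the approximation error $\|\bh-\bar{\bh}\|_{\max}^2\|\bdelta\|_1^2$, the penalty gap $\gamma({\rm TV}(\check{g}^{(k)})-{\rm TV}(\hat{g}^{(k)}))\leq \gamma a_k L_h\|\bdelta\|_1$, and the cross term $\|\tfrac{1}{n}(\bDelta(\bZ)+\tilde{\bX})'(g_0-\hat{g}-\bh\bdelta+\beps)\|_\infty\|\bdelta\|_1$---is \emph{multiplicative} in $\|\bdelta\|_1$ and controlled at level $\lambda$, not $R^2$. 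The cross term here needs uniform empirical-process control over the total-variation ball containing $\hat{g}-\bar{g}$ (your instinct about this was right), and the paper packages this into an event $\mathcal{S}_1$. A secondary issue: you bound $\|\bx'\bdelta\|^2\leq \lambda_{\max}(\mathbb{E}\bx\bx')\|\bdelta\|_2^2$, but the conditions do not bound $\lambda_{\max}(\mathbb{E}\bx\bx')$ (Condition~\ref{con:one} is only marginal, and Condition~\ref{con:four} only bounds $\lambda_{\max}(\mathbb{E}\bh\bh')$); the paper instead uses the orthogonal decomposition $\|\bx'\bdelta\|^2=\|\tilde{\bx}'\bdelta\|^2+\|\bh'\bdelta\|^2$.
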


\textcolor{black}{As discussed after Theorem 2.1, the constants $\{c_i\}_{i=1}^4$ can be explicitly defined through careful bookkeeping in the proof. We do not present the explicit expressions in Theorem \ref{thm:beta}, as we focus on the convergence rate with respect to $\{n,s,p\}$. Treating $\{c_i\}_{i=1}^4$ as fixed,} Theorem \ref{thm:beta} demonstrates that the estimation error, out-of-sample prediction error, and in-sample prediction error, all have the same convergence rate $(s\log p)/n$. It is well known that this rate is the typical rate that the LASSO achieves in standard high-dimensional sparse linear regressions \citep{tsybakov2009simultaneous, ye2010rate, raskutti2011minimax, verzelen2012minimax}, \textcolor{black}{though the constant $c_3$ depends on additional parameters such as $\ell_z, \Lambda_{\min}, \Lambda_{\max}, L_h$.} Therefore, we can conclude that our estimator $\hat{\bbeta}$ attains the oracle rate \textcolor{black}{(up to a constant factor)} as if the true function $g_0$ were known. \cite{muller2015partial, yu2019minimax} showed that the partial linear smoothing spline can achieve the same rate, however, only when $g_0$ lies in Sobolev or Holder classes. In contrast, our method obtains the rate when $g_0 $ belongs to a larger class $\mathcal{V}_k(L_g)$ that covers more heterogeneously smooth functions. All the proofs related to Theorem \ref{thm:g} and Theorem \ref{thm:beta} are provided in Appendix \ref{proof:thm1} and Appendix \ref{proof:thm2}.

{\color{black}
\subsection{Computational details}
\label{com:detail}

As described in Section \ref{model:estimation}, to compute $(\hat{\bbeta}, \hat{g})$ in \eqref{est:p}, we first solve \eqref{est:d} to obtain $(\hat{\bbeta}, \hat{\btheta})$. Then, $\hat{g}(t)=\sum_{\ell=1}^n\hat{\alpha}_{\ell}q_{\ell}(t)$, where $\hat{\balpha}=\bQ^{-1}\hat{\btheta}, \bQ=(q_{\ell}(z_k))_{1\leq \ell, k\leq n}$, and $q_{\ell}$'s are the falling factorial basis defined in \eqref{ffb:one}-\eqref{ffb:two}. We use a Block Coordinate Descent (BCD) algorithm for solving the optimization \eqref{est:d}. The algorithm iterates over two blocks, $\bbeta$ and $\btheta$, by solving a standard LASSO problem and univariate trend filtering respectively. The detailed steps of the algorithm are outlined in Algorithm \ref{alg:lm}.  

\RestyleAlgo{ruled}
\SetKwComment{Comment}{/* }{ */}
\SetKw{TuningParam}{Fixed (tuning) Parameters:}
\SetKw{ConvErr}{Predefined Error Thereshold:}
\begin{algorithm}[!t]
\caption{A BCD algorithm for high-dimensional partial linear trend filtering}\label{alg:lm}
\small
\KwData{$\{y_i, \bx_i, z_i\},~i=1,\dots,n$}
\TuningParam{$\lambda, \gamma$} \\
\ConvErr{$\epsilon$}
\begin{enumerate}
    \item Set $t = 0$ and initialization $\btheta^{(0)}$
    \item For $(t+1)$-th iteration, where $t = 0,1,2,\dots$:
    \begin{enumerate}
        \item \textbf{Block 1}: Let $y_{i}^{(t)*} = y_i - \theta^{(t)}_i$, and update $\bbeta^{(t)}$ by fitting the LASSO:
        \begin{align*}
            \bbeta^{(t+1)} = \argmin_{\bbeta} \frac{1}{2}\| \by^{(t)*} - \bX \bbeta \|^2_n + \lambda \| \bbeta \|_1
        \end{align*}
        \item \textbf{Block 2}: Let $y_{i}^{(t)**} = y_i - \bx_i'\bbeta^{(t+1)}$, and update $\btheta^{(t)}$ by fitting the univariate trend filtering:
            \begin{align*}
                \btheta^{(t+1)} = \argmin_{\btheta} \frac{1}{2}\| \by^{(t)**} - \btheta \|^2_n + \gamma \| D^{(\bz, k+1)}\btheta \|_1
            \end{align*}
        \item If $\| \bX\bbeta^{(t+1)} + \btheta^{(t+1)} -  \bX\bbeta^{(t)} - \btheta^{(t)}\|_{n}^{2} < \epsilon$, then stop the iteration. If not, continue the iteration until it reaches the predefined maximum iteration number
    \end{enumerate}
    \item Return ($\bbeta^{(t+1)}, \btheta^{(t+1)})$ at convergence 
\end{enumerate}
\end{algorithm}

The objective function in \eqref{est:d} can be written in the following form:
\begin{align}
     \frac{1}{2}\|\by-\bX\bbeta-\btheta\|_n^2+\lambda\|\bbeta\|_1+\gamma \| D^{ (\bz, k+1)} \btheta \|_1: = f_0(\bbeta, \btheta) + f_1(\bbeta) + f_2(\btheta) = f(\bbeta, \btheta),\label{resp:eq1}
\end{align}
where $f_0$ is convex and differentiable, and $f_1,f_2$ are convex but nondifferentiable. It is direct to verify that $f$ is continuous on the compact set $\{(\bbeta,\btheta): f(\bbeta,\btheta)\leq f(\bbeta^{(1)},\btheta^{(1)})\}$, and $f$ attains its minimum, denoted by $f(\bbeta^*, \btheta^*)=\min_{\bbeta,\btheta}f(\bbeta, \btheta)$. This fact combined with Theorem 4.1 and Lemma 3.1 in \cite{tseng2001convergence} shows that there exists a subsequence $\{(\bbeta^{(k_n)}, \btheta^{(k_n)})\}$ converging to a stationary point of $f$. Due to the convexity of $f$, it further implies $(\bbeta^{(k_n)}, \btheta^{(k_n)})\rightarrow (\bbeta^*,\btheta^*)$, as $n\rightarrow \infty$. Given that $f$ is continuous and $f(\bbeta^{(t+1)}, \btheta^{(t+1)})\leq f(\bbeta^{(t)}, \btheta^{(t)}), \forall t=1,2,\ldots$, we can conclude that $f(\bbeta^{(t+1)}, \btheta^{(t+1)})$ converges to the global minimum $f(\bbeta^*, \btheta^*)$ as $t\rightarrow \infty$.

Our algorithm is implemented in R, utilizing the \texttt{glmnet} package for computing the LASSO in the first block update and the \texttt{glmgen} package for univariate trend filtering in the second block. In practice, we compute solutions over a two-dimensional grid of tuning parameters $(\lambda, \gamma)$, and use model selection criteria such as cross-validation to select the tuning parameter. This makes our method more computationally demanding than single-parameter cases because the tuning grid scales quadratically with the number of parameter values -- an issue extends to other double penalization approaches in the context of partial linear models. To address this issue, we employ efficient warm-start tricks \citep{glmnet2010, ramdas2016fast}, using warm-starts across adjacent $(\lambda,\gamma)$ values, to speed up the computations of solutions on the two-dimensional grid. Consequently, the computation time remains manageable for both our simulation settings and real data analysis. A similar strategy is adopted to compute partial linear smoothing splines. The R function, \texttt{stats::smooth.spline}, is used to calculate the univariate smoothing spline for the second block update. The implemented BCD algorithms as an R package, \texttt{plmR}, for PLTF and PLSS are publicly available at \url{https://github.com/SangkyuStat/plmR}.}

	\section{Simulations} \label{sec:simulation}

Through empirical experiments, we evaluate the performance of partial linear trend filtering (PLTF) introduced in \eqref{est:p}, in comparison to partial linear smoothing splines (PLSS) defined in \eqref{plss:def} \citep{muller2015partial, yu2019minimax}. 

\subsection{Simulation settings and results}\label{sec:simul_res}

We generate the $p$-dimensional covariates $\bx$ and the univariate covariate $z$ in the following way: we first sample $\tilde{\bx} = (\tilde{x}_1, \dots, \tilde{x}_{p+1})$ from $\mathcal{N}(\mathbf{0}, \Sigma)$, where $\Sigma = (\sigma_{jk})$ with $\sigma_{jk} = 0.5^{|j-k|}$ and $j,k=1,\dots,p+1$; then we set $z = \Phi (\tilde{x}_{25})$ with $\Phi$ being the standard normal's CDF, $x_j = \tilde{x}_j$ for $j=1,\dots,24$, and $x_j = \tilde{x}_{j+1}$ for $j = 25,\dots,p$. We consider three different partial linear models as follows: for $i=1,2,\ldots, n$,
\begin{align*}
    &\text{Model 1 (Smooth function model):}~\\
    &~~~~~~~~~~~~~~~~~~~~~~~~~~~~~~~~~~y_i = x_{i6}\beta_1 + x_{i12}\beta_2 + x_{i15}\beta_3 + x_{i20}\beta_4 + \sin (2\pi z_{i}) + \epsilon_i,\\
    &\text{Model 2 (Heterogeneous smooth function model):}~\\
    &~~~~~~~~~~~~~~~~~~~~~~~~~~~~~~~~~~y_i = x_{i6}\beta_1 + x_{i12}\beta_2 + x_{i15}\beta_3 + x_{i20}\beta_4 + e^{3 z_{i}}\sin(6 \pi z_{i})/7 + \epsilon_i,\\
    &\text{Model 3 (Doppler-type function model):}~\\
    &~~~~~~~~~~~~~~~~~~~~~~~~~~~~~~~~~~y_i = x_{i6}\beta_1 + x_{i12}\beta_2 + x_{i15}\beta_3 + x_{i20}\beta_4 + \sin(4/z_{i}) + \epsilon_i,
\end{align*}
where $\epsilon_i \sim \mathcal{N}(0,\sigma_\epsilon^2)$. The $g_0$ function displays varying levels of heterogeneous smoothness in the three models. The actual forms of these functions are shown in Figure \ref{fig:gfunction}. The values for $\beta_j,~j=1,\dots,4$ are (0.5, 1, 1, 1.5). Various values for $\sigma_\epsilon^2$ are used to vary the signal-to-noise ratio for the models. Similar to \cite{muller2015partial, sadhanala2019additive}, we define the total signal-to-noise ratio (tSNR) as
\begin{align*}
    \text{tSNR} = \sqrt{ \frac{ \mathbb{E}(\bx' \bbeta^0 + g_0(z))^2}{\sigma^2_\epsilon} }.
\end{align*}
\begin{figure}[!t]
    \centering
    \includegraphics[width=0.9\textwidth, height=0.23\textheight]{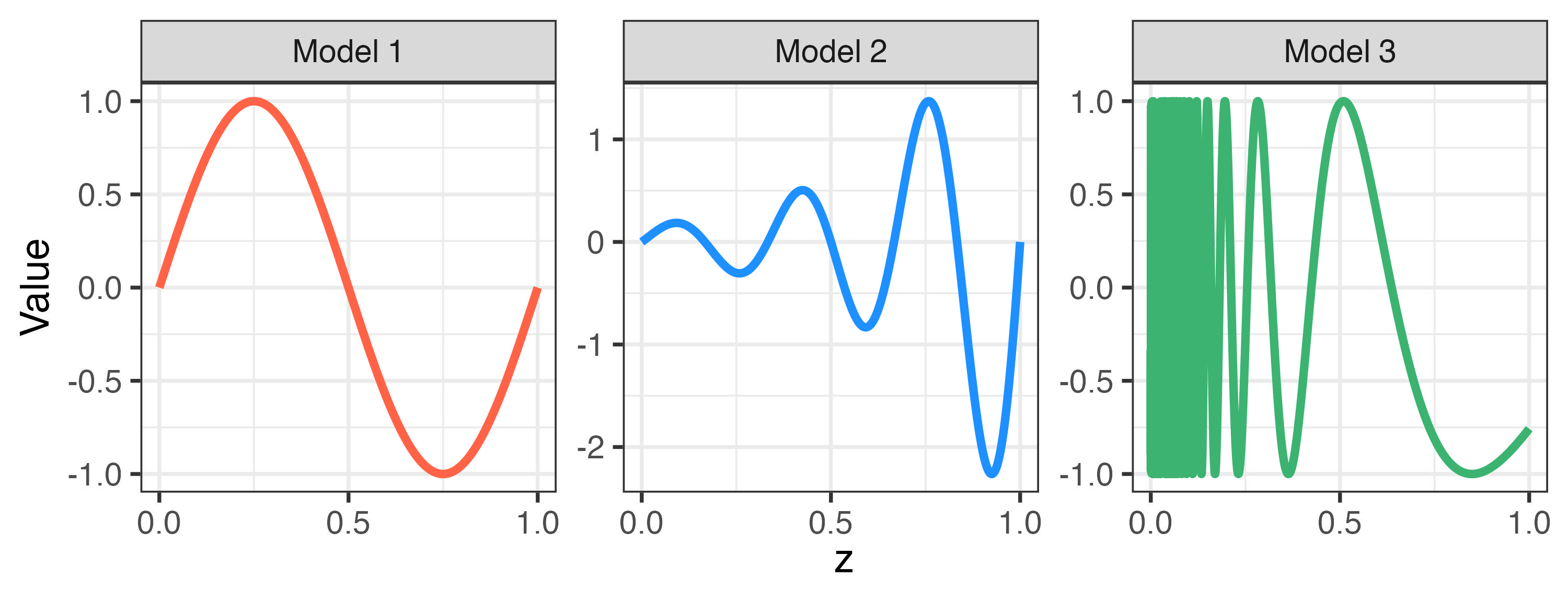}
    \caption{The function $g_0(z)$ in Models 1 -- 3. The function becomes increasingly locally heterogeneous from Model 1 (left) to Model 3 (right).}
    \label{fig:gfunction}
\end{figure}
We vary the tSNR from 4 to 16 on a logarithmic scale and then calculate the error metrics for each method. Specifically, we consider three different error metrics:
\begin{enumerate}
    \item $\| \bX \hat{\bbeta} + \hat{g}(\bz) - (\bX \bbeta^0 + g_0(\bz)) \|_n^2$ : mean squared error (MSE) for $\bX\hat{\bbeta}+\hat{g}(\bz)$.
    \item $\|\hat{\bbeta} - \bbeta^0 \|_2^2$ : $l_2$-norm squared error for $\hat{\bbeta}$.
    \item $\| \hat{g}(\bz) - g_0(\bz) \|_n^2$ : mean squared error for $\hat{g}$.
\end{enumerate}
The metrics are computed over 150 repetitions of randomly generated datasets for each tSNR value, and the medians of each metric are selected as the final results.
We consider $p$ to be 100 and 1000 for low and high-dimensional cases, respectively, with $n$ fixed at 500. We compare partial linear cubic smoothing spline ($k=3$ in \eqref{plss:def}) and second degree partial linear trend filtering ($k=2$ in \eqref{est:p}) so that both methods regularize the second derivative of $g$. A similar comparison has been performed in \cite{sadhanala2019additive} under additive models. To ensure fair comparisons, we present results using both optimally tuned parameters and cross-validation (CV)-tuned parameters for $(\lambda, \gamma)$ in the two methods.

\begin{figure}[!t]
\centering
\includegraphics[width=0.78\textwidth, height=0.7\textheight]{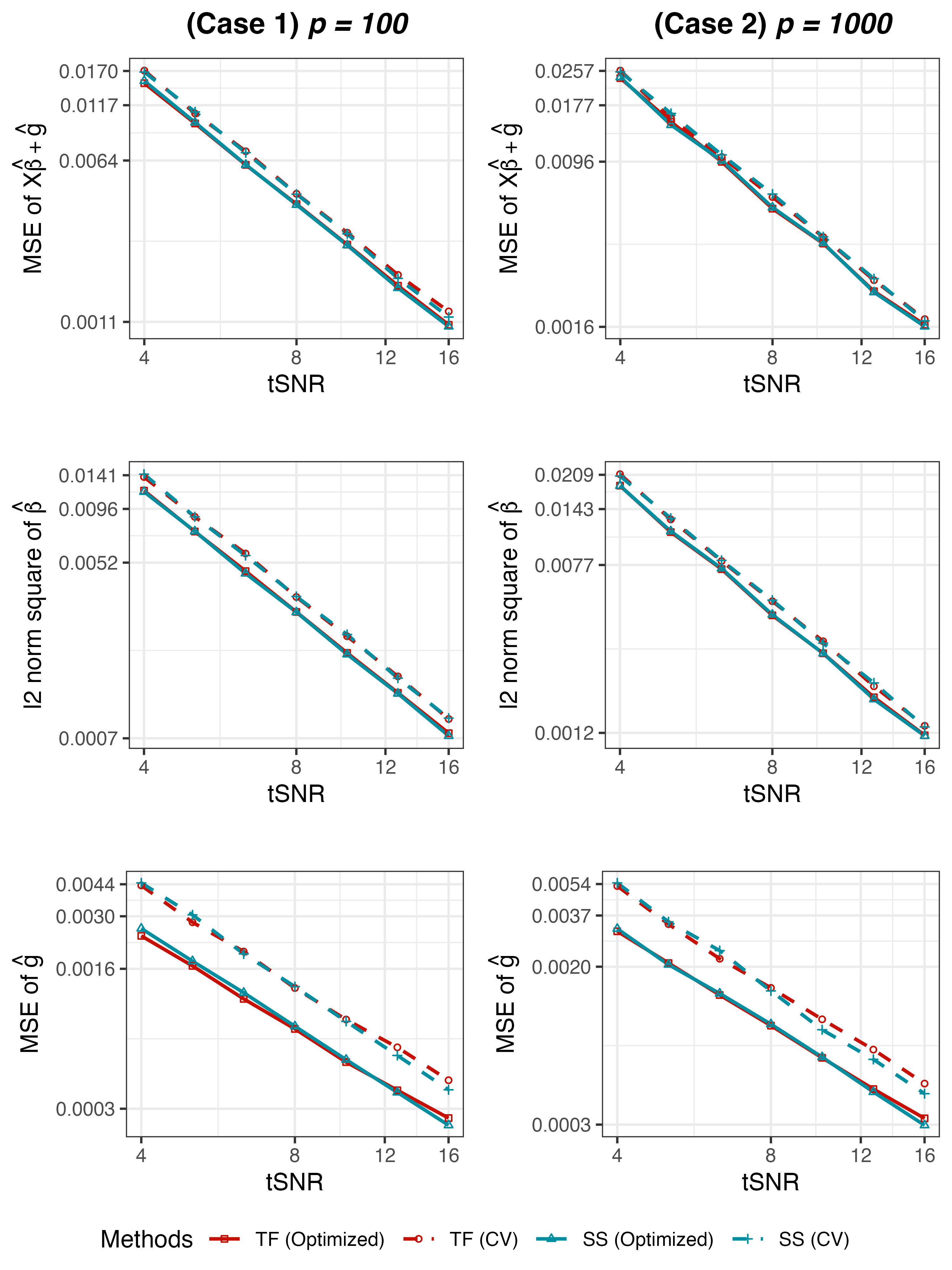}
\caption{PLTF v.s. PLSS under Model 1, with tSNR ranging from 4 to 16. TF (Optimized or CV) denotes PLTF with (optimally or CV) tuned parameters. SS (Optimized or CV) denotes PLSS with (optimally or CV) tuned parameters.
}
\label{fig:model1}
\end{figure}

\begin{figure}[!t]
\centering
\includegraphics[width=0.78\textwidth, height=0.7\textheight]{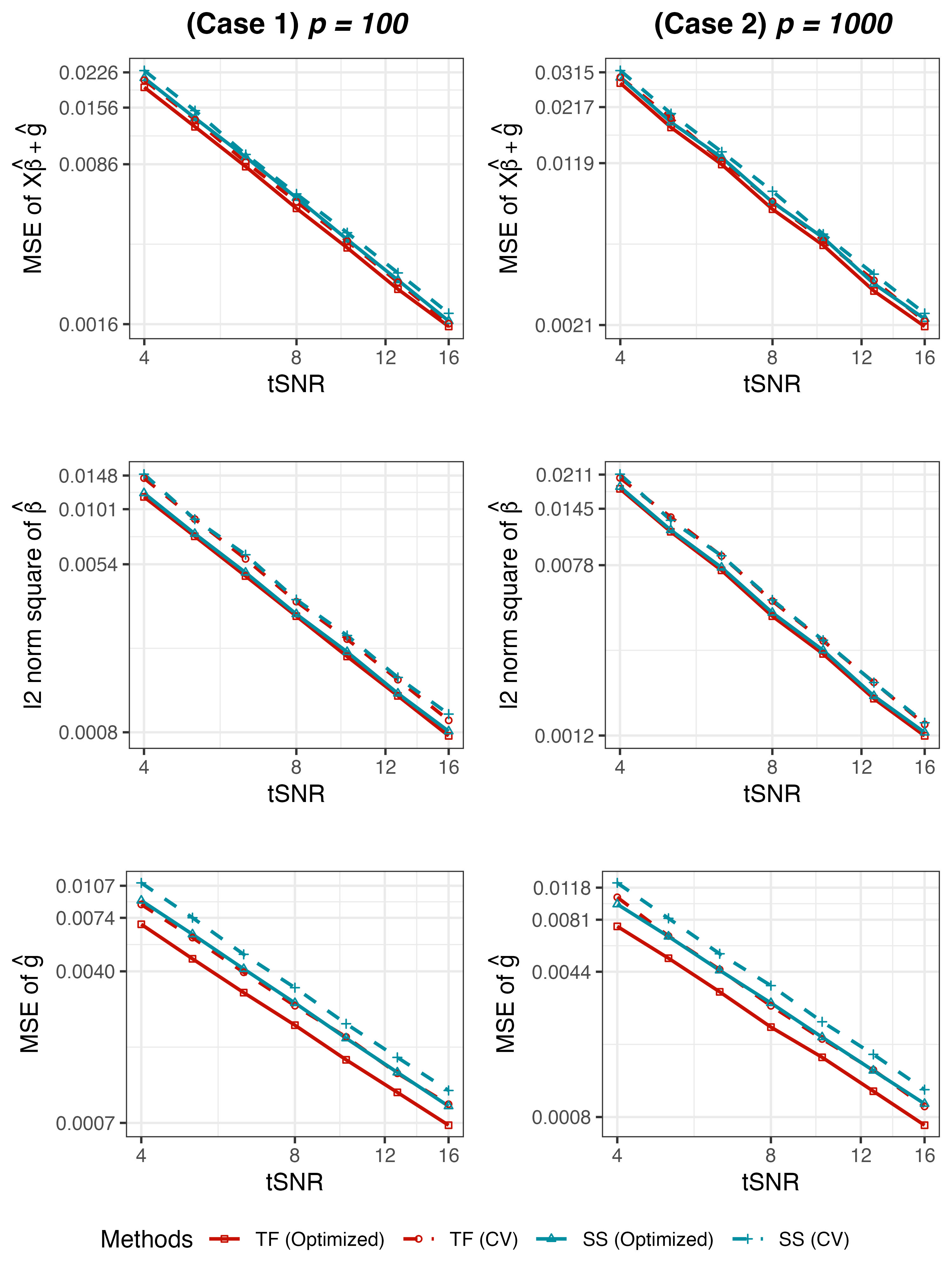}
\caption{PLTF v.s. PLSS under Model 2, with tSNR ranging from 4 to 16. TF (Optimized or CV) denotes PLTF with (optimally or CV) tuned parameters. SS (Optimized or CV) denotes PLSS with (optimally or CV) tuned parameters.
}
\label{fig:model2}
\end{figure}

\begin{figure}[!t]
\centering
\includegraphics[width=0.78\textwidth, height=0.7\textheight]{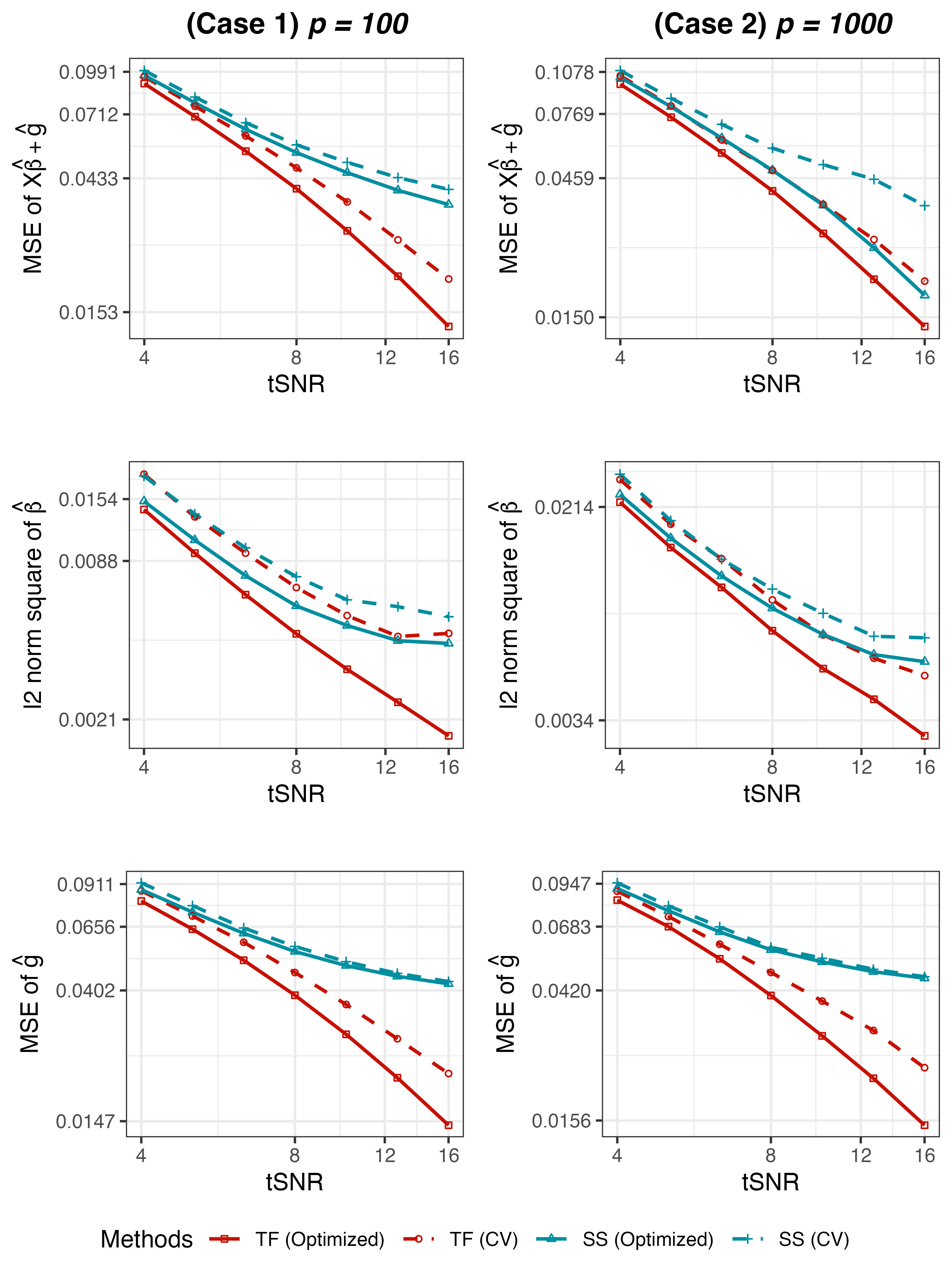}
\caption{PLTF v.s. PLSS under Model 3, with tSNR ranging from 4 to 16. TF (Optimized or CV) denotes PLTF with (optimally or CV) tuned parameters. SS (Optimized or CV) denotes PLSS with (optimally or CV) tuned parameters.
}
\label{fig:model3}
\end{figure}

The simulation results are displayed in Figures \ref{fig:model1}-\ref{fig:model3}. 
Figure \ref{fig:model1} demonstrates that when the function $g_0$ has homogeneous smoothness, the performance of the PLTF method is comparable, almost the same, to that of the PLSS method, with no significant differences observed. However, as shown in Figure \ref{fig:model2}, when the function exhibits {varying} smoothness, the difference between the two methods becomes more pronounced.
When the degree of heterogeneous smoothness of $g_0$ increases, as seen in Figure \ref{fig:model3}, all three errors of PLTF are significantly lower than those of PLSS in both low and high-dimensional cases. This disparity becomes more apparent as the tSNR increases. These simulation results suggest that for functions with homogeneous smoothness, PLTF and PLSS are competitive. However, as the function becomes more heterogeneously smooth, PLTF can significantly outperform PLSS in both low and high-dimensional scenarios, particularly when the tSNR is high. The empirical findings are aligned with those for comparing trend filtering and smoothing splines in the context of univariate nonparametric regressions and additive regression models \citep{tibshirani2014adaptive, sadhanala2019additive}.

{\color{black}
We also conduct similar simulations for $n=100$, to evaluate the performance of PLTF and PLSS under smaller sample size. The results show similar patterns to those observed in Figures \ref{fig:model1}-\ref{fig:model3}. We defer the details to Appendix \ref{supp:n_100}.
}
	
\section{Applications to the IDATA Study} 
\label{real:data}




The Interactive Diet and Activity Tracking in AARP (IDATA) Study aimed to improve dietary intake assessments and their links to health outcomes \citep{subar2020performance}. The amount of the intake of ultra-processed foods (UPF) was estimated using the NOVA system, which is developed to measure the intake of UPF based on the participants' diet \citep{steele2023identifying}. This approach and works already have been established between UPF and biomarkers \citep{abar2025identification}. More detailed data collection protocols, information and procedures are available in the Appendix \ref{app:IDATA}.

{To investigate the relationship between metabolomic profiles and UPF intake, we define the response variable $\by$ as UPF intake measured in grams and the covariates $\bX$ as metabolite measurements. We analyze 952 metabolites ($p = 952$) from the serum dataset and 1,045 metabolites ($p = 1,045$) from the urine dataset.}
{UPF intake is well established to differ between females and males \citep{juul2018ultra, sung2021consumption}. Accordingly, we classify participants into two groups: females ($n = 365$) and males ($n = 353$). Previous studies have also shown that certain obesity-related body measurements,  such as BMI, waist circumference, and hip circumference \citep{canhada2020ultra}, as well as age \citep{fu2024association}, exhibit significant nonlinear relationships with UPF intake.
Therefore,  we consider a univariate $z$ in \eqref{est:d}, representing as age, BMI, waist circumference, or hip circumference. Previously, a LASSO-based approach was used for feature selection in this dataset, but it did not account for potential nonlinear relationships \citep{abar2025identification}. This motivated our use of a partial linear modeling framework, which provides added flexibility.
We then apply the proposed method, PLTF, to identify significant metabolites associated with UPF and construct a prediction model to evaluate its predictive accuracy. 

To evaluate predictive accuracy, we compare the proposed PLTF method with PLSS and LASSO. Since our simulation settings reflect the structure of the real data, we maintain consistency by setting $k=2$ for PLTF and $k=3$ for PLSS, regularizing the second derivative of $g$.} 
{To assess prediction performance, the dataset is randomly split into 90\% for training and 10\% for testing.} 
Models are trained on the training dataset using 10-fold cross-validation (CV), and the mean squared error (MSE) is computed on the testing set. {This process is repeated 500 times to achieve standard errors within 1\% to 2\% of the MSE.} 
The prediction errors are reported in Table \ref{real:tab:prediction}. The prediction results indicate that, in the majority of cases, PLTF outperformed the other models (12 out of 16 cases). Notably, for the serum datasets, PLTF significantly outperformed the other methods, particularly in the female subgroup. In the urine datasets, PLTF generally demonstrated better performance. Even in cases where other models marginally outperformed PLTF, the difference between PLTF and the best-performing model was not statistically significant.

\begin{table}[!htbp] \centering 
\scriptsize
\caption{Prediction MSE results for different variables ($z$) with 500 repetitions for different types of datasets. Numbers in parentheses are the corresponding standard errors. The bolded numbers indicate the smallest error for each case, and it is underlined if it is at least the 1 standard error far from any other models. When it is farther than the 2 standard error, then it is double-underlined.} \label{real:tab:prediction} 
\begin{tabular}{@{\extracolsep{0pt}} ccccccc} 
\\[-1.8ex]\hline 
\hline 
& & & Age & BMI & Hip & Waist \\ 
\hline 
\multirow{6}{*}{Serum} & \multirow{3}{*}{Female} & PLTF & \underline{\textbf{175.390}} (2.289) & \underline{\underline{\textbf{174.789}}} (2.328) & \underline{\textbf{176.591}} (2.312) & \underline{\textbf{177.725}} (2.360) \\  
&&PLSS & 178.696 (2.383) & 180.286 (2.420) & 180.230 (2.390) & 180.331 (2.440) \\ 
&&LASSO & 188.370 (2.516) & 190.647 (2.555) & 190.461 (2.545) & 190.115 (2.543) \\ 
\cline{2-7} 
&\multirow{3}{*}{Male}& PLTF & \textbf{152.774} (2.012) & \underline{\textbf{150.925}} (1.994) & 153.823 (1.991) & \textbf{154.350} (2.055) \\ 
&&PLSS & 152.904 (2.019) & 154.373 (2.019) & \textbf{153.470} (1.996) & 155.080 (2.070) \\ 
&&LASSO & 155.738 (2.036) & 156.044 (2.045) & 156.342 (2.040) & 156.619 (2.062) \\ 
\cline{1-7}
 \multirow{6}{*}{Urine} & \multirow{3}{*}{Female} & PLTF & \textbf{171.911} (2.361) & \textbf{169.977} (2.371) & \textbf{170.057} (2.350) & 172.937 (2.469) \\ 
&&PLSS & 172.365 (2.425) & 171.910 (2.427) & 171.859 (2.421) & {172.696} (2.473) \\ 
&&LASSO & 172.343 (2.468) & 172.626 (2.494) & 173.094 (2.488) & \textbf{172.584} (2.483) \\ 
\cline{2-7}
 & \multirow{3}{*}{Male} & PLTF & 144.786 (1.894) & \textbf{143.859} (1.905) & 145.162 (1.868) & \textbf{144.145} (1.920) \\ 
&&PLSS & \textbf{144.299} (1.894) & 144.790 (1.877) & \textbf{145.145} (1.856) & 144.803 (1.904) \\ 
&&LASSO & 148.020 (1.996) & 147.060 (1.997) & 147.009 (1.982) & 146.509 (2.017) \\ 
\hline \hline \\
\end{tabular} 
\end{table} 

{Next, in Tables \ref{real:tab:total_selected:serum} and \ref{real:tab:total_selected:urine}, we present the top 10 most commonly selected for each $z$ variable (i.e., age, BMI, waist circumference, and hip circumference) stratified by biospecimen types (serum and urine) and sex (male and female). These metabolites are ranked according to the absolute values of their coefficients.} To ensure fair comparisons across coefficients, all covariates are standardized. The full lists of the top 10 selected variables are provided in Tables \ref{supp:tab:serum_selected:f}-\ref{supp:tab:urine_selected:m} for different datasets in Appendix \ref{supp:tables}. 

The selected variables differ across different subsets, with their super pathways and biological functions also varying. For example, saccharin, which was selected in the male serum dataset and the female urine dataset, is a well-known artificial sweetener commonly found in UPF products. Given that the signs of the coefficients are positive, this aligns with expected findings. Quinate, selected in both the male serum and urine datasets, is commonly present in plant-based foods such as coffee, fruits, and vegetables. This suggests that plant-based food consumption in the male population may have an indirect negative association with UPF intake, as reflected by the negative coefficient which aligns with the results in \cite{playdon2024measuring}.


\begin{table}[!htbp] \centering 
\scriptsize
\caption{Top 10 selected metabolites and their super pathways results for the serum dataset, which are selected at least 3 times for 4 different $z$ variables. + and - in parentheses denote the signs of coefficients for metabolites. } 
  \label{real:tab:total_selected:serum} 
\begin{tabular}{@{\extracolsep{5pt}} ccll} 
\\[-1.8ex] 
\hline 
\hline
 & & \multicolumn{2}{c}{Selected Variables} \\ 
  & & Biochemical & Super Pathway \\ 
\hline 
 \multirow{18}{*}{Serum} & \multirow{9}{*}{Female} & Diglycerol (+) & Xenobiotics \\ 
 && 4-Allylphenol sulfate (-) & Xenobiotics \\ 
 && Anthranilate (+) & Amino Acid  \\ 
 && Eicosapentaenoylcholine (-) & Lipid  \\ 
 && 1-Stearoyl-2-Adrenoyl-GPC & Lipid  \\ 
 && (18:0/22:4) (+) &   \\ 
 && Branched chain 14:0 dicarboxylic acid (-) & Lipid  \\   
 && X-21807 (-) & Unknown  \\ 
 && X-25523 (-) & Unknown  \\   \cline{2-4} 
 & \multirow{9}{*}{Male} & Quinate (-) & Xenobiotics \\ 
 && Saccharin (+) &  Xenobiotics \\
 && 1-Methylhistidine (+) & Amino Acid \\
 && 1-(1-Enyl-palmitoyl)-2-Oleoyl-GPE & Lipid \\
 && (P-16:0/18:1) (-) &  \\
 && 1-Lignoceroyl-GPC (24:0) (-) & Lipid \\
 && X-19183 (-) & Unknown \\
 && X-21442 (-) & Unknown \\
 && X-23655 (-) & Unknown \\
\hline \hline \\[-1.8ex] 
\end{tabular} 
\end{table} 

\begin{table}[!htbp] \centering 
\scriptsize
\caption{Top 10 selected metabolites and their super pathways results for the urine dataset, which are selected at least 3 times for 4 different $z$ variables. + and - in parentheses denote the signs of coefficients for metabolites. } 
  \label{real:tab:total_selected:urine}
\begin{tabular}{@{\extracolsep{5pt}} ccll} 
\\[-1.8ex] 
\hline 
\hline
 & & \multicolumn{2}{c}{Selected Variables} \\ 
  & & Biochemical & Super Pathway \\ 
\hline 
\multirow{19}{*}{Urine} & \multirow{10}{*}{Female} & Galactonate (-) & Carbohydrate \\ 
 && Riboflavin (vitamin B2) (-) & Cofactors and Vitamins \\
 && Saccharin (+) & Xenobiotics \\
 && 2,3-Dihydroxypyridine (-) & Xenobiotics \\
 && Glutamine conjugate of C8H12O2 (3) (-) & Partially Characterized Molecules \\
 && X-12096 (+) & Unknown \\
 && X-12753 (-) & Unknown \\
 && X-23680 (+) & Unknown \\
 && X-25936 (+) & Unknown \\
 && X-25952 (+) & Unknown \\ \cline{2-4} 
 & \multirow{9}{*}{Male} & Quinate (-) & Xenobiotics \\ 
 && 1,6-Anhydroglucose (+) & Xenobiotics \\
 && N-Acetylcitrulline (+) & Amino Acid \\
 && 3-Methoxytyramine (+) & Amino Acid \\
 && Cortisone (-) & Lipid \\
 && Ursocholate (+) & Lipid \\
 && X-13847 (-) & Unknown \\
 && X-23161 (-) & Unknown \\
 && X-25952 (+) & Unknown \\
\hline \hline \\[-1.8ex] 
\end{tabular} 
\end{table}

Several other findings, including the direction of the coefficients, are consistent with existing research on UPF intake. For instance, 4-allylphenol sulfate was identified as a potential biomarker for UPF intake in a randomized, controlled, domiciled feeding trial \citep{o2023metabolomic}. Diglycerol can be found in cosmetics or can be esterified to diglycerol esters of fatty acids. Diglycerol esters of fatty acids are used as emulsifiers in food and are a hallmark of UPF. Food additive emulsifiers have been associated with higher risk of some types of cancer risks by \citet{sellem2024food}. 

Additionally, eicosapentaenoylcholine has been positively 
correlated with Vitamin D consumption \citep{leung2020serum}, and previous 
research shows that UPF intake is associated with higher prevalence of inadequate Vitamin D intake \citep{falcao2019processed}. Other nutrient deficiencies, including Vitamin B2 deficiency, have also been associated with higher UPF intake in women in \citet{schenkelaars2024intake}. Lastly, a positive association of levoglucosan with UPF intake has been identified in urine datasets 
\citep{muli2024association} and it could be a food contaminant from processing or packaging. Several other findings involve metabolites that have not been extensively studied or are unknown. These may represent potential novel biomarkers associated with UPF intake, providing new insights for future research.

\section{Discussion}
\label{discuss:con}
In this paper, we extend trend filtering to high-dimensional partial linear models via doubly penalized least squares, preserving the local adaptivity and efficiency of univariate trend filtering. We analyze degrees of freedom, establish optimal convergence rates, and demonstrate superior performance over standard smoothing splines for functions with varying smoothness. Our method identifies metabolites linked to UPF with reduced prediction error compared to alternatives. The following discussion explores key directions for future research.


When the covariate $\bz \in \mathbb{R}^d$ has dimension $d>1$, our work extends naturally to an additive function form $g_0(\bz)=\sum_{j=1}^dg_{0j}(z_j)$ with $\sum_{j=1}^d{\rm TV}(g_{0j}^{(k)})\leq C$ for some constant $C>0$. Utilizing some techniques from additive trend filtering \citep{sadhanala2019additive}, the method \textcolor{black}{is expected to apply} when $d$ is bounded. For high-dimensional cases where $d$ is comparable to or exceeds $n$, a sparse additive form for $g_0(\bz)$ with a sparsity penalty (e.g., empirical $\ell_2$ norm) for each $g_{0j}$ may be employed. While theoretically challenging, recent results in high-dimensional additive trend filtering \citep{tan2019doubly} provide useful guidance.

A key question in partial linear models is identifying which covariates exhibit linear versus nonlinear effects. While this paper assumes such structure is known, in practice, prior knowledge is often unavailable. Penalization-based methods for structure selection typically assume the underlying functions belong to Sobolev or Hölder classes \citep{zhang2011linear, huang2012semiparametric, lian2015separation}. Future work could investigate how trend filtering aids in discovering model structures with more heterogeneously smooth functions.

Under partial linear models, variable selection and statistical inference for the linear part can be important tasks. It would be particularly valuable to study how recent advances in high-dimensional linear regression models \citep{zhang2014confidence, van2014asymptotically, javanmard2014confidence, dezeure2015high, barber2015controlling, candes2018panning, dai2023false} can be extended to high-dimensional partial linear trend filtering, in order to perform variable selection and inference with guaranteed error control.
\begin{appendix}
\sloppy
\section*{}\label{appn} 

\end{appendix}
\begin{appendix}

\section{Proof of Proposition \ref{prop:df}}\label{proof:prop}

We first cite Theorem 3 from \cite{tibshirani2012degrees} on the generalized LASSO degrees of freedom.

\begin{theorem}[\cite{tibshirani2012degrees}]
\label{glasso:thm}
Consider the generalized LASSO problem,
\begin{align}
\label{glasso:form}
\hat{\bfeta}\in \argmin_{\bfeta\in \mathbb{R}^q}\frac{1}{2}\|\by-\bH \bfeta\|_2^2+\lambda \|\bD \bfeta\|_1,
\end{align}
where $\bH \in \mathbb{R}^{n\times q}$ and $\bD\in \mathbb{R}^{m\times q}$. Define the active set $\mathcal{I}=\{1\leq i\leq m: (\bD \hat{\bfeta})_i\neq 0\}$. Assuming $\by\sim\mathcal{N}(\bmu, \sigma^2\bI)$, for any $\bH, \bD$ and $\lambda\geq 0$, the degrees of freedom of $\bH \hat{\bfeta}$ is 
\[
{\rm df}(\bH \hat{\bfeta})=\mathbb{E}\Big[{\rm dim}\big(\bH ({\rm null}(\bD_{-\mathcal{I}}))\big)\Big].
\]
\end{theorem}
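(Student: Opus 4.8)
The plan is to compute the degrees of freedom through Stein's unbiased risk estimation \citep{stein1981estimation}. Writing $\hat{\bmu}(\by):=\bH\hat{\bfeta}(\by)$ for the fit, the definition \eqref{df:orig} gives ${\rm df}(\bH\hat{\bfeta})=\sigma^{-2}\sum_{i=1}^n{\rm Cov}(\hat{\mu}_i,y_i)$. Stein's lemma asserts that if each coordinate map $\by\mapsto\hat{\mu}_i(\by)$ is absolutely continuous with integrable partial derivatives, then ${\rm Cov}(\hat{\mu}_i,y_i)=\sigma^2\,\mathbb{E}[\partial\hat{\mu}_i/\partial y_i]$, so that ${\rm df}(\bH\hat{\bfeta})=\mathbb{E}[\nabla\cdot\hat{\bmu}(\by)]=\mathbb{E}\big[{\rm tr}(\partial\hat{\bmu}/\partial\by)\big]$. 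The argument therefore splits into two tasks: (a) verifying the regularity of $\hat{\bmu}(\cdot)$ required to invoke Stein's lemma, and (b) evaluating the divergence $\nabla\cdot\hat{\bmu}$ at (Lebesgue-)almost every $\by$.

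For task (a), I would first note that although a minimizer $\hat{\bfeta}$ of \eqref{glasso:form} need not be unique, the fit $\hat{\bmu}=\bH\hat{\bfeta}$ is: strict convexity of $\bmu\mapsto\frac12\|\by-\bmu\|_2^2$ together with convexity of $\bfeta\mapsto\|\bD\bfeta\|_1$ forces any two optimal fits to agree, since otherwise a convex combination would strictly lower the objective. To get continuity and almost-everywhere differentiability, I would pass to the Lagrange dual of \eqref{glasso:form}. In the representative case $\bH=\bI$ the dual exhibits $\hat{\bmu}=\by-P_C(\by)$, where $P_C$ is Euclidean projection onto the convex polyhedron $C=\{\bD'\bu:\|\bu\|_\infty\le\lambda\}$; as projections onto convex sets are nonexpansive, $\hat{\bmu}(\cdot)$ is $1$-Lipschitz, and the general $\bH$ case is reduced to this form by the arguments of \cite{tibshirani2012degrees}. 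Rademacher's theorem then yields differentiability off a null set, and the resulting bounded partial derivatives make the integrability hypotheses of Stein's lemma automatic.

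For task (b), I would argue that for $\by$ outside a Lebesgue-null set the boundary structure of the solution is locally frozen: the active set $\mathcal{I}=\{i:(\bD\hat{\bfeta})_i\neq0\}$ together with the signs $\mathop{\rm sign}\big((\bD\hat{\bfeta})_{\mathcal{I}}\big)$ is constant on a neighborhood of $\by$. On such a neighborhood the stationarity (KKT) conditions reduce \eqref{glasso:form} to minimizing $\frac12\|\by-\bH\bfeta\|_2^2$ plus a term linear in $\bfeta$ (with $\by$-independent coefficients $\lambda\,{\rm sign}((\bD\hat{\bfeta})_i)$), subject to $\bD_{-\mathcal{I}}\bfeta=\bzero$, i.e.\ over $\bfeta\in{\rm null}(\bD_{-\mathcal{I}})$. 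Solving this equality-constrained least squares shows $\hat{\bmu}$ equals the orthogonal projection of $\by$ onto $\bH({\rm null}(\bD_{-\mathcal{I}}))$ plus a $\by$-independent offset; hence $\partial\hat{\bmu}/\partial\by$ is exactly that projection matrix and $\nabla\cdot\hat{\bmu}={\rm tr}(\text{projection})=\dim\big(\bH({\rm null}(\bD_{-\mathcal{I}}))\big)$. Inserting this into the Stein identity delivers the claimed formula ${\rm df}(\bH\hat{\bfeta})=\mathbb{E}\big[\dim(\bH({\rm null}(\bD_{-\mathcal{I}})))\big]$.

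The main obstacle is the local-freezing claim underlying step (b): one must show that the set of $\by$ at which the active set or a boundary sign changes—equivalently, where $\hat{\bmu}$ fails to be locally affine—has Lebesgue measure zero, so that the a.e.\ divergence computation is legitimate and agrees with the Jacobian furnished by Rademacher's theorem. This requires a genericity argument built from the KKT conditions together with the nonexpansiveness established in step (a). Handling the case where $\bH$ lacks full column rank (so that Moore--Penrose pseudoinverses replace inverses in the local least-squares solution, and the local fit is still an orthogonal projection onto $\bH({\rm null}(\bD_{-\mathcal{I}}))$) adds bookkeeping but leaves the dimension count unchanged.
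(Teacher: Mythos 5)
First, note that the paper does not actually prove this statement: it is imported verbatim as Theorem~3 of \cite{tibshirani2012degrees} and used as a citation in the proof of Proposition~\ref{prop:df}, so the only meaningful benchmark is the source's argument. Your outline follows that argument's architecture faithfully: Stein's lemma applied to the unique fit $\bH\hat{\bfeta}(\by)$, the dual representation $\hat{\bmu}=\by-P_C(\by)$ with $C=\{\bD'\bu:\|\bu\|_\infty\le\lambda\}$ giving nonexpansiveness when $\bH=\bI$, local reduction via frozen KKT conditions to an equality-constrained least-squares problem over ${\rm null}(\bD_{-\mathcal{I}})$, so that the local Jacobian is the orthogonal projection onto $\bH({\rm null}(\bD_{-\mathcal{I}}))$ and the divergence is its dimension. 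All of these steps are correct as stated, including your observation that a linear offset from the sign term leaves the Jacobian a pure projection even when $\bH$ is rank-deficient.

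The genuine gap is the one you flag but do not close, and it is the mathematical heart of the theorem rather than a routine verification. You need the set of $\by$ at which the active set $\mathcal{I}$ or the sign pattern changes to be Lebesgue-null, and Rademacher's theorem does not supply this: almost-everywhere differentiability of a Lipschitz map does not identify the a.e.\ Jacobian with the projection your KKT computation produces at one chosen solution. In \cite{tibshirani2012degrees} this is established by showing the fit is piecewise affine with \emph{finitely many} polyhedral pieces (via the facial structure of the dual polyhedron), so the exceptional set is a finite union of lower-dimensional sets. A second, related hole: since $\hat{\bfeta}$ need not be unique, ``the'' active set $\mathcal{I}$ is not a well-defined function of $\by$, so your local-freezing claim is not even well posed without a selection argument; the source resolves this by working with the boundary set $\mathcal{B}\supseteq\mathcal{I}$ of a dual solution and proving that the subspace $\bH({\rm null}(\bD_{-\mathcal{B}}))$ — hence its dimension — is invariant across solutions for almost every $\by$, which is exactly what makes the formula valid for the active set of \emph{any} solution, as the theorem asserts. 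Finally, the reduction from general $\bH$ to $\bH=\bI$ is cited rather than executed, and it is not cosmetic: for non-injective $\bH$ the dual is no longer a plain projection onto $C$, and the source handles this with a separate decomposition. As it stands, your proposal is a correct roadmap of the known proof with its two hardest ingredients left as acknowledged obstacles, not a complete proof.
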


\noindent The proof of Part (i) is a direct application of the above theorem. Denote
\begin{align*}
\bH=
\begin{bmatrix}
\bX & \bI 
\end{bmatrix}
\in \mathbb{R}^{n\times (n+p)}, ~~\bfeta=(\bbeta',\btheta')' \in \mathbb{R}^{n+p}, ~~\bD=
\begin{bmatrix}
\bI & \bzero \\
\bzero & \frac{\gamma}{\lambda} D^{(\bz, k+1)}
\end{bmatrix}
.
\end{align*}
Then, \eqref{est:d} can be rewritten in the form of \eqref{glasso:form}. Thus we can invoke Theorem \ref{glasso:thm} and obtain
\[
{\rm null}(\bD_{-\mathcal{I}})=\Big\{(\bbeta,\btheta): \bbeta_{\mathcal{A}^c}=\bzero, D^{(\bz, k+1)}_{-\mathcal{B}}\btheta=\bzero\Big\},
\]
where $\bbeta_{\mathcal{A}^c}$ denotes the vector $ \bbeta$ after removing components indexed by $\mathcal{A}$.
As a result,
\[
\bH({\rm null}(\bD_{-\mathcal{I}}))={\rm Col}(\bX_{\mathcal{A}})+{\rm Null}(D^{(\bz, k+1)}_{-\mathcal{B}}).
\]

\noindent Regarding Part (ii), the proof is motivated by that of Lemma 3 in \cite{sadhanala2019additive}. We use the equivalent formulation Equation (2.6) to derive the degrees of freedom. We have
\begin{align}
\label{equip:form}
&\min_{\bbeta\in \mathbb{R}^p, \balpha \in \mathbb{R}^n} \frac{1}{2}\|\by-\bX\bbeta-\bQ\balpha\|_n^2+\lambda\|\bbeta\|_1+\gamma k!\sum_{l=k+2}^{n} |\alpha_{l}| \nonumber \\
=&\min_{\bbeta\in \mathbb{R}^p, \balpha_2 \in \mathbb{R}^{n-k-1}} \frac{1}{2}\|\bU'(\by-\bX\bbeta-\bQ_2\balpha_2)\|_n^2+\lambda\|\bbeta\|_1+\gamma k!\|\balpha_2\|_1 \nonumber \\
=&\min_{\bbeta\in \mathbb{R}^p, \tilde{\balpha}_2 \in \mathbb{R}^{n-k-1}} \frac{1}{2}\|\tilde{\by}-\tilde{\bX}\bbeta-\tilde{\bQ}_2\tilde{\balpha}_2\|_n^2+\lambda\|\bbeta\|_1+\lambda\|\tilde{\balpha}_2\|_1,
\end{align} 
where $\tilde{\by}=\bU'\by, \tilde{\bX}=\bU'\bX, \tilde{\bQ}_2=\frac{\lambda}{\gamma k!}\bU'\bQ_2, \tilde{\balpha}_2=\frac{\gamma k!}{\lambda}\balpha_2$. The problem \eqref{equip:form} is a standard LASSO problem,
\begin{align}
\label{std:lasso}
\hat{\bfeta}\in \argmin_{\bfeta \in \mathbb{R}^{n+p-k-1}} \frac{1}{2}\|\tilde{\by}-\bH \bfeta\|_n^2+\lambda \|\bfeta\|_1,
\end{align}
where $\bH=[\bU'\bX, \frac{\lambda}{\gamma k!}\bU'\bQ_2]$ and $\bfeta=(\bbeta', \tilde{\balpha}_2')'$. Let $\tilde{\bU}\tilde{\bU}'$ be the projection operator onto the column space ${\rm Col}(\bQ_1)$ where $\tilde{\bU} \in \mathbb{R}^{n\times (k+1)}$ has orthonormal columns, and $\bJ=[\bX, \frac{\lambda}{\gamma k!}\bQ_2]$. Then the equivalence among Equation (2.6), \eqref{est:d}, \eqref{equip:form} and \eqref{std:lasso} yields 
\[
\bX\hat{\bbeta}+\hat{\btheta}=\tilde{\bU}\tilde{\bU}'\by + \bU\bU'\bJ\hat{\bfeta}.
\]
Therefore, 
\begin{align*}
{\rm df}(\bX\hat{\bbeta}+\hat{\btheta}) &= \frac{1}{\sigma^2} {\rm trace}({\rm Cov}(\bX\hat{\bbeta}+\hat{\btheta}, \by))\\
&=\frac{1}{\sigma^2}  {\rm trace}({\rm Cov}(\tilde{\bU}\tilde{\bU}'\by + \bU\bU'\bJ\hat{\bfeta}, \tilde{\bU}\tilde{\bU}'\by+\bU\bU'\by)) \\
&=\frac{1}{\sigma^2}  {\rm trace}({\rm Cov}(\tilde{\bU}\tilde{\bU}'\by , \tilde{\bU}\tilde{\bU}'\by))+ \frac{1}{\sigma^2}  {\rm trace}({\rm Cov}(\bU\bU'\bJ\hat{\bfeta}, \bU\bU'\by))\\
&=k+1+ \frac{1}{\sigma^2}  {\rm trace}({\rm Cov}(\bU'\bJ\hat{\bfeta}, \bU'\by))=k+1+ \frac{1}{\sigma^2}  {\rm trace}({\rm Cov}(\bH\hat{\bfeta}, \tilde{\by})).
\end{align*}
Note that $\frac{1}{\sigma^2}  {\rm trace}({\rm Cov}(\bH\hat{\bfeta}, \tilde{\by}))$ above is the degrees of freedom of the standard LASSO \eqref{std:lasso} under the model $\tilde{\by}\sim \mathcal{N}\left(\bU'\bX\bbeta^0 + g_0(\bz), \sigma^2 \bI\right)$. Since $\bH$ has columns in general position, the LASSO solution is unique \citep{tibshirani2013lasso}. As a result,
\begin{align*}
\frac{1}{\sigma^2}  {\rm trace}({\rm Cov}(\bH\hat{\bfeta}, \tilde{\by}))&=\mathbb{E}[{\rm number~of~nonzeros~in~} \hat{\bfeta}] \\
&=\mathbb{E}\big[\big|\{1\leq j\leq p: \hat{\beta}_j\neq 0\}\big|+\big|\{k+2\leq \ell \leq n: \hat{\alpha}_{\ell}\neq 0\}\big|\big] \\
&=\mathbb{E}[|\mathcal{A}|+|\mathcal{B}|],
\end{align*}
where in the last step we have used $\hat{\balpha}=\bQ^{-1}\hat{\btheta}$ and the expression for $\bQ^{-1}$ from \cite{wang2014falling}.




\section{Proof of Theorem \ref{thm:g}}\label{proof:thm1}

Throughout the proof, we use $C_1,C_2,\ldots$ to denote universal constants and use $D_1,D_2,\ldots$ to denote constants that may only depend on the constants $k, L_g, L_h, K_x, K_{\epsilon}, \ell_z, \Lambda_{min}, \Lambda_{max}$ from Conditions \ref{con:one}-\ref{con:five}. An explicit (though not optimal) dependence of $\{D_j, j=1,2,\ldots\}$ on the aforementioned constants can be tracked down. However, since it does not provide much more insight, we will often not present the explicit forms of $\{D_j, j=1,2,\ldots\}$, and this will greatly help streamline the proofs. The constants $\{C_j, D_j,j=1,2,\ldots\}$ may vary from lines to lines.

\subsection{Roadmap of the proof}

For a given function $f(\bx,z)=\bx'\bbeta+g(z)$ with $\bbeta\in \mathbb{R}^p, {\rm TV}(g^{(k)})<\infty$, introduce the functional 
\begin{align}
\label{funl:def}
\tau_{\delta_0,R}(f)=\frac{\lambda\|\bbeta\|_1+\gamma {\rm TV}(g^{(k)})}{20\delta_0R}+\|\bx'\bbeta+g(z)\|, ~~~\delta_0\in (0,1),~~R>0, 
\end{align}
where for notational simplicity we have suppressed the dependence of $\tau_{\delta_0,R}(f)$ on the tuning parameters $\lambda, \gamma>0$. This functional will serve as a critical measure for $\bbeta$ and $g(z)$. Define the following events:
\begin{align*}
\mathcal{T}_1(\delta_0, R)&=\Bigg\{\sup_{\tau_{\delta_0,R}(f)\leq R}\Big|\|f\|_n^2-\|f\|^2\Big| \leq \delta_0 R^2\Bigg\}, \\
\mathcal{T}_2(\delta_0, R)&=\Bigg\{\sup_{\tau_{\delta_0,R}(f)\leq R}\Big|\langle \beps, f(\bX, \bZ)\rangle_n\Big| \leq \delta_0 R^2\Bigg\}, \\
\mathcal{T}_3&=\Bigg\{ \max_{1\leq i \leq n-1} (z_{(i+1)}-z_{(i)})\leq \frac{22\log n}{\ell_z n} \Bigg\}.
\end{align*}
The general proof idea is inspired by \cite{muller2015partial} (see also \cite{yu2019minimax}). We first use a localization technique based on convexity to show that under the event $\mathcal{T}_1(\delta_0, R) \cap \mathcal{T}_2(\delta_0, R) \cap \mathcal{T}_3$, the bound on $\|\hat{g}-g_0\|$ in Theorem \ref{thm:g} (bound on $\|\hat{g}-g_0\|_n$ is derived similarly) holds. This is done in Lemmas \ref{lemma:1}-\ref{lemma:2}. Then in Lemmas \ref{hg:one} and \ref{hp:event2} we show that the event $\mathcal{T}_1(\delta_0, R) \cap \mathcal{T}_2(\delta_0, R) \cap \mathcal{T}_3$ happens with high probability. Along the way we need to choose $R,\lambda, \gamma, \delta_0$ in \eqref{funl:def} properly to meet conditions in Lemmas \ref{lemma:1}-\ref{hp:event2} and to achieve the desirable error rate in Theorem \ref{thm:g}. We complete this step in Section \ref{prove:both}.

\subsection{Important lemmas}

\begin{lemma}
\label{lemma:1}
Assuming Condition \ref{con:four} and $\delta_0\leq \frac{1}{100}, \delta_0R^2\geq D_1(\frac{\log^2 n}{n^2}+\gamma+ \lambda^2 s)$, then on $\mathcal{T}_1(\delta_0, R) \cap \mathcal{T}_2(\delta_0, R) \cap \mathcal{T}_3$, there exists $\bar{g}\in \mathcal{H}_n$ such that
\begin{itemize}
\item[(i)] ${\rm TV}(\bar{g}^{(k)})\leq a_k\cdot {\rm TV}(g_0^{(k)}), ~\|\bar{g}-g_0\|_{\infty}\leq  \frac{22b_k\log n}{\ell_z n} \cdot  {\rm TV}(g_0^{(k)})$, where $a_k,b_k>0$ are constants only dependent on $k$.
\item[(ii)] $\tau_{\delta_0,R}(\bx'(\hat{\bbeta}-\bbeta^0)+\hat{g}(z)-\bar{g}(z))\leq R$.

\end{itemize}
\end{lemma}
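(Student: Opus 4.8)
The plan is to prove the two parts separately, since part (i) is a purely deterministic approximation statement (given $\mathcal{T}_3$) while part (ii) is a convexity-based localization argument.

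\textbf{Part (i).} I would construct $\bar g$ as a discrete-spline (falling-factorial) interpolant of $g_0$ at the ordered design points $z_{(1)}<\cdots<z_{(n)}$ and invoke the approximation theory for the falling factorial basis \citep{tibshirani2014adaptive, wang2014falling, sadhanala2019additive}. The bound ${\rm TV}(\bar g^{(k)})\leq a_k\,{\rm TV}(g_0^{(k)})$ holds because the $(k+1)$-th discrete derivative of the interpolant is dominated by the continuous total variation of $g_0^{(k)}$ up to a $k$-dependent factor, and the sup-norm estimate $\|\bar g-g_0\|_\infty\leq b_k\cdot\big(\max_i(z_{(i+1)}-z_{(i)})\big)\cdot{\rm TV}(g_0^{(k)})$ is the usual piecewise-polynomial interpolation error. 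Substituting the maximum-gap bound from $\mathcal{T}_3$ yields the stated $\frac{22 b_k\log n}{\ell_z n}{\rm TV}(g_0^{(k)})$. This step uses no randomness.

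\textbf{Part (ii).} Writing $f(\bx,z)=\bx'\bbeta+g(z)$, I would first note that $\tau_{\delta_0,R}$ is positively homogeneous of degree one and that $M(\bbeta,g)=\frac{1}{2}\|\by-\bX\bbeta-g(\bz)\|_n^2+\lambda\|\bbeta\|_1+\gamma{\rm TV}(g^{(k)})$ is convex with minimizer $(\hat\bbeta,\hat g)$ over the convex set containing $(\bbeta^0,\bar g)$. Suppose, for contradiction, $\tau_{\delta_0,R}(\hat f-\bar f)>R$ with $\bar f(\bx,z)=\bx'\bbeta^0+\bar g(z)$. Then $t^\star=R/\tau_{\delta_0,R}(\hat f-\bar f)\in(0,1)$, and the convex interpolants $\bbeta_{t^\star}=\bbeta^0+t^\star(\hat\bbeta-\bbeta^0)$, $g_{t^\star}=\bar g+t^\star(\hat g-\bar g)$ place $f_{t^\star}-\bar f$ exactly on the boundary $\tau_{\delta_0,R}(f_{t^\star}-\bar f)=R$ of the set on which $\mathcal{T}_1(\delta_0,R)$ and $\mathcal{T}_2(\delta_0,R)$ provide uniform control. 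Convexity of $M$ together with $M(\hat\bbeta,\hat g)\leq M(\bbeta^0,\bar g)$ yields the basic inequality $M(\bbeta_{t^\star},g_{t^\star})\leq M(\bbeta^0,\bar g)$.

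I would then expand this using $\by=\bX\bbeta^0+g_0(\bz)+\beps$. With $b=\bbeta_{t^\star}-\bbeta^0$, $u=g_{t^\star}-\bar g$, and $\Delta=\bX b+u(\bz)$, it becomes
\[
\tfrac12\|\Delta\|_n^2+\lambda\|\bbeta_{t^\star}\|_1+\gamma{\rm TV}(g_{t^\star}^{(k)})\leq \langle\beps,\Delta\rangle_n+\langle(g_0-\bar g)(\bz),\Delta\rangle_n+\lambda\|\bbeta^0\|_1+\gamma{\rm TV}(\bar g^{(k)}).
\]
On $\mathcal{T}_2(\delta_0,R)$ the noise term is at most $\delta_0R^2$; the approximation term is at most $\|g_0-\bar g\|_\infty\|\Delta\|_n$, controlled by part (i) and by $\|\Delta\|_n\leq\sqrt2R$ (from $\mathcal{T}_1(\delta_0,R)$); and $\mathcal{T}_1(\delta_0,R)$ lets me replace $\|\Delta\|_n^2$ by the population norm $\|f_{t^\star}-\bar f\|^2$ up to $\delta_0R^2$. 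For the penalties I would split the $\ell_1$ term over $S$ via $\|\bbeta_{t^\star}\|_1\geq\|\bbeta^0\|_1-\|b_S\|_1+\|b_{S^c}\|_1$, bound the total-variation bias by $\gamma{\rm TV}(\bar g^{(k)})\leq\gamma a_kL_g$ from part (i), and use Condition \ref{con:four} through the decomposition $\bx=\tilde\bx+\bh(z)$ to get $\|f_{t^\star}-\bar f\|^2\geq\Lambda_{min}\|b\|_2^2$, hence $\|b_S\|_1\leq\sqrt s\,\|f_{t^\star}-\bar f\|/\sqrt{\Lambda_{min}}$.

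Assembling, with $\lambda\asymp\sqrt{\log p/n}$, $\gamma\asymp s\log p/n+n^{-(2k+2)/(2k+3)}$ and $R^2$ of this same order but carrying a large constant (so that $\gamma a_kL_g$ and $\lambda\sqrt s\,R$ are small fractions of $R^2$), the scaling condition $\delta_0R^2\geq D_1(\frac{\log^2 n}{n^2}+\gamma+\lambda^2 s)$ and $\delta_0\leq\frac{1}{100}$ force both $\|f_{t^\star}-\bar f\|$ and the penalty numerator $\lambda\|b\|_1+\gamma{\rm TV}(u^{(k)})$ to be strictly smaller than what $\tau_{\delta_0,R}(f_{t^\star}-\bar f)=R$ demands; the factor $20$ in the denominator of $\tau_{\delta_0,R}$ is precisely what absorbs the constants from the three empirical-process bounds. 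This gives $\tau_{\delta_0,R}(f_{t^\star}-\bar f)<R$, a contradiction, so $\tau_{\delta_0,R}(\hat f-\bar f)\leq R$. The main obstacle is exactly this final constant bookkeeping: unlike the $\ell_1$ penalty, the total-variation penalty has no support structure, so its bias $\gamma{\rm TV}(\bar g^{(k)})$ is genuinely of order $R^2$ rather than $\delta_0R^2$, and is tamed only by taking $R$ large relative to $\gamma$ and to $1/\delta_0$. Tracking how the resulting quadratic inequality in $\|f_{t^\star}-\bar f\|$, the noise level, the interpolation bias, and the scaling condition combine to stay strictly below $R$ is the delicate step where the numerical constants ($\tfrac{1}{100}$, $20$, $22$) are calibrated.
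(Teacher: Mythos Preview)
Your proposal is correct and follows essentially the same localization-by-convexity argument as the paper: the paper also takes a convex interpolant between $(\hat\bbeta,\hat g)$ and $(\bbeta^0,\bar g)$ to land in $\{\tau_{\delta_0,R}(\cdot)\leq R\}$, then combines the basic inequality with $\mathcal T_1,\mathcal T_2,\mathcal T_3$, the $\ell_1$ split over $S$, and Condition~\ref{con:four} via the orthogonal decomposition $\|\bx'\bbeta+g(z)\|^2=\|\tilde\bx'\bbeta\|^2+\|\bh(z)'\bbeta+g(z)\|^2$. One simplification relative to your sketch: the TV bias $\gamma{\rm TV}(\bar g^{(k)})\leq a_kL_g\gamma$ is already $O(\delta_0R^2)$ by the hypothesis $\delta_0R^2\geq D_1\gamma$, so the final bookkeeping is cleaner than you anticipate --- no quadratic inequality arises, since a single AM--GM step $8\lambda\sqrt s\,\|\tilde\bbeta-\bbeta^0\|_2\leq 32\Lambda_{min}^{-1}\lambda^2 s+\tfrac12\|\tilde f-\bar f\|^2$ absorbs the cross term.
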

\begin{proof}
According to Lemma 16 in \cite{tibshirani2022divided} (see also Lemma 13 in \cite{sadhanala2019additive}), there exists $\bar{g}\in \mathcal{H}_n$ such that result (i) holds on $\mathcal{T}_3$. We use such a $\bar{g}$ in the rest of the proof. Consider the convex combination 
\[
\tilde{\bbeta}=t\hat{\bbeta}+(1-t)\bbeta^0, \quad \tilde{g}=t\hat{g}+(1-t)\bar{g}, \quad t\in [0,1].
\]
Accordingly, define 
\[
 \hat{f}(\bx,z)=\bx'\hat{\bbeta}+\hat{g}(z), ~~\bar{f}(\bx,z)=\bx'\bbeta^0+\bar{g}(z), ~~\tilde{f}(\bx,z)=t\hat{f}+(1-t)\bar{f}=\bx'\tilde{\bbeta}+\tilde{g}(z). 
\]
For the choice of $t=\frac{R}{R+\tau_{\delta_0,R}(\hat{f}-\bar{f})}$, it is straightforward to verify that
\begin{align}
\label{trick:0}
\tau_{\delta_0,R}(\tilde{f}-\bar{f})=\frac{R\cdot \tau_{\delta_0,R}(\hat{f}-\bar{f})}{R+ \tau_{\delta_0,R}(\hat{f}-\bar{f})} \leq R.
\end{align}
Hence, to prove result (ii) it is sufficient to show $\tau_{\delta_0,R}(\tilde{f}-\bar{f})\leq \frac{R}{2}$. We start with the basic inequality due to the convex problem \eqref{est:p},
\begin{align*}
\frac{1}{2}\|\by-\bX\tilde{\bbeta}-\tilde{g}(\bZ)\|_n^2+\lambda\|\tilde{\bbeta}\|_1+\gamma {\rm TV}(\tilde{g}^{(k)}) \leq \frac{1}{2}\|\by-\bX\bbeta^0-\bar{g}(\bZ)\|_n^2+\lambda\|\bbeta^0\|_1+\gamma {\rm TV}(\bar{g}^{(k)}),
\end{align*}
which can be simplified to
\begin{align*}
&\frac{1}{2}\|\bX(\bbeta^0-\tilde{\bbeta})+g_0(\bZ)-\tilde{g}(\bZ)\|_n^2+\lambda\|\tilde{\bbeta}\|_1+\gamma {\rm TV}(\tilde{g}^{(k)}) \nonumber \\
\leq & ~\frac{1}{2}\|g_0(\bZ)-\bar{g}(\bZ)\|_n^2+ \langle \beps, \bX(\tilde{\bbeta}-\bbeta^0)+\tilde{g}(\bZ)-\bar{g}(\bZ)\rangle_n+ \lambda\|\bbeta^0\|_1+\gamma {\rm TV}(\bar{g}^{(k)}).
\end{align*}
Using the triangle inequality for $\|\cdot\|_1$ and ${\rm TV}(\cdot)$, we can continue from the above to obtain
\begin{align}
\label{L2:bound}
&\frac{1}{2}\|\bX(\bbeta^0-\tilde{\bbeta})+g_0(\bZ)-\tilde{g}(\bZ)\|_n^2+\lambda\|\tilde{\bbeta}-\bbeta^0\|_1+\gamma {\rm TV}(\tilde{g}^{(k)}-\bar{g}^{(k)}) \nonumber \\
\leq & ~\frac{1}{2}\|g_0(\bZ)-\bar{g}(\bZ)\|_n^2+ \langle \beps, \bX(\tilde{\bbeta}-\bbeta^0)+\tilde{g}(\bZ)-\bar{g}(\bZ)\rangle_n+ 2\lambda\|\tilde{\bbeta}_S-\bbeta^0_S\|_1+2\gamma {\rm TV}(\bar{g}^{(k)}).
\end{align}
Observing from \eqref{trick:0} that $\tilde{f}-\bar{f}=\bx'(\tilde{\bbeta}-\bbeta^0)+\tilde{g}(z)-\bar{g}(z)$ belongs to the set $\{f:\tau_{\delta_0,R}(f)\leq R\}$, on $\mathcal{T}_1(\delta_0, R) \cap \mathcal{T}_2(\delta_0, R) \cap \mathcal{T}_3$ we can proceed with
\begin{align}
& \|\tilde{f}-\bar{f}\|^2 +  4\lambda\|\tilde{\bbeta}-\bbeta^0\|_1+4\gamma {\rm TV}(\tilde{g}^{(k)}-\bar{g}^{(k)}) \nonumber  \\
\leq &~ \|\bX(\bbeta^0-\tilde{\bbeta})+\bar{g}(\bZ)-\tilde{g}(\bZ)\|_n^2+\delta_0 R^2 + 4\lambda\|\tilde{\bbeta}-\bbeta^0\|_1+4\gamma {\rm TV}(\tilde{g}^{(k)}-\bar{g}^{(k)})\nonumber \\
 \leq & ~2 \|\bX(\bbeta^0-\tilde{\bbeta})+g_0(\bZ)-\tilde{g}(\bZ)\|_n^2+ 2\|\bar{g}(\bZ)-g_0(\bZ)\|_n^2+\delta_0 R^2+ 4\lambda\|\tilde{\bbeta}-\bbeta^0\|_1 \nonumber\\
 &~+4\gamma {\rm TV}(\tilde{g}^{(k)}-\bar{g}^{(k)})\nonumber \\
\leq & ~4\|g_0(\bZ)-\bar{g}(\bZ)\|_n^2+4 \langle \beps, \bX(\tilde{\bbeta}-\bbeta^0)+\tilde{g}(\bZ)-\bar{g}(\bZ)\rangle_n+ 8\lambda\|\bbeta^0_S-\tilde{\bbeta}_S\|_1\nonumber  \\
&~+8\gamma {\rm TV}(\bar{g}^{(k)}) +\delta_0 R^2  \nonumber  \\
\leq & ~5\delta_0 R^2 + 4b_k^2L_g^2\frac{22^2\log^2n}{\ell_z^2n^2}+8a_kL_g\gamma+8\lambda \sqrt{s}\|\tilde{\bbeta}-\bbeta^0\|_2, \label{L2:bound2}
\end{align}
where the third inequality holds due to \eqref{L2:bound}, and in the last inequality we have used result (i) and Cauchy-Schwarz inequality for $\|\bbeta^0_S-\tilde{\bbeta}_S\|_1$. Now Condition \ref{con:four} implies that 
\begin{align*}
&8\lambda \sqrt{s}\|\tilde{\bbeta}-\bbeta^0\|_2\leq 8\lambda \sqrt{s}\Lambda^{-1/2}_{min} \|\tilde{\bx}'(\tilde{\bbeta}-\bbeta^0)\| \\
\leq &~32\lambda^2s \Lambda^{-1}_{min}+\frac{1}{2}\|\tilde{\bx}'(\tilde{\bbeta}-\bbeta^0)\|^2 \leq 32\lambda^2s \Lambda^{-1}_{min}+\frac{1}{2}\|\tilde{f}-\bar{f}\|^2.
\end{align*}
Here, the second inequality follows $ab\leq \frac{1}{2}a^2+\frac{1}{2}b^2$, and the third inequality is due to the orthogonal decomposition $\|\bx'\bbeta+g(z)\|^2=\|\tilde{\bx}'\bbeta\|^2+\|\bh(z)'\bbeta+g(z)\|^2$. This together with \eqref{L2:bound2} yields
\begin{align*}
 & \|\tilde{f}-\bar{f}\|^2 + 8\lambda\|\tilde{\bbeta}-\bbeta^0\|_1+8\gamma {\rm TV}(\tilde{g}^{(k)}-\bar{g}^{(k)}) \\
  \leq &~10\delta_0 R^2 + 8b_k^2L_g^2\frac{22^2\log^2n}{\ell_z^2n^2}+16a_kL_g\gamma+ 64\lambda^2 s \Lambda^{-1}_{min} \leq 16\delta_0R^2,
\end{align*}
where the last inequality holds under the assumed condition on $\delta_0R^2$ (with proper choice of constant $D_1$). The above bound further implies $ \|\tilde{f}-\bar{f}\|\leq 4\sqrt{\delta_0}R\leq \frac{2}{5}R$ since $\delta_0\leq \frac{1}{100}$, and $\frac{\lambda\|\tilde{\bbeta}-\bbeta^0\|_1+\gamma {\rm TV}(\tilde{g}^{(k)}-\bar{g}^{(k)})}{20\delta_0R}\leq \frac{1}{10}R$, leading to the bound $\tau_{\delta_0,R}(\tilde{f}-\bar{f})\leq \frac{R}{10}+\frac{2R}{5}=\frac{R}{2}$.
\end{proof}

\begin{lemma}
\label{lemma:2}
Under the same conditions of Lemma \ref{lemma:1}, then on $\mathcal{T}_1(\delta_0, R) \cap \mathcal{T}_2(\delta_0, R) \cap \mathcal{T}_3$, it holds that 
\[
\|\hat{g}(z)-g_0(z)\|\leq \Big(1+\sqrt{\frac{\Lambda_{max}}{\Lambda_{min}}}\Big)R+22b_kL_g\ell_z^{-1}\frac{\log n}{n}.
\]
Moreover, define a subevent of $\mathcal{T}_1(\delta_0, R)$: 
\[
\mathcal{T}_0(\delta_0, R,\tilde{R})=\Bigg\{\sup_{\tau_{\delta_0,R}(f)\leq \tilde{R}}\Big|\|f\|_n^2-\|f\|^2\Big| \leq \delta_0 R^2\Bigg\}
\]
with $\tilde{R}=(2+\sqrt{\frac{\Lambda_{max}}{\Lambda_{min}}})R+22b_kL_g\ell_z^{-1}\frac{\log n}{n}+\frac{(a_k+1)L_g\gamma}{20\delta_0R}$. Then on $\mathcal{T}_0(\delta_0, R, \tilde{R}) \cap \mathcal{T}_2(\delta_0, R) \cap \mathcal{T}_3$, it holds that 
\[
\|\hat{g}(z)-g_0(z)\|_n \leq \Big(1+\sqrt{\delta_0}+\sqrt{\frac{\Lambda_{max}}{\Lambda_{min}}}\Big)R+22b_kL_g\ell_z^{-1}\frac{\log n}{n}.
\]
\end{lemma}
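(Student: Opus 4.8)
The plan is to leverage Lemma \ref{lemma:1} to control $\hat{g}-\bar{g}$ in the population $L_2$ norm, bridge from $\bar{g}$ to $g_0$ via the sup-norm approximation bound of Lemma \ref{lemma:1}(i), and then transfer the resulting population bound to the empirical norm through the uniform deviation event $\mathcal{T}_0$. The heavy lifting is already done in Lemma \ref{lemma:1}; what remains is an unpacking of $\tau_{\delta_0,R}(\hat{f}-\bar{f})\le R$ together with a norm-transfer step.

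First I would record the two consequences of Lemma \ref{lemma:1}(ii). Writing $\hat{f}-\bar{f}=\bx'(\hat{\bbeta}-\bbeta^0)+\hat{g}(z)-\bar{g}(z)$, the fact that $\tau_{\delta_0,R}$ is a sum of two nonnegative terms lets me split $\tau_{\delta_0,R}(\hat{f}-\bar{f})\le R$ into $\|\hat{f}-\bar{f}\|\le R$ and $\gamma\,{\rm TV}(\hat{g}^{(k)}-\bar{g}^{(k)})\le 20\delta_0 R^2$. Next I apply the orthogonal decomposition $\|\bx'\bbeta+g(z)\|^2=\|\tilde{\bx}'\bbeta\|^2+\|\bh(z)'\bbeta+g(z)\|^2$ (used already in the proof of Lemma \ref{lemma:1}) with $\bbeta=\hat{\bbeta}-\bbeta^0$ and $g=\hat{g}-\bar{g}$. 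Since both summands are nonnegative and add to $\|\hat{f}-\bar{f}\|^2\le R^2$, I get $\|\tilde{\bx}'(\hat{\bbeta}-\bbeta^0)\|\le R$ and $\|\bh(z)'(\hat{\bbeta}-\bbeta^0)+\hat{g}(z)-\bar{g}(z)\|\le R$. Condition \ref{con:four} then yields $\|\hat{\bbeta}-\bbeta^0\|_2\le \Lambda_{min}^{-1/2}\|\tilde{\bx}'(\hat{\bbeta}-\bbeta^0)\|\le \Lambda_{min}^{-1/2}R$, whence $\|\bh(z)'(\hat{\bbeta}-\bbeta^0)\|\le \Lambda_{max}^{1/2}\|\hat{\bbeta}-\bbeta^0\|_2\le \sqrt{\Lambda_{max}/\Lambda_{min}}\,R$. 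A triangle inequality gives $\|\hat{g}-\bar{g}\|\le (1+\sqrt{\Lambda_{max}/\Lambda_{min}})R$, and adding $\|\bar{g}-g_0\|\le\|\bar{g}-g_0\|_\infty\le 22 b_k L_g \ell_z^{-1}\log n/n$ from Lemma \ref{lemma:1}(i) (the $L_2(Q)$ norm is dominated by the sup norm since $Q$ is a probability measure, and ${\rm TV}(g_0^{(k)})\le L_g$) establishes the first display.

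For the empirical bound I would apply $\tau_{\delta_0,R}$ to the pure nonparametric function $\hat{g}-g_0$ (i.e. with linear part zero). Its total-variation term satisfies ${\rm TV}(\hat{g}^{(k)}-g_0^{(k)})\le {\rm TV}(\hat{g}^{(k)}-\bar{g}^{(k)})+(a_k+1)L_g$ by Lemma \ref{lemma:1}(i) and ${\rm TV}(g_0^{(k)})\le L_g$, so using $\gamma\,{\rm TV}(\hat{g}^{(k)}-\bar{g}^{(k)})\le 20\delta_0 R^2$ its contribution to $\tau_{\delta_0,R}$ is at most $R+(a_k+1)L_g\gamma/(20\delta_0 R)$; its $\|\cdot\|$-contribution is the population bound just proved. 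Summing these shows precisely $\tau_{\delta_0,R}(\hat{g}-g_0)\le \tilde{R}$. Because $\tilde{R}\ge R$, the event $\mathcal{T}_0(\delta_0,R,\tilde{R})$ is a subevent of $\mathcal{T}_1(\delta_0,R)$, and on $\mathcal{T}_0(\delta_0,R,\tilde{R})$ the function $\hat{g}-g_0$ lies in the admissible set, giving $\big|\|\hat{g}-g_0\|_n^2-\|\hat{g}-g_0\|^2\big|\le \delta_0 R^2$. Subadditivity of the square root then yields $\|\hat{g}-g_0\|_n\le \|\hat{g}-g_0\|+\sqrt{\delta_0}R$, and inserting the first display produces the second.

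I do not expect a genuine obstacle: every step is a triangle inequality, the orthogonal decomposition, or an eigenvalue bound from Condition \ref{con:four}. The one point requiring care is the empirical transfer, where $\tilde{R}$ must be defined to exactly absorb both the enlarged TV term of $\hat{g}-g_0$ and its $L_2$ size while the deviation tolerance is held at $\delta_0 R^2$ rather than $\delta_0\tilde{R}^2$; matching the TV and $L_2$ contributions to the stated $\tilde{R}$ is the only delicate bookkeeping.
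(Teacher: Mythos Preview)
Your proposal is correct and follows essentially the same route as the paper's proof: the paper cites its Lemma \ref{basics}(i) to obtain $\|\hat{g}-\bar{g}\|\le(1+\sqrt{\Lambda_{max}/\Lambda_{min}})R$, whereas you spell out that orthogonal-decomposition argument inline, and the remaining steps---bridging $\bar{g}$ to $g_0$ via the sup-norm bound, verifying $\tau_{\delta_0,R}(\hat{g}-g_0)\le\tilde{R}$, and transferring to $\|\cdot\|_n$ through $\mathcal{T}_0$---match the paper exactly.
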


\begin{proof}
Part (ii) in Lemma \ref{lemma:1} together with Part (i) of Lemma \ref{basics} implies that $\|\hat{g}(z)-\bar{g}(z)\|  \leq (1+\sqrt{\frac{\Lambda_{max}}{\Lambda_{min}}})R$. As a result, combining it with Part (i) of Lemma \ref{lemma:1} gives
\[
\|\hat{g}(z)-g_0(z)\|\leq \|\hat{g}(z)-\bar{g}(z)\|+\|\bar{g}(z)-g_0(z)\|\leq  \Big(1+\sqrt{\frac{\Lambda_{max}}{\Lambda_{min}}}\Big)R+\frac{22b_k L_g \log n}{\ell_z n}. 
\]
To prove the second result, we first use results (i)(ii) of Lemma \ref{lemma:1} to bound
\begin{align*}
\frac{\gamma {\rm TV}(\hat{g}^{(k)}-g_0^{(k)})}{20\delta_0 R} \leq \frac{\gamma {\rm TV}(\hat{g}^{(k)}-\bar{g}^{(k)})}{20\delta_0 R}+\frac{\gamma {\rm TV}(\bar{g}^{(k)}-g_0^{(k)})}{20\delta_0 R} \leq R+\frac{(a_k+1)L_g\gamma}{20\delta_0R}.
\end{align*}
Combining the above with the first result on $\|\hat{g}(z)-g_0(z)\|$, we see that
\[
\tau_{\delta_0,R}(\hat{g}-g_0)=\frac{\gamma {\rm TV}(\hat{g}^{(k)}-g_0^{(k)})}{20\delta_0 R}+ \|\hat{g}(z)-g_0(z)\| \leq \tilde{R}. 
\]
Hence, on $\mathcal{T}_0(\delta_0, R, \tilde{R}) \cap \mathcal{T}_2(\delta_0, R) \cap \mathcal{T}_3$, we can use the bound on $\|\hat{g}(z)-g_0(z)\|$ to obtain the bound on $\|\hat{g}(z)-g_0(z)\|_n$:
\begin{align*}
\|\hat{g}(z)-g_0(z)\|_n\leq \sqrt{\|\hat{g}(z)-g_0(z)\|^2+\delta_0R^2}\leq \|\hat{g}(z)-g_0(z)\|+\sqrt{\delta_0}R.
\end{align*}
\end{proof}

\begin{lemma}
\label{hg:one}
Assume Conditions \ref{con:one}, \ref{con:three}, \ref{con:four} hold. Define 
\begin{align}
\label{kf:def}
K_{\mathcal{F}}=(1\vee C_{z,k})\Big(\frac{20\delta_0 R}{\gamma}+1+\sqrt{\frac{\Lambda_{max}}{\Lambda_{min}}}\Big)R, 
\end{align}
where $C_{z,k}>0$ is the constant in Lemma \ref{basics}. Suppose that
\begin{align*}
&\frac{\log p}{n}\leq 1,~~\delta_0\sqrt{\frac{\log p}{n}}\frac{R^2}{\lambda^2}\leq D_1,~~D_2 \delta_0^{\frac{-4k-4}{2k+3}}K_{\mathcal{F}}^2n^{\frac{-2k-2}{2k+3}} \leq R^2, \\
&D_3(1+K_{\mathcal{F}}^2)R^2\log R^{-1}\leq n\lambda^2, ~~D_4(\lambda^2 nR^{\frac{-2k-1}{k+1}}K_{\mathcal{F}}^{\frac{-1}{k+1}} \wedge \lambda \sqrt{n}R^{-1}) \geq \log n.
\end{align*}
Then, 
\begin{align*}
\mathbb{P}(\mathcal{T}_1(\delta_0, R))\geq 1-2p^{-10}-8pe^{-D_4(\lambda^2 nR^{\frac{-2k-1}{k+1}}K_{\mathcal{F}}^{\frac{-1}{k+1}} \wedge \lambda \sqrt{n}R^{-1})}-8pe^{-\frac{\delta_0^2R^2n}{D_5K_{\mathcal{F}}^2}}. 
\end{align*}
Further assume $\frac{1}{2}\delta_0R^2\geq D_1 \gamma, R\geq \frac{\log n}{n}$ where this constant $D_1$ is the one from Lemma \ref{lemma:1}. It holds that
\begin{align*}
\mathbb{P}(\mathcal{T}_0(\delta_0, R,\tilde{R}))\geq 1-2p^{-10}-8pe^{-D_4(\lambda^2 nR^{\frac{-2k-1}{k+1}}K_{\mathcal{F}}^{\frac{-1}{k+1}} \wedge \lambda \sqrt{n}R^{-1})}-8pe^{-\frac{\delta_0^2R^2n}{D_5K_{\mathcal{F}}^2}}. 
\end{align*}
\end{lemma}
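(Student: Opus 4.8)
The plan is to read the event $\mathcal{T}_1(\delta_0,R)$ as a uniform concentration statement for the centered empirical process $f\mapsto\|f\|_n^2-\|f\|^2$ over the class
\[
\mathcal{F}=\Big\{f(\bx,z)=\bx'\bbeta+g(z):\ \lambda\|\bbeta\|_1+\gamma\,{\rm TV}(g^{(k)})\le 20\delta_0R^2,\ \|f\|\le R\Big\},
\]
because $\tau_{\delta_0,R}(f)\le R$ forces exactly these two budgets. I would expand the fluctuation as
\[
\|f\|_n^2-\|f\|^2=\underbrace{\bbeta'(\hat\bSigma-\bSigma)\bbeta}_{\text{(A)}}+\underbrace{\tfrac{2}{n}\textstyle\sum_i(\bx_i'\bbeta)g(z_i)-2\mathbb{E}[(\bx'\bbeta)g(z)]}_{\text{(B)}}+\underbrace{\tfrac{1}{n}\textstyle\sum_ig^2(z_i)-\mathbb{E}g^2(z)}_{\text{(C)}},
\]
and bound each piece uniformly over $\mathcal{F}$. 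The envelope is supplied by $K_{\mathcal{F}}$ in \eqref{kf:def}: Lemma \ref{basics} (through the constant $C_{z,k}$) converts the budget ${\rm TV}(g^{(k)})\le20\delta_0R^2/\gamma$ and the $L_2$ radius of the $g$-component into the uniform sup-norm bound $\|g\|_\infty\le K_{\mathcal{F}}$ over $\mathcal{F}$.

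The linear term (A) is deterministic in $f$ once the design is fixed: $|\bbeta'(\hat\bSigma-\bSigma)\bbeta|\le\|\bbeta\|_1^2\,\|\hat\bSigma-\bSigma\|_{\max}$ with $\|\bbeta\|_1\le20\delta_0R^2/\lambda$ on $\mathcal{F}$. Under Condition \ref{con:one} the sub-Gaussian product concentration plus a union bound over the $p^2$ entries gives $\|\hat\bSigma-\bSigma\|_{\max}\lesssim\sqrt{\log p/n}$ with probability at least $1-p^{-10}$; together with the design-maximum event $\max_{i,j}|x_{ij}|\lesssim\sqrt{\log(np)}$ (used to truncate the unbounded covariates in (B)) these account for the $2p^{-10}$ term. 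Term (A) is then of order $\delta_0^2(R^2/\lambda)^2\sqrt{\log p/n}$, which is dominated by $\delta_0R^2$ exactly under the hypothesis $\delta_0\sqrt{\log p/n}\,R^2/\lambda^2\le D_1$.

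For (B) and (C) I would symmetrize, apply the contraction inequality (the map $u\mapsto u^2$ is $2K_{\mathcal{F}}$-Lipschitz on the envelope), and bound the expected supremum by Dudley's entropy integral over the total-variation ball, using the known entropy $\log N(\epsilon,\mathcal{V}_k)\lesssim(C/\epsilon)^{1/(k+1)}$; the resulting rate is of order $K_{\mathcal{F}}^2n^{-(2k+2)/(2k+3)}$ up to powers of $\delta_0$, and requiring it to be below $\delta_0R^2$ is precisely $D_2\delta_0^{-(4k+4)/(2k+3)}K_{\mathcal{F}}^2n^{-(2k+2)/(2k+3)}\le R^2$. For the tails I would apply Talagrand's inequality to the supremum on the truncated-design event, with variance proxy $K_{\mathcal{F}}^2R^2$: its Bernstein regime at deviation $\delta_0R^2$ yields $e^{-\delta_0^2R^2n/(D_5K_{\mathcal{F}}^2)}$, while the sub-Gaussian regime, combined with the $(k+1)$-type entropy and the peeling radius, yields the competing rate $\lambda^2nR^{(-2k-1)/(k+1)}K_{\mathcal{F}}^{-1/(k+1)}\wedge\lambda\sqrt{n}R^{-1}$; the outer union over the $p$ truncated coordinates appearing in (B) produces the $8p$ prefactors, and the remaining scaling hypotheses ($\log p/n\le1$, $D_3(1+K_{\mathcal{F}}^2)R^2\log R^{-1}\le n\lambda^2$, and the last displayed inequality) guarantee that the expected suprema stay below $\delta_0R^2$. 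I expect this step---matching the generic-chaining/Talagrand tail of the total-variation class to the exotic exponent with its minimum of two regimes, while simultaneously truncating the unbounded sub-Gaussian design at level $\sqrt{\log(np)}$---to be the main obstacle.

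Finally, for $\mathcal{T}_0(\delta_0,R,\tilde R)$ I would observe that, under the additional hypotheses $\tfrac12\delta_0R^2\ge D_1\gamma$ and $R\ge\log n/n$, the enlarged radius obeys $\tilde R\asymp R$, since the extra summands $(a_k+1)L_g\gamma/(20\delta_0R)$ and $22b_kL_g\ell_z^{-1}\log n/n$ are then $O(R)$. Consequently the penalty budget, the envelope $K_{\mathcal{F}}$, and every entropy integral inflate by at most absolute constants, and the identical three-part argument reproduces the same probability bound after relabeling the constants $D_j$.
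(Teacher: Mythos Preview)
Your proposal follows essentially the same route as the paper: the identical three-term decomposition, Bernstein plus a union bound for (A), the empirical-norm concentration of \cite{van2014uniform} with the TV-ball entropy $\mathcal{H}\lesssim\epsilon^{-1/(k+1)}$ for (C), and a per-coordinate empirical process argument for the cross term (B) after H\"older on $\|\bbeta\|_1$, with the $\mathcal{T}_0$ case reduced to $\mathcal{T}_1$ via $\tilde R\asymp R$.

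One correction worth making precise: the exotic exponent $\lambda^2nR^{-(2k+1)/(k+1)}K_{\mathcal{F}}^{-1/(k+1)}\wedge\lambda\sqrt{n}R^{-1}$ does not come from a ``sub-Gaussian regime of Talagrand at deviation $\delta_0R^2$.'' In the paper it is the value of $t$ chosen so that the conditional Dudley-plus-concentration bound for $\sup_{g\in\mathcal{F}}|\frac{1}{n}\sum_i x_{ij}g(z_i)|$ (on the event $\{\|\mathbf{X}_j\|_\infty\lesssim\sqrt{\log n}+\sqrt{t}\}$) falls below $c\lambda$, since (B) carries the prefactor $40\delta_0R^2/\lambda$; this is why $\lambda$ appears in the exponent at all. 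Relatedly, the hypothesis $D_3(1+K_{\mathcal{F}}^2)R^2\log R^{-1}\le n\lambda^2$ is not used to control an expected supremum but to ensure $Q_{\mathcal{F}}^2=\sup_{g\in\mathcal F}\mathbb{E}[x_j^2g^2(z)]\lesssim n\lambda^2$ (bounded via truncation of $x_j$ at level $\sqrt{\log R^{-1}}$), which is the precondition for the symmetrization step transferring the Rademacher bound to the centered process. Your contraction-on-$u^2$ idea applies to (C) but not to (B), which is linear in $g$ after the H\"older factorization.
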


\begin{proof}
For a given function $f(\bx,z)=\bx'\bbeta+g(z)$, it is clear that
\begin{align*}
\|f\|_n^2-\|f\|^2&=\frac{1}{n}\sum_{i=1}^n((\bx_i'\bbeta)^2-\|\bx'\bbeta\|^2)+\frac{1}{n}\sum_{i=1}^n(g^2(z_i)-\|g(z)\|^2)\\
&~~~~+\frac{2}{n}\sum_{i=1}^n(\bx_i'\bbeta g(z_i)-\mathbb{E}\bx'\bbeta g(z)),
\end{align*}
such that 
\begin{align*}
\sup_{\tau_{\delta_0,R}(f)\leq R}\Big|\|f\|_n^2-\|f\|^2\Big| \leq&  \sup_{\tau_{\delta_0,R}(f)\leq R} \Big| \frac{1}{n}\sum_{i=1}^n((\bx_i'\bbeta)^2-\|\bx'\bbeta\|^2)\Big |\\
&+ \sup_{\tau_{\delta_0,R}(f)\leq R} \Big| \frac{1}{n}\sum_{i=1}^n(g^2(z_i)-\|g(z)\|^2) \Big |\\
&+ \sup_{\tau_{\delta_0,R}(f)\leq R} \Big| \frac{2}{n}\sum_{i=1}^n(\bx_i'\bbeta g(z_i)-\mathbb{E}\bx'\bbeta g(z)) \Big |:= I+II+III.
\end{align*}
We now bound the above three terms separately. According to Lemma \ref{basics} Part (i), the $\bbeta$ in any $f(\bx,z)=\bx'\bbeta+g(z)$ with $\tau_{\delta_0,R}(f)\leq R$ satisfies
\begin{align}
\label{beta:def}
\bbeta \in \mathcal{B}:=\Big\{\bbeta: \|\bbeta\|_1\leq  20\delta_0R^2\lambda^{-1}, \|\bx'\bbeta\|\leq (2+\sqrt{\frac{\Lambda_{max}}{\Lambda_{min}}})R \Big\}. 
\end{align}
Therefore,
\begin{align}
\label{term1:key}
I\leq  \sup_{\bbeta\in \mathcal{B}} \Big| \frac{1}{n}\sum_{i=1}^n((\bx_i'\bbeta)^2-\|\bx'\bbeta\|^2)\Big |\leq \max_{1\leq j,k \leq p} \Big|\frac{1}{n}\sum_{i=1}^n(x_{ij}x_{ik}-\mathbb{E}x_jx_k)\Big| \cdot \sup_{\bbeta\in \mathcal{B}}\|\bbeta\|^2_1,
\end{align}
where in the last inequality we have used the fact $|\ba'\bA\ba| \leq \|\ba\|^2_1\cdot \|\bA\|_{\max}, \forall \ba\in \mathbb{R}^p, \bA\in \mathbb{R}^{p\times p}$. Condition \ref{con:one} implies that $\{x_{ij}x_{ik}-\mathbb{E}x_jx_k\}_{i=1}^n$ are independent, zero-mean, sub-exponential random variables with the sub-exponential norm $\|x_{ij}x_{ik}-\mathbb{E}x_jx_k\|_{\psi_1}\leq C_1K_x^2$. We then use Bernstein's inequality in Theorem \ref{hoeffding:quote} together with a simple union bound to obtain that when $\frac{\log p}{n}\leq 1$,
\[
\max_{1\leq j,k \leq p} \Big|\frac{1}{n}\sum_{i=1}^n(x_{ij}x_{ik}-\mathbb{E}x_jx_k)\Big| \leq C_2K_x^2\sqrt{\frac{\log p}{n}}
\]
holds with probability at least $1-2p^{-10}$. Putting this result together with \eqref{term1:key} enables us to conclude $\mathbb{P}(I\leq \delta_0R^2/3)\geq 1-2p^{-10}$, as long as $\delta_0\sqrt{\frac{\log p}{n}}\frac{R^2}{\lambda^2}\leq D_1$.

Now we bound $II$. Using Lemma \ref{basics} we have $II\leq \sup_{g\in \mathcal{F}} \big|\|g\|_n^2-\|g\|^2\big|$, where
\begin{align}
&\mathcal{F}:=\Big\{g: \|g\| \leq R_{\mathcal{F}}, {\rm TV}(g^{(k)})\leq 20\delta_0R^2\gamma^{-1}, \|g\|_{\infty}\leq K_{\mathcal{F}} \Big\}, \label{fset:def}\\
&R_{\mathcal{F}}=\Big(1+\sqrt{\frac{\Lambda_{max}}{\Lambda_{min}}}\Big)R, \quad K_{\mathcal{F}}=(1 \vee C_{z,k})\Big(\frac{20\delta_0 R}{\gamma}+1+\sqrt{\frac{\Lambda_{max}}{\Lambda_{min}}}\Big)R. \nonumber
\end{align}
We aim to apply Theorem \ref{emp:norm} to bound $\sup_{g\in \mathcal{F}} \big|\|g\|_n^2-\|g\|^2\big|$. Adopting the notation in Theorem \ref{emp:norm}, we first calculate $\mathcal{J}_0(t,\mathcal{F})$:
\begin{align}
\label{J0:term}
\mathcal{J}_0(t,\mathcal{F})\leq C_3t\int_0^1 \sqrt{C_4\Big(\frac{ut}{2K_{\mathcal{F}}}\Big)^{-\frac{1}{k+1}}} du=\underbrace{C_3\sqrt{C_4}K_{\mathcal{F}}^{\frac{1}{2k+2}}\frac{2^{\frac{1}{2k+2}}(2k+2)}{2k+1}}_{\bar{C}_3}\cdot t^{\frac{2k+1}{2k+2}},
\end{align}
where the first inequality is due to the entropy bound of Corollary 1 in \cite{sadhanala2019additive}. As a result, 
\begin{align*}
H(u)&=\sup_{t\geq 0} \big(ut-(\mathcal{J}^{-1}_0(t,\mathcal{F}))^2\big)\leq \sup_{t\geq 0} \big(ut-\bar{C}_3^{-\frac{4k+4}{2k+1}}t^{\frac{4k+4}{2k+1}} \big) \\
&=\Big(\frac{2k+1}{4k+4}\Big)^{\frac{4k+4}{2k+3}}\frac{2k+3}{2k+1}\bar{C}_3^{\frac{4k+4}{2k+3}}u^{\frac{4k+4}{2k+3}}=D_2K_{\mathcal{F}}^{\frac{2}{2k+3}}u^{\frac{4k+4}{2k+3}}.
\end{align*}
We are ready to invoke Theorem \ref{emp:norm}. The condition $R_{\mathcal{F}}^2\geq H(4K_{\mathcal{F}}/\sqrt{n})$ is reduced to 
\begin{align}
\label{init:con}
R^2\geq D_3 K_{\mathcal{F}}^2 n^{\frac{-2k-2}{2k+3}},
\end{align}
and under this condition it holds with probability at least $1-e^{-t}$ that
\begin{align*}
\sup_{g\in \mathcal{F}} \Big|\|g\|_n^2-\|g\|^2\Big| \leq D_4\cdot \Big(\frac{ K_{\mathcal{F}}^{\frac{2k+3}{2k+2}}R^{\frac{2k+1}{2k+2}}}{\sqrt{n}}+\frac{ K_{\mathcal{F}} R \sqrt{t}}{\sqrt{n}}+\frac{ K_{\mathcal{F}}^2 t}{n}\Big).
\end{align*}
Given the above, choosing $t=\frac{\delta_0^2R^2n}{D_5K_{\mathcal{F}}^2}$ and assuming a slightly stronger condition $R^2\geq D_6\delta_0^{\frac{-4k-4}{2k+3}} K_{\mathcal{F}}^2 n^{\frac{-2k-2}{2k+3}}$ compared to \eqref{init:con}, it is then straightforward to verify that $\mathbb{P}(II\leq \delta_0R^2/3)\geq 1-e^{-\frac{\delta_0^2R^2n}{D_5K_{\mathcal{F}}^2}}$.

Next we bound $III$. We first use H\"{o}lder's inequality and Lemma \ref{basics} to obtain
\begin{align*}
III & \leq \sup_{\tau_{\delta_0,R}(f)\leq R} 2\|\beta\|_1\cdot \Big\|\frac{1}{n}\sum_{i=1}^n(\bx_ig(z_i)-\mathbb{E}\bx g(z))\Big\|_{\infty} \\
& \leq \frac{40\delta_0R^2}{\lambda} \cdot \sup_{g\in \mathcal{F}, 1\leq j \leq p} \Big|\frac{1}{n}\sum_{i=1}^n(x_{ij}g(z_i)-\mathbb{E}x_jg(z))\Big|.
\end{align*}
We then aim to use Theorem \ref{emp:sc} to bound $\sup_{g\in \mathcal{F}}|\frac{1}{n}\sum_{i=1}^n(x_{ij}g(z_i)-\mathbb{E}x_jg(z))|$ for each $1\leq j \leq p$. The quantities $Q_{ \mathcal{F}}, \hat{Q}_{ \mathcal{F}}$ in Theorem \ref{emp:sc} become $Q_{ \mathcal{F}}=\sup_{g\in  \mathcal{F}}\sqrt{\mathbb{E}(x_j^2g^2(z))}, \hat{Q}_{ \mathcal{F}}=\sup_{g\in  \mathcal{F}}\sqrt{\frac{1}{n}\sum_{i=1}^nx^2_{ij}g^2(z_i)}$. Let $\hat{R}_{ \mathcal{F}}=\sup_{g\in  \mathcal{F}}\|g\|_n, \mathbf{X}_j=(x_{1j},\ldots, x_{nj}), \mathbf{Z}=(z_1,\ldots, z_n)$, and $\epsilon_1,\ldots, \epsilon_n$ be a Rademacher sequence independent of $\mathbf{X}_j,\mathbf{Z}$.  Applying Dudley's entropy integral gives
\begin{align}
&\mathbb{E}\Big(\sup_{g\in \mathcal{F}}|\frac{1}{n}\sum_{i=1}^n\epsilon_ix_{ij}g(z_i)| \Big| \mathbf{X}_j, \mathbf{Z}\Big) \leq C_5 \frac{\hat{Q}_{ \mathcal{F}}}{\sqrt{n}}\int_0^1 \sqrt{\mathcal{H}(u \hat{Q}_{ \mathcal{F}}, \mathcal{F}, \|\cdot\|^w_n)} du, \nonumber \\
\leq&   C_5 \frac{\hat{Q}_{ \mathcal{F}}}{\sqrt{n}}\int_0^1 \sqrt{\mathcal{H}(u \hat{Q}_{ \mathcal{F}}\|\mathbf{X}_j\|_{\infty}^{-1}, \mathcal{F}, \|\cdot\|_n)} du\leq \frac{C_5\|\mathbf{X}_j\|_{\infty}}{2C_3\sqrt{n}}\mathcal{J}_0(2\hat{Q}_{ \mathcal{F}}\|\mathbf{X}_j\|_{\infty}^{-1}, \mathcal{F}), \nonumber \\
\leq & D_7 n^{-1/2} K_{\mathcal{F}}^{\frac{1}{2k+2}}\hat{Q}_{ \mathcal{F}}^{\frac{2k+1}{2k+2}}\|\mathbf{X}_j\|_{\infty}^{\frac{1}{2k+2}}\leq D_7 n^{-1/2} K_{\mathcal{F}}^{\frac{1}{2k+2}}\hat{R}_{ \mathcal{F}}^{\frac{2k+1}{2k+2}}\|\mathbf{X}_j\|_{\infty}, \label{ddy:1}
\end{align}
where $\mathcal{J}_0(\cdot, \mathcal{F})$ is from \eqref{J0:term} and the second to last inequality is due to \eqref{J0:term}; $\|\cdot\|^w_n$ denotes the weighted empirical norm $\|g\|^w_n=\sqrt{\frac{1}{n}\sum_{i=1}^nx^2_{ij}g^2(z_i)}$ and it is clear that $\|g\|^w_n\leq \|\mathbf{X}_j\|_{\infty} \|g\|_n$ and $ \hat{Q}_{ \mathcal{F}}\leq \|\mathbf{X}_j\|_{\infty} \hat{R}_{ \mathcal{F}}$. Now applying the second result in Theorem \ref{emp:sc} yields 
\begin{align}
\label{thm4:1}
\mathbb{P}\Big(\sup_{g\in \mathcal{F}}|\frac{1}{n}\sum_{i=1}^n\epsilon_ix_{ij}g(z_i)|\geq C_6 \|\mathbf{X}_j\|_{\infty} \Big(D_7 n^{-1/2} K_{\mathcal{F}}^{\frac{1}{2k+2}}\hat{R}_{ \mathcal{F}}^{\frac{2k+1}{2k+2}}+\hat{R}_{ \mathcal{F}}\sqrt{t/n}\Big)\Big|  \mathbf{X}_j, \mathbf{Z}\Big) \leq e^{-t}.
\end{align}
Define the events 
\[
\mathcal{A}_1=\Big\{\|\mathbf{X}_j\|_{\infty} \leq C_7K_x(\sqrt{\log n}+\sqrt{t})\Big\},~~ \mathcal{A}_2=\Big\{\sup_{g\in \mathcal{F}} |\|g\|_n^2-\|g\|^2|\leq \delta_0R^2/3\Big\}.
\]
Condition \ref{con:one} together with standard union bound for sub-Gaussian tails gives $\mathbb{P}(\mathcal{A}_1^c)\leq e^{-t}$ (with a proper choice of $C_7$). The previous result on term $II$ shows $\mathbb{P}(\mathcal{A}_2^c)\leq e^{-\frac{\delta_0^2R^2n}{D_5K_{\mathcal{F}}^2}}$, and it holds on $\mathcal{A}_2$ that 
\begin{align}
\hat{R}_{ \mathcal{F}}\leq \sqrt{\sup_{g\in \mathcal{F}} |\|g\|_n^2-\|g\|^2|+\sup_{g\in \mathcal{F}} \|g\|^2}\leq \sqrt{\delta_0/3+(1+\sqrt{\Lambda_{max}/\Lambda_{min}})^2}R. \label{hat:R}
\end{align}
 As a result, we can further bound \eqref{thm4:1} on $\mathcal{A}_1\cap \mathcal{A}_2$ to arrive at
\begin{align*}
\mathbb{P}\Big(\sup_{g\in \mathcal{F}}|\frac{1}{n}\sum_{i=1}^n\epsilon_ix_{ij}g(z_i)|\geq D_8(\sqrt{\log n}+\sqrt{t}) \Big(n^{-1/2} K_{\mathcal{F}}^{\frac{1}{2k+2}}R^{\frac{2k+1}{2k+2}}+R\sqrt{t/n}\Big) \Big|  \mathbf{X}_j, \mathbf{Z} \Big)\cdot \mathbbm{1}_{\mathcal{A}_1\cap \mathcal{A}_2} \leq e^{-t}
\end{align*}
We integrate out the above conditional probability to obtain the unconditional one:
\begin{align*}
&\mathbb{P}\Big(\sup_{g\in \mathcal{F}}|\frac{1}{n}\sum_{i=1}^n\epsilon_ix_{ij}g(z_i)|\geq D_8(\sqrt{\log n}+\sqrt{t}) \big( n^{-1/2} K_{\mathcal{F}}^{\frac{1}{2k+2}}R^{\frac{2k+1}{2k+2}}+R\sqrt{t/n}\big) \Big) \\
\leq & ~e^{-t}+\mathbb{P}(\mathcal{A}_1^c\cup \mathcal{A}_2^c)\leq 2e^{-t}+e^{-\frac{\delta_0^2R^2n}{D_5K_{\mathcal{F}}^2}}.
\end{align*}
It is then direct to verify that choosing $t= D_9(\lambda^2 nR^{\frac{-2k-1}{k+1}}K_{\mathcal{F}}^{\frac{-1}{k+1}} \wedge \lambda \sqrt{n}R^{-1})\geq \log n $ gives
\begin{align}
\label{sym:form}
\mathbb{P}\Big(\sup_{g\in \mathcal{F}}|\frac{1}{n}\sum_{i=1}^n\epsilon_ix_{ij}g(z_i)|\geq \frac{\lambda}{480}\Big)\leq 2e^{-D_9(\lambda^2 nR^{\frac{-2k-1}{k+1}}K_{\mathcal{F}}^{\frac{-1}{k+1}} \wedge \lambda \sqrt{n}R^{-1})}+e^{-\frac{\delta_0^2R^2n}{D_5K_{\mathcal{F}}^2}}.
\end{align}
We will invoke the symmetrization result of Theorem \ref{emp:sc} to transfer the above bound to $\sup_{g\in \mathcal{F}}|\frac{1}{n}\sum_{i=1}^n(x_{ij}g(z_i)-\mathbb{E}x_jg(z))|$. In order to do so, we need to show $8\times 480^2 Q_{\mathcal{F}}^2\leq n \lambda^2$. Note that 
\begin{align*}
Q_{\mathcal{F}}^2&=\sup_{g\in \mathcal{F}}\mathbb{E}x_j^2g^2(z)=\sup_{g\in \mathcal{F}}\big(\mathbb{E}x_j^2g^2(z)\mathbbm{1}(|x_j|\leq s)+\mathbb{E}x_j^2g^2(z)\mathbbm{1}(|x_j|> s)\big) \\
&\leq  s^2 \sup_{g\in \mathcal{F}} \|g\|^2 + \sup_{g\in \mathcal{F}}\|g\|_{\infty}^2 \cdot  \mathbb{E}x_j^2\mathbbm{1}(|x_j|>s) \\
&\leq s^2\Big(1+\sqrt{\frac{\Lambda_{max}}{\Lambda_{miin}}}\Big)^2R^2+K^2_{\mathcal{F}} \cdot \mathbb{E}x_j^2\mathbbm{1}(|x_j|>s).
\end{align*}
Using the sub-Gaussian tail bound for $x_j$ to bound $\mathbb{E}x_j^2\mathbf{1}(|x_j|>s)$, it is not hard to obtain that $Q_{\mathcal{F}}^2\leq D_{10}(1+ K^2_{\mathcal{F}})R^2\log R^{-1}$ with the choice $s=D_{11} \sqrt{\log R^{-1}}$. Therefore, under the condition $D_{12}(1+ K^2_{\mathcal{F}})R^2\log R^{-1}\leq n\lambda^2$, Theorem \ref{emp:sc} together with \eqref{sym:form} leads to 
\[
\mathbb{P}\Big(\sup_{g\in \mathcal{F}}|\frac{1}{n}\sum_{i=1}^n(x_{ij}g(z_i)-\mathbb{E}x_jg(z))| \geq\frac{\lambda}{120}\Big)\leq 8e^{-D_9(\lambda^2 nR^{\frac{-2k-1}{k+1}}K_{\mathcal{F}}^{\frac{-1}{k+1}} \wedge \lambda \sqrt{n}R^{-1})}+4e^{-\frac{\delta_0^2R^2n}{D_5K_{\mathcal{F}}^2}}.
\]
We follow via a union bound,
\begin{align*}
&\mathbb{P}(III \geq \delta_0R^2/3)\leq \mathbb{P}\Big(\sup_{g\in \mathcal{F}, 1\leq j \leq p} \Big|\frac{1}{n}\sum_{i=1}^n(x_{ij}g(z_i)-\mathbb{E}x_jg(z))\Big|\geq \frac{\lambda}{120}\Big) \\
\leq& ~p \cdot \max_{1\leq j \leq p} \mathbb{P}\Big(\sup_{g\in \mathcal{F}}|\frac{1}{n}\sum_{i=1}^n(x_{ij}g(z_i)-\mathbb{E}x_jg(z))| \geq\frac{\lambda}{120}\Big) \\
\leq &~8pe^{-D_9(\lambda^2 nR^{\frac{-2k-1}{k+1}}K_{\mathcal{F}}^{\frac{-1}{k+1}} \wedge \lambda \sqrt{n}R^{-1})}+4pe^{-\frac{\delta_0^2R^2n}{D_5K_{\mathcal{F}}^2}}.
\end{align*}
Finally, we collect the results on $I,II,III$ to bound $\mathbb{P}(\mathcal{T}_1(\delta_0, R))$ via $\mathbb{P}(\mathcal{T}_1(\delta_0, R))\geq 1-\mathbb{P}(I\geq \delta_0R^2/3)-\mathbb{P}(II\geq \delta_0R^2/3)-\mathbb{P}(III\geq \delta_0R^2/3)$.

Regarding the bound for $\mathbb{P}(\mathcal{T}_0(\delta_0, R,\tilde{R}))$, first note that $2R\leq \tilde{R}\leq D_{12}R$ under the assumption $\frac{1}{2}\delta_0R^2\geq D_1 \gamma, R\geq \frac{\log n}{n}$. Moreover, $\{f:\tau_{\delta_0,R}(f)\leq \tilde{R}\}\subset \{f:\tau_{\delta_0,\tilde{R}}(f)\leq \tilde{R}\}$. Hence,
\[
\mathbb{P}(\mathcal{T}_0(\delta_0, R,\tilde{R})) \geq \mathbb{P}\Big(\sup_{\tau_{\delta_0,\tilde{R}}(f)\leq \tilde{R}}\Big|\|f\|_n^2-\|f\|^2\Big| \leq D_{12}^{-2}\delta_0 \tilde{R}^2\Big).
\]
Essentially, all the previous arguments in bounding $\mathbb{P}(\mathcal{T}_1(\delta_0, R))$ carry over into $\mathbb{P}(\mathcal{T}_0(\delta_0, R,\tilde{R}))$, up to constants $D_j$'s. We only need to update those constants in the conditions and results, in a way so that $\mathbb{P}(\mathcal{T}_1(\delta_0, R))$ and $\mathbb{P}(\mathcal{T}_0(\delta_0, R,\tilde{R}))$ share the same constants (e.g., taking a minimum or maximum among constants).
\end{proof}

\begin{lemma}
\label{hp:event2}
Assume Conditions \ref{con:one}-\ref{con:four} hold.
\begin{itemize}
\item[(i)] When $R^2\geq D_1 \delta_0^{\frac{-4k-4}{2k+3}} n^{\frac{-2k-2}{2k+3}}(K_{\mathcal{F}}^2+K_{\mathcal{F}}^{\frac{2}{2k+3}}\sigma^{\frac{4k+4}{2k+3}})$ where $K_{\mathcal{F}}$ is defined in \eqref{kf:def} of Lemma \ref{hg:one}, then 
\begin{align*}
\mathbb{P}\Big( \mathcal{T}_2(\delta_0, R)\Big) \geq 1-2pe^{-C_1n\min\big(\frac{\lambda^2}{(K_x K_{\varepsilon}\sigma)^2}, \frac{\lambda}{K_xK_{\varepsilon}\sigma}\big)}-2e^{-D_2 \sigma^{-2}\delta_0^2nR^2}-e^{-\frac{\delta_0^2R^2n}{D_3 K_{\mathcal{F}}^2}}.
\end{align*}
\item[(ii)] As long as $\frac{11\log n}{\ell_z n}<1$, then $\mathbb{P}\big( \mathcal{T}_3 \big) \geq 1-2\ell_z n^{-10}$.
\end{itemize}
\end{lemma}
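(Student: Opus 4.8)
The plan is to handle the two parts separately, with part (i)---the concentration of the multiplier empirical process defining $\mathcal{T}_2(\delta_0,R)$---being the substantial one, and part (ii) a standard maximal-spacing estimate. For part (i), I would first split the process along the parametric and nonparametric directions. Writing $\langle\beps,f\rangle_n=\bbeta'(\frac1n\sum_i\varepsilon_i\bx_i)+\frac1n\sum_i\varepsilon_i g(z_i)$ and invoking Lemma \ref{basics} Part (i) exactly as in the proof of Lemma \ref{hg:one}, the constraint $\tau_{\delta_0,R}(f)\le R$ forces $\bbeta\in\mathcal{B}$ and $g\in\mathcal{F}$ as in \eqref{beta:def} and \eqref{fset:def}, so relaxing to the product set gives
\[
\sup_{\tau_{\delta_0,R}(f)\le R}\big|\langle\beps,f\rangle_n\big| \le \frac{20\delta_0 R^2}{\lambda}\Big\|\frac1n\sum_{i=1}^n\varepsilon_i\bx_i\Big\|_\infty + \sup_{g\in\mathcal{F}}\Big|\frac1n\sum_{i=1}^n\varepsilon_i g(z_i)\Big| =: I + II.
\]
It then suffices to show that each of $I,II$ exceeds $\delta_0 R^2/2$ only with the claimed probabilities. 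For $I$, under Conditions \ref{con:one}--\ref{con:two} the variables $\varepsilon_i x_{ij}$ are products of independent sub-Gaussians, hence sub-exponential with $\|\varepsilon_i x_{ij}\|_{\psi_1}\lesssim K_x K_{\varepsilon}\sigma$; Bernstein's inequality (Theorem \ref{hoeffding:quote}) and a union bound over $j\le p$ give $\|\frac1n\sum_i\varepsilon_i\bx_i\|_\infty\le\lambda/40$ off an event of probability $2p\,e^{-C_1 n\min(\lambda^2/(K_xK_{\varepsilon}\sigma)^2,\,\lambda/(K_xK_{\varepsilon}\sigma))}$, which forces $I\le\delta_0 R^2/2$ and yields the first term of the bound.

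The hard part is term $II$. Conditionally on $\mathbf{Z}$, $\frac1n\sum_i\varepsilon_i g(z_i)$ is a sub-Gaussian multiplier process indexed by $g\in\mathcal{F}$, which I would bound by Dudley's entropy integral against the empirical $L_2$ metric, reusing the entropy bound $\mathcal{H}(u,\mathcal{F},\|\cdot\|_n)\lesssim (K_{\mathcal{F}}/u)^{1/(k+1)}$ from Corollary 1 in \cite{sadhanala2019additive} that already produced $\mathcal{J}_0$ in \eqref{J0:term}. This gives conditional mean $\lesssim \sigma n^{-1/2}K_{\mathcal{F}}^{1/(2k+2)}\hat{R}_{\mathcal{F}}^{(2k+1)/(2k+2)}$ with $\hat{R}_{\mathcal{F}}=\sup_{g\in\mathcal{F}}\|g\|_n$, which a sub-Gaussian concentration step (Theorem \ref{emp:sc}) upgrades to a tail bound with deviation term $\sigma\hat{R}_{\mathcal{F}}\sqrt{t/n}$. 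On the empirical-norm event $\{\sup_{g\in\mathcal{F}}|\|g\|_n^2-\|g\|^2|\le\delta_0 R^2/3\}$ one has $\hat{R}_{\mathcal{F}}\lesssim R$ exactly as in \eqref{hat:R}, and its complement contributes $e^{-\delta_0^2 R^2 n/(D_3 K_{\mathcal{F}}^2)}$, the third term; taking $t\asymp \sigma^{-2}\delta_0^2 n R^2$ makes the deviation term $\le\delta_0 R^2/2$ and produces $2e^{-D_2\sigma^{-2}\delta_0^2 n R^2}$. Finally, forcing the mean term $\lesssim\delta_0 R^2$ is precisely the hypothesis $R^2\ge D_1\delta_0^{-(4k+4)/(2k+3)}n^{-(2k+2)/(2k+3)}\big(K_{\mathcal{F}}^2+K_{\mathcal{F}}^{2/(2k+3)}\sigma^{(4k+4)/(2k+3)}\big)$: the $\sigma$-term is what balances the entropy integral against $\delta_0 R^2$, while the $K_{\mathcal{F}}^2$-term is what Theorem \ref{emp:norm} needs to guarantee $\hat{R}_{\mathcal{F}}\lesssim R$. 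A union bound over the three exceptional events then proves (i).

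For part (ii) I would use a covering argument. Partition $[0,1]$ into $M\le \ell_z n/(11\log n)+1$ intervals of length $h=\frac{11\log n}{\ell_z n}$, which is well-defined since $\frac{11\log n}{\ell_z n}<1$ makes $h<1$. If some gap $z_{(i+1)}-z_{(i)}$ exceeds $\frac{22\log n}{\ell_z n}=2h$, it must contain one such interval entirely, so that interval is empty of sample points; by Condition \ref{con:three} each interval carries mass at least $\ell_z h=\frac{11\log n}{n}$, whence the probability it is empty is at most $(1-\frac{11\log n}{n})^n\le n^{-11}$. A union bound gives $\mathbb{P}(\mathcal{T}_3^c)\le Mn^{-11}\le 2\ell_z n^{-10}$, where the condition $\frac{11\log n}{\ell_z n}<1$ is exactly what lets the lower-order $n^{-11}$ term be absorbed into $\ell_z n^{-10}$ (since it forces $\ell_z> \frac{11\log n}{n}$, hence $n^{-11}<\ell_z n^{-10}$).
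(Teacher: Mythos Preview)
Your proposal for part (i) is essentially the paper's own proof: the same split into the linear term $I$ and the nonparametric term $II$ via $\bbeta\in\mathcal{B}$, $g\in\mathcal{F}$, Bernstein plus a union bound for $I$, and a conditional Dudley-type argument for $II$ combined with the control of $\hat R_{\mathcal{F}}$ borrowed from the analysis of $\mathcal{T}_1$. One small point: for the tail of $II$ the paper invokes Theorem \ref{ddy:2} (Dudley's integral tail bound for sub-Gaussian increments), which produces the entropy-integral term and the $\sqrt{t/n}$ deviation in a single stroke; Theorem \ref{emp:sc}, which you cite, is stated for Rademacher multipliers and does not literally apply to sub-Gaussian $\varepsilon_i$. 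This is a citation slip rather than a gap in the approach---the ``sub-Gaussian concentration step'' you describe exists, and Theorem \ref{ddy:2} is precisely that tool.

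For part (ii) you take a different and more self-contained route. The paper simply quotes Lemma 5 of \cite{wang2014falling} for the maximal-spacing bound, whereas you partition $[0,1]$ into intervals of length $h=11\log n/(\ell_z n)$, argue that a gap exceeding $2h$ forces one such interval to be empty, and bound the emptiness probability via $(1-\ell_z h)^n\le n^{-11}$. Your argument is correct (the hypothesis $\frac{11\log n}{\ell_z n}<1$ indeed gives $\ell_z>1/n$, so $n^{-11}<\ell_z n^{-10}$ and the union bound closes to $2\ell_z n^{-10}$); it is slightly more informative than the paper's citation, since it makes transparent how $\ell_z$ and the constant $11$ enter.
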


\begin{proof}
Recalling the sets $\mathcal{B}$ in \eqref{beta:def} and $\mathcal{F}$ in \eqref{fset:def} from the proof of Lemma \ref{hg:one}, we have
\begin{align*}
\sup_{\tau_{\delta_0,R}(f)\leq R}\Big|\langle \beps, f(\bX, \bZ)\rangle_n\Big| & \leq \sup_{\tau_{\delta_0,R}(f)\leq R} \Big|\frac{1}{n}\sum_{i=1}^n\varepsilon_i\bx_i'\bbeta\Big|+ \sup_{\tau_{\delta_0,R}(f)\leq R} \Big|\frac{1}{n}\sum_{i=1}^n\varepsilon_ig(z_i)\Big| \\
 & \leq  \sup_{\bbeta\in \mathcal{B}}\|\bbeta\|_1 \cdot   \Big\|\frac{1}{n}\sum_{i=1}^n\bx_i\varepsilon_i\Big\|_{\infty}+ \sup_{g\in \mathcal{F}}\Big|\frac{1}{n}\sum_{i=1}^n\varepsilon_ig(z_i)\Big| \\
 & \leq  \frac{20\delta_0R^2}{\lambda}\Big\|\frac{1}{n}\sum_{i=1}^n\bx_i\varepsilon_i\Big\|_{\infty}+ \sup_{g\in \mathcal{F}}\Big|\frac{1}{n}\sum_{i=1}^n\varepsilon_ig(z_i)\Big|:=I+II.
\end{align*}
We first bound term $I$. Using Bernstein's inequality in Theorem \ref{hoeffding:quote} we obtain that $\mathbb{P}(|\frac{1}{n}\sum_{i=1}^n\varepsilon_ix_{ij}|\geq \frac{\lambda}{40})\leq 2e^{-C_1n\min\big(\frac{\lambda^2}{(K_xK_{\varepsilon}\sigma)^2}, \frac{\lambda}{K_xK_{\varepsilon}\sigma}\big)}$. This combined with a union bound gives
\begin{align*}
\mathbb{P}(I\leq \delta_0R^2/2)\geq \mathbb{P}\Big(\Big\|\frac{1}{n}\sum_{i=1}^n\bx_i\varepsilon_i\Big\|_{\infty} \leq \frac{\lambda}{40}\Big)\geq 1-2pe^{-C_1n\min\big(\frac{\lambda^2}{(K_x K_{\varepsilon}\sigma)^2}, \frac{\lambda}{K_xK_{\varepsilon}\sigma}\big)}.
\end{align*}
To bound term $II$, let $\hat{R}_{\mathcal{F}}=\sup_{g\in \mathcal{F}}\|g\|_n$. Conditional on $z_1,\ldots,z_n$ (which are independent from $\varepsilon_1,\ldots, \varepsilon_n$), Theorem \ref{ddy:2} implies that the following holds with probability at least $1-2e^{-t}$:
\begin{align*}
II &\leq C_2K_{\varepsilon}\sigma\cdot \Bigg(\frac{\hat{R}_{\mathcal{F}}}{\sqrt{n}}\int_0^1\sqrt{\mathcal{H}(u\hat{R}_{\mathcal{F}},\mathcal{F},\|\cdot\|_n)}du+\hat{R}_{\mathcal{F}}\sqrt{\frac{t}{n}}\Bigg), \\
&\leq C_2K_{\varepsilon}\sigma\cdot \Big(D_1n^{-1/2}K_{\mathcal{F}}^{\frac{1}{2k+2}}\hat{R}_{\mathcal{F}}^{\frac{2k+1}{2k+2}}+\hat{R}_{\mathcal{F}}\sqrt{t/n}\Big),
\end{align*}
 where the last inequality follows as in \eqref{ddy:1}. Integrating over $z_1,\ldots, z_n$, the above holds marginally with probability at least $1-2e^{-t}$ as well. Moreover, from \eqref{hat:R} we already know that, 
\[
\mathbb{P}(\hat{R}_{\mathcal{F}}\leq D_3R)\geq 1-e^{-\frac{\delta_0^2R^2n}{D_4K_{\mathcal{F}}^2}}, \text{~~~when~} R^2\geq D_2\delta_0^{\frac{-4k-4}{2k+3}} K_{\mathcal{F}}^2 n^{\frac{-2k-2}{2k+3}}.
\]
Combining these two results via a union bound gives us
 \begin{align*}
 \mathbb{P}\Big(II\leq D_5 \sigma \Big(n^{-1/2}K_{\mathcal{F}}^{\frac{1}{2k+2}}R^{\frac{2k+1}{2k+2}}+R\sqrt{t/n}\Big)\Big) \geq 1-2e^{-t}-e^{-\frac{\delta_0^2R^2n}{D_4K_{\mathcal{F}}^2}}.
 \end{align*}
Choosing $t=D_6 \sigma^{-2}\delta_0^2nR^2$ and assuming $R^2\geq D_7 \delta_0^{\frac{-4k-4}{2k+3}} n^{\frac{-2k-2}{2k+3}}(K_{\mathcal{F}}^2+K_{\mathcal{F}}^{\frac{2}{2k+3}}\sigma^{\frac{4k+4}{2k+3}})$, it is direct to confirm
\[
\mathbb{P}(II\leq \delta_0R^2/2)\geq 1- 2e^{-D_6 \sigma^{-2}\delta_0^2nR^2}-e^{-\frac{\delta_0^2R^2n}{D_4K_{\mathcal{F}}^2}}.
\]
Combining the bounds for $I,II$ gives the bound for 
\[
\mathbb{P}( \mathcal{T}_2(\delta_0, R))\geq \mathbb{P}(I\leq \delta_0R^2/2, II\leq \delta_0R^2/2)\geq 1-\mathbb{P}(I > \delta_0R^2/2)-\mathbb{P}(II> \delta_0R^2/2).
\]
Finally, the bound on $\mathbb{P}\big( \mathcal{T}_3 \big)$ is directly taken from Lemma 5 in \cite{wang2014falling}. 
\end{proof}

\begin{lemma}
\label{basics}
Consider any $f(\bx,z)=\bx'\bbeta+g(z)$ that satisfies $\tau_{\delta_0,R}(f)\leq R$.
\begin{itemize}
\item[(i)] Under Condition \ref{con:four} it holds that
\begin{align*}
&\|\bbeta\|_2\leq \Lambda_{min}^{-1/2}R,~~ \|\bbeta\|_1\leq \frac{20\delta_0R^2}{\lambda},~~ \|\bx'\bbeta\|\leq \Big(2+\sqrt{\frac{\Lambda_{max}}{\Lambda_{min}}}\Big)R,\\
&\|g\|\leq \Big(1+\sqrt{\frac{\Lambda_{max}}{\Lambda_{min}}}\Big)R,~~ {\rm TV}(g^{(k)})\leq \frac{20\delta_0 R^2}{\gamma}.
\end{align*}
\item[(ii)] Under additional Condition \ref{con:three}, it holds that 
\[
\|g\|_{\infty}\leq C_{z,k}\Big(\frac{20\delta_0 R}{\gamma}+1+\sqrt{\frac{\Lambda_{max}}{\Lambda_{min}}}\Big)R, 
\]
where $C_{z,k}>0$ is a constant only dependent on $k$ and $\ell_z$.
\end{itemize}
\end{lemma}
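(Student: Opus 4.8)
The plan is to read off every bound from the single constraint $\tau_{\delta_0,R}(f)\le R$. Because both summands in \eqref{funl:def} are nonnegative, this constraint splits at once into $\lambda\|\bbeta\|_1+\gamma{\rm TV}(g^{(k)})\le 20\delta_0R^2$ and $\|\bx'\bbeta+g(z)\|\le R$. Dropping either nonnegative term in the first inequality gives $\|\bbeta\|_1\le 20\delta_0R^2/\lambda$ and ${\rm TV}(g^{(k)})\le 20\delta_0R^2/\gamma$, which are already two of the Part (i) bounds. The remaining three will come from the second inequality combined with the orthogonal decomposition in the $L_2(Q)$ inner product (the same one used in the proof of Lemma \ref{lemma:1}): writing $\tilde{\bx}=\bx-\bh(z)$, so that $\mathbb{E}(\tilde{\bx}\mid z)=0$, the cross term vanishes upon conditioning on $z$, giving
\begin{align*}
\|\bx'\bbeta+g(z)\|^2=\|\tilde{\bx}'\bbeta\|^2+\|\bh(z)'\bbeta+g(z)\|^2\le R^2.
\end{align*}

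In particular each summand is at most $R^2$, so $\|\tilde{\bx}'\bbeta\|\le R$ and $\|\bh(z)'\bbeta+g(z)\|\le R$. I would then chain the Condition \ref{con:four} eigenvalue bounds in a fixed order. From $\lambda_{\min}(\mathbb{E}\tilde{\bx}\tilde{\bx}')\ge\Lambda_{min}$ and $\|\tilde{\bx}'\bbeta\|^2=\bbeta'(\mathbb{E}\tilde{\bx}\tilde{\bx}')\bbeta$ I obtain $\|\bbeta\|_2\le\Lambda_{min}^{-1/2}R$; feeding this into $\lambda_{\max}(\mathbb{E}\bh(z)\bh(z)')\le\Lambda_{max}$ gives $\|\bh(z)'\bbeta\|\le\sqrt{\Lambda_{max}/\Lambda_{min}}\,R$. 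A triangle inequality applied to $\|\bh(z)'\bbeta+g(z)\|\le R$ then yields $\|g\|\le(1+\sqrt{\Lambda_{max}/\Lambda_{min}})R$, and the further triangle inequality $\|\bx'\bbeta\|\le\|\bx'\bbeta+g(z)\|+\|g\|$ yields $\|\bx'\bbeta\|\le(2+\sqrt{\Lambda_{max}/\Lambda_{min}})R$. This precise order of operations is what reproduces the stated constants.

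For Part (ii) the idea is to convert the two controlled quantities ${\rm TV}(g^{(k)})$ and $\|g\|$ into a sup-norm bound via a deterministic one-dimensional Sobolev-type embedding, absorbing the density lower bound of Condition \ref{con:three}. Since $z$ has density at least $\ell_z$ on $[0,1]$, we have $\|g\|_{L_2({\rm Leb})}\le\ell_z^{-1/2}\|g\|$, so it suffices to prove the deterministic inequality $\|g\|_\infty\le C_k\big({\rm TV}(g^{(k)})+\|g\|_{L_2({\rm Leb})}\big)$ on $[0,1]$ and then substitute the Part (i) bounds, which produces exactly $C_{z,k}\big(\tfrac{20\delta_0R}{\gamma}+1+\sqrt{\Lambda_{max}/\Lambda_{min}}\big)R$ after factoring out $R$. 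To establish the embedding I would project $g$ in $L_2$ onto the polynomials $\mathcal{P}_k$ of degree $\le k$, writing $g=\Pi g+r$: the projection satisfies $\|\Pi g\|_\infty\le c_k\|\Pi g\|_{L_2}\le c_k\|g\|_{L_2}$ by norm equivalence on the finite-dimensional $\mathcal{P}_k$, while $r^{(k)}$ differs from $g^{(k)}$ only by a constant, so ${\rm TV}(r^{(k)})={\rm TV}(g^{(k)})$. For the fluctuation $r$ I would use Taylor's formula with Stieltjes remainder, $r(t)=\sum_{j=0}^k\frac{r^{(j)}(0)}{j!}t^j+\frac{1}{k!}\int_0^t(t-u)^k\,dr^{(k)}(u)$, whose remainder is uniformly at most $\frac{1}{k!}{\rm TV}(r^{(k)})$; since $r\perp\mathcal{P}_k$, testing against the monomials $t^0,\dots,t^k$ shows the polynomial part of the expansion has all its moments bounded by the remainder, and invertibility of the monomial (Hilbert) Gram matrix turns this into $\|r\|_\infty\lesssim_k{\rm TV}(r^{(k)})$.

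The routine half is Part (i): it is pure linear algebra once the orthogonal decomposition is available, and the only care needed lies in ordering the triangle inequalities to land on the exact constants $2+\sqrt{\Lambda_{max}/\Lambda_{min}}$ and $1+\sqrt{\Lambda_{max}/\Lambda_{min}}$. The genuine obstacle is the embedding in Part (ii): one must control the degree-$\le k$ polynomial component of $g$ by its $L_2$ mass rather than by pointwise derivative values (which are not directly controlled), and the projection-plus-moment-matching device above is the cleanest way to do so. In the final constant, the $k$-dependence of $C_{z,k}$ enters through the conditioning of the Hilbert matrix and the norm-equivalence constant on $\mathcal{P}_k$, whereas the $\ell_z$-dependence enters only through the single factor $\ell_z^{-1/2}$.
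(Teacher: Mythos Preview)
Your proposal is correct and matches the paper's proof essentially step for step. Part (i) is identical: the same orthogonal decomposition $\|\bx'\bbeta+g(z)\|^2=\|\tilde{\bx}'\bbeta\|^2+\|\bh(z)'\bbeta+g(z)\|^2$ followed by the same chain of eigenvalue bounds and triangle inequalities, in the same order, to produce the stated constants.

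For Part (ii) the strategy is also the same---project onto $\mathcal{P}_k$, bound the polynomial part by norm equivalence, and bound the orthogonal fluctuation by ${\rm TV}(g^{(k)})$---with two cosmetic differences. First, the paper does the projection in $L_2({\rm Leb})$ but bounds $\|p_*\|$ in $L_2(Q)$ and only invokes the density lower bound at the end (to lower-bound $\lambda_{\min}(\mathbb{E}\bv(z)\bv(z)')$), whereas you absorb $\ell_z$ up front via $\|g\|_{L_2({\rm Leb})}\le \ell_z^{-1/2}\|g\|$ and work entirely in Lebesgue; either placement of the $\ell_z$ factor is fine. Second, for the fluctuation bound $\|r\|_\infty\lesssim_k{\rm TV}(r^{(k)})$ the paper simply cites Lemma~5 of \cite{sadhanala2019additive}, while your Taylor--Stieltjes expansion plus moment-matching against the Hilbert matrix is a valid self-contained proof of precisely that lemma.
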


\begin{proof}
The bound on $\|\bbeta\|_1$ and ${\rm TV}(g^{(k)})$ is clear from the definition of $\tau_{\delta_0,R}(f)$ in \eqref{funl:def}. Using the orthogonal decomposition: 
\begin{align*}
\|\bx'\bbeta+g(z)\|^2=\|\tilde{\bx}'\bbeta\|^2+\|\bh(z)'\bbeta+g(z)\|^2\leq R^2,
\end{align*}
we have $\|\bbeta\|_2\leq \Lambda_{min}^{-1/2}\|\tilde{\bx}'\bbeta\| \leq \Lambda_{min}^{-1/2}R$, and 
\[
\|g(z)\|\leq \|\bh(z)'\beta+g(z)\|+\|\bh(z)'\bbeta\|\leq R+\Lambda^{1/2}_{max}\|\bbeta\|_2\leq \Big(1+\sqrt{\frac{\Lambda_{max}}{\Lambda_{min}}}\Big)R. 
\]
Having the bound on $\|g\|$, we can further obtain $\|\bx'\bbeta\|\leq \|\bx'\bbeta+g(z)\|+\|g(z)\|\leq  (2+\sqrt{\frac{\Lambda_{max}}{\Lambda_{min}}})R$. 

It remains to bound $\|g\|_{\infty}$. Define $g_*(z)=\frac{g(z)}{(\frac{20\delta_0 R}{\gamma}+1+\sqrt{\frac{\Lambda_{max}}{\Lambda_{min}}})R}$, then ${\rm TV}(g_*^{(k)})\leq 1, \|g_*\|\leq 1$ due to the derived bounds on $\|g\|$ and ${\rm TV}(g^{(k)})$. Hence it is sufficient to show that 
\[
\sup_{g_*: {\rm TV}(g_*^{(k)})\leq 1, \|g_*\|\leq 1} \|g_*\|_{\infty}\leq C_{z, k}.
\]
To prove the above, we first decompose $g_*=p_*+q_*$, where $p_*$ is a polynomial of degree $k$ and $q_*$ is orthogonal to all polynomials of degree $k$ (with respect to the $L_2$ inner product $\int_0^1p(z)q(z)dz$). Note that ${\rm TV}(q_*^{(k)})={\rm TV}(g_*^{(k)})\leq 1$. Then Lemma 5 in \cite{sadhanala2019additive} implies that $\|q_*\|_{\infty}\leq c_k$ for a constant $c_k>0$ only depending on $k$. Now write $p_*(z)=\sum_{j=0}^ka_jz^j$. We have 
\[
\sqrt{\ba'\bV_0 \ba}=\|p_*\|\leq \|g_*\|+\|q_*\|\leq 1+ c_k,
\]
where $\ba=(a_0,\ldots, a_k)$ and $\bV_0=\mathbb{E}(\bv(z)\bv(z)')$ with $\bv(z)=(1,z,z^2,\ldots, z^k)$. Thus, we obtain 
\[
\|p_*\|_{\infty}\leq \|\ba\|_1\leq \sqrt{k+1}\|\ba\|_2\leq \sqrt{k+1}(1+c_k)\lambda_{min}^{-1/2}(\bV_0),
\] 
which implies $\|g_*\|_{\infty}\leq \|p_*\|_{\infty}+ \|q_*\|_{\infty}\leq \sqrt{k+1}(1+c_k)\lambda_{min}^{-1/2}(\bV_0)+c_k$. Finally, we need to show $\lambda_{min}(\bV_0)$ is bounded below by a constant only depending on $k$ and $\ell_z$. Under Condition \ref{con:three} we have
\begin{align*}
\lambda_{min}(\bV_0)&=\min_{\|\bb\|_2=1}\mathbb{E}(\bb'\bv(z))^2\geq \ell_z \cdot \min_{\|\bb\|_2=1}\mathbb{E}(\bb'\bv(\tilde{z}))^2, \quad \tilde{z}\sim {\rm Unif}(0,1) \\
&=\ell_z \cdot \mathbb{E}(\bb_*'\bv(\tilde{z}))^2,
\end{align*}
where $\bb_*$ is the minimizer; $\mathbb{E}(\bb_*'\bv(\tilde{z}))^2$ is a constant only depending on $k$, and it is positive because the roots of any polynomial (not identically zero) form a set of Lebesgue measure zero.

\end{proof}

\subsection{Completion of the proof}
\label{prove:both}
We are in the position to complete the proof of Theorem \ref{thm:g}. In fact, we shall prove Theorems \ref{thm:g} and \ref{thm:beta} altogether. We will choose $R^2,\lambda, \gamma, \delta_0$ such that the conditions in Lemmas \ref{lemma:1}-\ref{hp:event2} (for Theorem \ref{thm:g}) and Lemmas \ref{fast:beta}-\ref{fast:hpe} (for Theorem \ref{thm:beta}) are all satisfied. As a result, we will then conclude that the bounds on $\hat{g}(z)$ in Lemma \ref{lemma:2} and bounds on $\hat{\beta}$ in Lemma \ref{fast:beta} hold with probability at least $\mathbb{P}(\mathcal{T}_0(\delta_0, R, \tilde{R}) \cap \mathcal{T}_2(\delta_0, R) \cap \mathcal{T}_3\cap \mathcal{S}_1\cap \mathcal{S}_2\cap \mathcal{S}_3)$. Towards this end, we first list the conditions to meet\footnote{With a bit abuse of notation, the subscripts for constants $D_j$'s we use here may be different from the ones in the lemmas.}:
\begin{itemize}
\item[(i)] Conditions from Lemmas \ref{lemma:1}-\ref{lemma:2}: $\delta_0\leq \frac{1}{100}, ~\delta_0R^2\geq D_1(\frac{\log^2 n}{n^2}+\gamma+ \lambda^2 s)$.
\item[(ii)] Conditions from Lemma \ref{hg:one}: $\frac{\log p}{n}\leq 1,~\delta_0\sqrt{\frac{\log p}{n}}\frac{R^2}{\lambda^2}\leq D_2,~D_3 \delta_0^{\frac{-4k-4}{2k+3}}K_{\mathcal{F}}^2n^{\frac{-2k-2}{2k+3}} \leq R^2$, ~$D_4(1+K_{\mathcal{F}}^2)R^2\log R^{-1}\leq n\lambda^2, ~D_5(\lambda^2 nR^{\frac{-2k-1}{k+1}}K_{\mathcal{F}}^{\frac{-1}{k+1}} \wedge \lambda \sqrt{n}R^{-1}) \geq \log n$, ~$\frac{1}{2}\delta_0R^2\geq D_1 \gamma, ~R\geq \frac{\log n}{n}$. Recall that $K_{\mathcal{F}}$ is defined in \eqref{kf:def}. 
\item[(iii)] Conditions from Lemma \ref{hp:event2}: $R^2\geq D_6 \delta_0^{\frac{-4k-4}{2k+3}} n^{\frac{-2k-2}{2k+3}}(K_{\mathcal{F}}^2+K_{\mathcal{F}}^{\frac{2}{2k+3}}\sigma^{\frac{4k+4}{2k+3}})$, ~$\frac{11\log n}{\ell_z n}<1$.
\item[(iv)] Conditions from Lemma \ref{fast:beta}: $\delta_0\leq \frac{1}{100}, ~\delta_0R^2\geq D_1(\frac{\log^2 n}{n^2}+\gamma+ \lambda^2 s), ~\lambda \geq D_7(R^2\lambda^{-1}\frac{\log^2 n}{n^2}+\gamma)$.
\item[(v)] Conditions from Lemma \ref{fast:hpe}: $\frac{\log p}{n}\leq 1,~\delta_0\sqrt{\frac{\log p}{n}}\frac{R^2}{\lambda^2}\leq D_8,~D_9 \delta_0^{\frac{-4k-4}{2k+3}}K_{\mathcal{F}}^2n^{\frac{-2k-2}{2k+3}} \leq R^2$, ~$D_{10}(1+K_{\mathcal{F}}^2)R^2\log R^{-1}\leq n\lambda^2, ~D_{11}(\lambda^2 nR^{\frac{-2k-1}{k+1}}K_{\mathcal{F}}^{\frac{-1}{k+1}} \wedge \lambda \sqrt{n}R^{-1}) \geq \log n$, ~$\frac{\log n}{n}(1+K_{\mathcal{F}}+\sigma+R^2\lambda^{-1})+\sigma^{-1}R^2\leq D_{12}\lambda$.
\end{itemize} 
We choose $\delta_0=\frac{1}{100}, \lambda=D\sigma\sqrt{\frac{\log p}{n}}, \gamma=\frac{\delta_0R^2}{2D_1}$, where $D>0$ is any given constant that may only depend on $k, L_g, L_h, K_x, K_{\epsilon}, \ell_z, \Lambda_{min}, \Lambda_{max}$. It remains to specify $R^2$. We will consider $R^2\leq 1$, which together with $ \gamma=\frac{\delta_0R^2}{2D_1}$ implies 
\[
K_{\mathcal{F}} =(1\vee C_{z,k})\Big(\frac{20\delta_0 R}{\gamma}+1+\sqrt{\frac{\Lambda_{max}}{\Lambda_{min}}}\Big)R\leq (1\vee C_{z,k})\Big(40D_1+1+\sqrt{\frac{\Lambda_{max}}{\Lambda_{min}}}\Big):=D_{13}.
\]
Define
\begin{align*}
R_0^2=\frac{2D_1}{\delta_0}\Big(\frac{\log^2 n}{n^2}+\lambda^2s\Big)+\delta_0^{\frac{-4k-4}{2k+3}}n^{\frac{-2k-2}{2k+3}}\Big((D_3+D_9)D^2_{13}+D_6(D_{13}^2+D_{13}^{\frac{2}{2k+3}}\sigma^{\frac{4k+4}{2k+3}})\Big).
\end{align*}
For any $R^2\in [R_0^2, 1]$, it is straightforward to verify that Conditions from Lemmas \ref{lemma:1}-\ref{lemma:2} and all the conditions involving $n^{\frac{-2k-2}{2k+3}}$ are satisfied. To meet other conditions, we will pick $R^2=cR_0^2$ for any constant $c\geq 1$. We verify the remaining conditions by showing that under the scaling $\frac{(s+\log p)s\log p}{n}=o(1), p\rightarrow \infty$ as $n\rightarrow \infty$ \big(so that $\lambda \asymp \sqrt{\frac{\log p}{n}}, R^2\asymp \gamma \asymp \frac{s\log p}{n}+n^{\frac{-2k-2}{2k+3}}, R^2=o(\lambda)$\big):
\begin{itemize}
\item[(ii)] Conditions from Lemma \ref{hg:one}: $\frac{\log p}{n}=o(1), ~\sqrt{\frac{\log p}{n}}\frac{R^2}{\lambda^2}=O(R^2/\lambda)=o(1), ~R^2\log R^{-1}=o(1) \ll \log p \asymp n\lambda^2, ~\lambda\sqrt{n}R^{-1}\gg \sqrt{n}R \gtrsim n^{\frac{1}{4k+6}}\gg \log n, ~R\gtrsim n^{\frac{-k-1}{2k+3}}\gg\frac{\log n}{n}$. Moreover, $\lambda^2 nR^{\frac{-2k-1}{k+1}}\gtrsim n^{\frac{2k+1}{2k+3}}\cdot \log p  \gg \log n$ when $\frac{s\log p}{n}\leq n^{\frac{-2k-2}{2k+3}}$; when $\frac{s\log p}{n}> n^{\frac{-2k-2}{2k+3}}$, it still holds that 
\[
\lambda^2 nR^{\frac{-2k-1}{k+1}}\gtrsim \Big(\frac{n}{s\log p}\Big)^{\frac{2k+1}{2k+2}}\cdot \log p \gg s^{\frac{2k+1}{2k+2}} \cdot \log p \gtrsim n^{\frac{2k+1}{(2k+2)(2k+3)}}\cdot (\log p)^{\frac{1}{2k+2}} \gg \log n.
\]
\item[(iii)] Conditions from Lemma \ref{hp:event2}: $\frac{n}{\log n}=o(1)$.
\item[(iv)] Conditions from Lemma \ref{fast:beta}: $R^2\lambda^{-1}\frac{\log^2 n}{n^2}+\gamma \asymp R^2 \ll \lambda$.
\item[(v)] Conditions from Lemma \ref{fast:hpe}: Most conditions have been verified for conditions in Lemma \ref{hg:one}. It remains to see that $\frac{\log n}{n} (1+R^2\lambda^{-1})+R^2\lesssim \frac{\log n}{n}+R^2 \ll \lambda$.
\end{itemize}
Finally, given what we have obtained, it is straightforward to further evaluate the high probability (by choosing the constant $D$ in $\lambda$ and $c$ in $R^2$ large enough):
\[
\mathbb{P}(\mathcal{T}_0(\delta_0, R, \tilde{R}) \cap \mathcal{T}_2(\delta_0, R) \cap \mathcal{T}_3\cap \mathcal{S}_1\cap \mathcal{S}_2\cap \mathcal{S}_3)\geq 1-p^{-D_{14}}-n^{-D_{15}},
\]
when $n$ is large enough under the scaling $\frac{s\log^2 p}{n}=o(1), p\rightarrow \infty$.

\section{Proof of Theorem \ref{thm:beta}}\label{proof:thm2}

\subsection{Roadmap of the proof}

Recall the sets $\mathcal{B}$ in \eqref{beta:def} and $\mathcal{F}$ in \eqref{fset:def} from the proof of Lemma \ref{hg:one}. Define the following set and events:
\begin{align*}
\tilde{\mathcal{F}}&=\Big\{g: \|g\|_{\infty}\leq 22b_k (L_g+L_h)\ell_z^{-1}n^{-1}\log n\Big\}, \\
\mathcal{S}_1&=\Big\{\Big\|\frac{1}{n}(\bDelta(\bZ)+\tilde{\bX})'(g(\bZ)-\bh(\bZ)\bbeta+\beps)\Big\|_{\infty} \leq \frac{1}{8}\lambda, ~\forall \bbeta\in \mathcal{B}, g\in \mathcal{F} \cup \tilde{\mathcal{F}}, \Delta_j\in \tilde{\mathcal{F}}, 1\leq j\leq p\Big\}, \\
\mathcal{S}_2&=\Big\{\Big|\|\tilde{\bX}\bbeta\|_n^2-\|\tilde{\bx}'\bbeta\|^2\Big|\leq \lambda \|\bbeta\|_1, \forall \bbeta\in \mathcal{B}\Big\}, ~~\mathcal{S}_3=\Big\{\Big|\|\bX\bbeta\|_n^2-\|\bx'\bbeta\|^2\Big|\leq \lambda \|\bbeta\|_1, \forall \bbeta\in \mathcal{B}\Big\}.
\end{align*}
The proof of Theorem \ref{thm:g} already yields an error rate for $\hat{\bbeta}$ (see Lemmas \ref{lemma:1} and \ref{basics}), however, the rate is not optimal yet. We will base on this ``slow rate" result to perform a finer analysis of $\hat{\bbeta}$ to obtain the ``fast rate". This is again inspired by  \cite{muller2015partial}. Specifically, we prove in Lemma \ref{fast:beta} that the ``fast rate" results on $\hat{\bbeta}$ in Theorem \ref{thm:beta} hold by intersecting with another event $\mathcal{S}_1\cap \mathcal{S}_2 \cap \mathcal{S}_3$. We then show in Lemma \ref{fast:hpe} that this additional event $\mathcal{S}_1\cap \mathcal{S}_2 \cap \mathcal{S}_3$ also has large probability.

\subsection{Important lemmas}

\begin{lemma}
\label{fast:beta}
Assume Conditions \ref{con:three}-\ref{con:five} and 
\[
\delta_0\leq \frac{1}{100},~~\delta_0R^2\geq D_1(n^{-2}\log^2 n+\gamma+ \lambda^2 s), ~~\lambda \geq D_2(R^2\lambda^{-1}n^{-2}\log^2n+\gamma),
\]
where the constant $D_1$ is the same as in Lemma \ref{lemma:1}. Then on $\mathcal{T}_1(\delta_0, R) \cap \mathcal{T}_2(\delta_0, R) \cap \mathcal{T}_3 \cap \mathcal{S}_1 \cap \mathcal{S}_2$,
\begin{align*}
\|\hat{\bbeta}-\bbeta^0\|_2^2\leq 16\Lambda^{-2}_{min}s\lambda^2, \quad  \|\bx'(\hat{\bbeta}-\bbeta^0)\|^2\leq 32(\Lambda_{min}+\Lambda_{max})\Lambda^{-2}_{min}s\lambda^2.
\end{align*}
Further intersecting with the event $\mathcal{S}_3$, we have $\|\bX\hat{\bbeta}-\bX\bbeta^0\|_n^2\leq 16(3\Lambda_{min}+2\Lambda_{max})\Lambda^{-2}_{min}s\lambda^2$.
\end{lemma}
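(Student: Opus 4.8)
The plan is to run a refined (``fast-rate'') analysis of the basic inequality for \eqref{est:p}, isolating the partialled-out quadratic form $\|\tilde{\bX}\bb\|_n^2$ (write $\bb:=\hat{\bbeta}-\bbeta^0$) and then closing with a standard lasso compatibility argument. I would work with the spline approximant $\bar{g}\in\mathcal{H}_n$ to $g_0$ from Lemma \ref{lemma:1}, together with analogous approximants $\bar{h}_j$ to the conditional means $h_j$ (these exist by Condition \ref{con:five} and the same Lemma 16 of \cite{tibshirani2022divided}), and set $\bDelta(\bZ)=\bh(\bZ)-\bar{\bh}(\bZ)$, so that the ``empirically partialled'' design $\bX-\bar{\bh}(\bZ)=\tilde{\bX}+\bDelta(\bZ)$ is exactly the matrix in $\mathcal{S}_1$. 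The slow-rate conclusions of Lemmas \ref{lemma:1} and \ref{basics} (valid on $\mathcal{T}_1\cap\mathcal{T}_2\cap\mathcal{T}_3$) certify $\bb\in\mathcal{B}$, $\hat{g}-\bar{g}\in\mathcal{F}$, and $g_0-\bar{g},\,\bDelta_j\in\tilde{\mathcal{F}}$; these memberships are what let the \emph{uniform} events $\mathcal{S}_1$--$\mathcal{S}_3$ be applied to the random iterate. First I would start from the basic inequality comparing $(\hat{\bbeta},\hat{g})$ with $(\bbeta^0,\bar{g})$, substitute $\by=\bX\bbeta^0+g_0(\bZ)+\beps$, and decompose the fitted difference as $\bX\bb+\hat{g}(\bZ)-g_0(\bZ)=(\bX-\bar{\bh}(\bZ))\bb+\psi(\bZ)$ with $\psi(\bZ)=\bar{\bh}(\bZ)\bb+(\hat{g}-\bar{g})(\bZ)$ a pure function of $z$; expanding the square and dropping $\|\psi(\bZ)\|_n^2\ge0$ lower-bounds the quadratic by $\|(\bX-\bar{\bh}(\bZ))\bb\|_n^2+2\langle(\bX-\bar{\bh}(\bZ))\bb,\psi(\bZ)\rangle_n$.

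The crux is the regrouping of all terms pairing $(\bX-\bar{\bh}(\bZ))\bb$ against noise and the nonparametric/confounding directions. A short computation shows these collapse into $\langle(\bX-\bar{\bh}(\bZ))\bb,\,g(\bZ)-\bh(\bZ)\bbeta+\beps\rangle_n$ with the choice $g=g_0-\hat{g},\ \bbeta=\bb$, plus a residual $\langle(\bX-\bar{\bh}(\bZ))\bb,\bDelta(\bZ)\bb\rangle_n$. The first piece is bounded via H\"older and $\mathcal{S}_1$ by $\tfrac18\lambda\|\bb\|_1$; the $\bDelta$-piece is negligible since $\|\bDelta_j\|_\infty=O(n^{-1}\log n)$ on $\tilde{\mathcal{F}}$ and $\|\bb\|_1=O(\delta_0R^2/\lambda)=o(1)$ under Condition \ref{con:six}. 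The leftover noise-against-function term $\langle\beps,\psi(\bZ)\rangle_n$ is controlled by $\mathcal{T}_2$ at order $\delta_0R^2$, and the approximation and total-variation remainders ($\|\bar{g}-g_0\|_n^2$, $\gamma\,{\rm TV}(\bar{g}^{(k)})$, etc.) are absorbed by Lemma \ref{lemma:1}(i) and the hypotheses $\delta_0R^2\ge D_1(n^{-2}\log^2n+\gamma+\lambda^2s)$ and $\lambda\ge D_2(R^2\lambda^{-1}n^{-2}\log^2n+\gamma)$. The output is a reduced inequality
\[
\tfrac12\big\|(\bX-\bar{\bh}(\bZ))\bb\big\|_n^2+\lambda\|\hat{\bbeta}\|_1\ \le\ \tfrac{c}{8}\lambda\|\bb\|_1+\lambda\|\bbeta^0\|_1+C\delta_0R^2 .
\]

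To finish, I would use $\|\hat{\bbeta}\|_1\ge\|\bbeta^0\|_1-\|\bb_S\|_1+\|\bb_{S^c}\|_1$ to cancel $\lambda\|\bbeta^0\|_1$ and extract the cone condition $\|\bb_{S^c}\|_1\le 3\|\bb_S\|_1$, whence $\|\bb\|_1\le 4\sqrt{s}\,\|\bb\|_2$ and $\tfrac12\|(\bX-\bar{\bh}(\bZ))\bb\|_n^2\le C\lambda\sqrt{s}\,\|\bb\|_2+C\delta_0R^2$. Event $\mathcal{S}_2$ transfers this to the population form $\|\tilde{\bx}'\bb\|^2\le\|\tilde{\bX}\bb\|_n^2+\lambda\|\bb\|_1$ (the gap between $\|\tilde{\bX}\bb\|_n$ and $\|(\bX-\bar{\bh}(\bZ))\bb\|_n$ being a negligible $\bDelta$-term), and Condition \ref{con:four} gives $\|\tilde{\bx}'\bb\|^2\ge\Lambda_{min}\|\bb\|_2^2$. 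Solving the resulting quadratic inequality in $\|\bb\|_2$ yields $\|\tilde{\bx}'\bb\|^2\le 16\Lambda_{min}^{-1}s\lambda^2$, hence $\|\bb\|_2^2\le\Lambda_{min}^{-1}\|\tilde{\bx}'\bb\|^2\le 16\Lambda_{min}^{-2}s\lambda^2$. The out-of-sample bound follows from the orthogonal decomposition $\|\bx'\bb\|^2=\|\tilde{\bx}'\bb\|^2+\|\bh(z)'\bb\|^2\le\|\tilde{\bx}'\bb\|^2+\Lambda_{max}\|\bb\|_2^2$, and the in-sample bound from $\mathcal{S}_3$, via $\|\bX\bb\|_n^2\le\|\bx'\bb\|^2+\lambda\|\bb\|_1$ combined with $\lambda\|\bb\|_1\le 4\sqrt{s}\lambda\|\bb\|_2$.

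I expect the main obstacle to be the regrouping step: verifying that the parametric error, the confounding term $\bh(\bZ)\bb$, the nonparametric error $\hat{g}-g_0$, and the noise genuinely recombine into the single $\mathcal{S}_1$-controlled form at order $\lambda\|\bb\|_1$, rather than leaving an uncontrolled cross term. This is precisely where the conditional orthogonality $\mathbb{E}[\tilde{\bx}\mid z]=0$ is essential — it is what makes $\tfrac1n(\bX-\bar{\bh}(\bZ))'(\text{function of }z+\beps)$ concentrate at the oracle rate $\lambda$ — and it forces the simultaneous spline approximation of all $p$ conditional means $h_j$, hence Condition \ref{con:five} and the class $\tilde{\mathcal{F}}$. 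The accompanying bookkeeping, tracking every $\bDelta$-remainder and TV term and confirming each is $o(s\lambda^2)$ or absorbable under the stated lower bounds on $\delta_0R^2$ and $\lambda$, is the most error-prone part, though conceptually routine.
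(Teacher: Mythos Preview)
Your plan assembles the right ingredients (the spline approximants $\bar h_j$, the partialled design $\tilde{\bX}+\bDelta$, the slow-rate memberships $\bb\in\mathcal{B}$, $\hat g-\bar g\in\mathcal{F}$, $g_0-\bar g\in\tilde{\mathcal{F}}$), but the choice of comparator $(\bbeta^0,\bar g)$ in the basic inequality does not deliver the fast rate, and the obstruction is precisely the one you anticipate at the end. With that comparator the right-hand side carries \emph{additive} remainders of order $\delta_0R^2$: your noise-against-$\psi$ term $\langle\beps,\psi(\bZ)\rangle_n$ is only controlled at $\delta_0R^2$ by $\mathcal{T}_2$, and the penalty difference $\gamma\,{\rm TV}(\bar g^{(k)})-\gamma\,{\rm TV}(\hat g^{(k)})$ can only be bounded by $a_kL_g\gamma$ (there is nothing to cancel $\hat g$ against). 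Under the lemma's hypotheses and the paper's tuning, $\delta_0R^2\asymp\gamma\asymp R^2\asymp (s\log p)/n+n^{-(2k+2)/(2k+3)}$, which is in general much larger than the target $s\lambda^2\asymp(s\log p)/n$. This additive term already blocks the cone step: from
\[
\lambda\|\bb_{S^c}\|_1\ \le\ \lambda\|\bb_S\|_1+\tfrac{c}{8}\lambda\|\bb\|_1+C\delta_0R^2
\]
you cannot conclude $\|\bb_{S^c}\|_1\le 3\|\bb_S\|_1$ without knowing $\delta_0R^2=O(\lambda\|\bb\|_1)$, which is not available. The quadratic you end up solving gives at best $\|\bb\|_2^2\lesssim s\lambda^2+\delta_0R^2$, i.e.\ the slow rate again.

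The paper avoids this by comparing $(\hat{\bbeta},\hat g)$ with $(\bbeta^0,\check g)$ where $\check g=\hat g+\bar{\bh}'(\hat{\bbeta}-\bbeta^0)\in\mathcal{H}_n$. Because both sides now carry the \emph{same} nonparametric fit $\hat g$, all purely nonparametric error terms cancel: there is no $\psi$-term, no $\|g_0-\bar g\|_n^2$, and no bare $\gamma\,{\rm TV}(\bar g^{(k)})$. The penalty difference becomes $\gamma({\rm TV}(\check g^{(k)})-{\rm TV}(\hat g^{(k)}))\le \gamma a_kL_h\|\bb\|_1\le c\lambda\|\bb\|_1$ via $\lambda\ge D_2\gamma$, the quadratic remainder is $\tfrac12\|\bDelta\bb\|_n^2=O(\lambda\|\bb\|_1)$ via the other hypothesis, and the sole cross term $\langle g_0-\hat g-\bh\bb+\beps,(\tilde{\bX}+\bDelta)\bb\rangle_n$ is exactly of the $\mathcal{S}_1$ form. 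Every right-hand term thus scales as $\lambda\|\bb\|_1$, the cone condition follows cleanly, and the standard lasso argument closes at $s\lambda^2$.
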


\begin{proof}
Recall that under Condition \ref{con:five}, $\max_{1\leq j\leq p}{\rm TV}(h_j^{(k)})\leq L_h$ where $\bh(z)=(h_1(z),\ldots, h_p(z))=\mathbb{E}(\bx|z)$. According to Lemma 16 in \cite{tibshirani2022divided}, there exist $\bar{h}_j\in \mathcal{H}_n$ for all $1\leq j \leq p$ such that
\begin{align*}
{\rm TV}(\bar{h}_j^{(k)})\leq a_k L_h, ~~\|\bar{h}_j-h_j\|_{\infty}\leq b_k L_h \cdot \max_{1\leq i \leq n-1} (z_{(i+1)}-z_{(i)}),
\end{align*}
where $a_k,b_k>0$ are constants only depending on $k$, and $\mathcal{H}_n$ is the span of the $k$th degree falling factorial basis functions with knots $z_{(k+1)}\leq z_{(k+2)}\ldots\leq z_{(n-1)}$. Define 
\[
\bar{\bh}(z)=(\bar{h}_1(z),\ldots, \bar{h}_p(z)), \quad \check{g}(z)=\hat{g}(z)+\bar{\bh}'(z)(\hat{\bbeta}-\bbeta^0). 
\]
Note that $\check{g}\in  \mathcal{H}_n$ since $\hat{g}, \bar{h}_j\in \mathcal{H}_n$. Therefore, we can write down a basic inequality,
\begin{align*}
\frac{1}{2}\|\by-\bX\hat{\bbeta}-\hat{g}(\bZ)\|_n^2+\lambda\|\hat{\bbeta}\|_1+\gamma {\rm TV}(\hat{g}^{(k)}) \leq \frac{1}{2}\|\by-\bX\bbeta^0-\check{g}(\bZ)\|_n^2+\lambda\|\bbeta^0\|_1+\gamma {\rm TV}(\check{g}^{(k)}),
\end{align*} 
which can be reformulated as
\begin{align*}
& \frac{1}{2}\|\tilde{\bX}(\hat{\bbeta}-\bbeta^0)\|_n^2+\lambda\|\hat{\bbeta}\|_1 \leq \frac{1}{2}\|(\bh(\bZ)-\bar{\bh}(\bZ))(\hat{\bbeta}-\bbeta^0)\|_n^2 + \gamma ({\rm TV}(\check{g}^{(k)})-{\rm TV}(\hat{g}^{(k)}))+\\
& \quad \quad \big\langle g_0(\bZ)-\hat{g}(\bZ)-\bh(\bZ)(\hat{\bbeta}-\bbeta_0)+\beps, (\bh(\bZ)-\bar{\bh}(\bZ)+\tilde{\bX})(\hat{\bbeta}-\bbeta^0) \big\rangle_n+\lambda\|\bbeta^0\|_1.
\end{align*}
Using triangle inequality for ${\rm TV}(\cdot)$ and H\"{o}lder's inequality, we can proceed to obtain
\begin{align}
&\frac{1}{2}\|\tilde{\bX}(\hat{\bbeta}-\bbeta^0)\|_n^2+\lambda\|\hat{\bbeta}\|_1 \nonumber\\ &~~~~\leq  \frac{1}{2n}\|(\bh(\bZ)-\bar{\bh}(\bZ))'(\bh(\bZ)-\bar{\bh}(\bZ))\|_{\max} \cdot \|\hat{\bbeta}-\bbeta^0\|_1^2 \nonumber\\
&~~~~~~~~+\gamma a_kL_h \|\hat{\bbeta}-\bbeta^0\|_1 \nonumber \\
 &~~~~~~~~+ \big\|\frac{1}{n}(\bh(\bZ)-\bar{\bh}(\bZ)+\tilde{\bX})'(g_0(\bZ)-\hat{g}(\bZ)-\bh(\bZ)(\hat{\bbeta}-\bbeta^0)+\beps)\big\|_{\infty}\cdot \|\hat{\bbeta}-\bbeta^0\|_1+\lambda\|\bbeta^0\|_1.\label{bq:2} 
\end{align}
To further simplify the above inequality, we observe the following:
\begin{itemize}
\item[(i)] $\frac{1}{n}\|(\bh(\bZ)-\bar{\bh}(\bZ))'(\bh(\bZ)-\bar{\bh}(\bZ))\|_{\max} \leq \max_{j} \|h_j-\bar{h}_j\|^2_{\infty} \leq b_k^2L_h^2 22^2\ell_z^{-2}n^{-2}\log^{2}n$ on $\mathcal{T}_3$
\item[(ii)] Conditions in Lemma \ref{fast:beta} imply the conditions in Lemma \ref{lemma:1} and Lemma \ref{basics}. Hence, combining Lemmas \ref{lemma:1} and \ref{basics} shows $\hat{\bbeta}-\bbeta^0 \in \mathcal{B}, \bar{g}-\hat{g}\in \mathcal{F}, g_0-\bar{g} \in \tilde{\mathcal{F}}$.
\end{itemize}
Based on these two results and assuming $4840b_k^2L_h^2\delta_0\ell_z^{-2}R^2\lambda^{-1}n^{-2}\log^2n+\gamma a_kL_h \leq \lambda/4$, we continue from \eqref{bq:2} to achieve that on $\mathcal{T}_1(\delta_0, R) \cap \mathcal{T}_2(\delta_0, R) \cap \mathcal{T}_3 \cap \mathcal{S}_1$,
\begin{align*}
 \frac{1}{2}\|\tilde{\bX}(\hat{\bbeta}-\bbeta^0)\|_n^2+\lambda\|\hat{\bbeta}\|_1 & \leq (4840b_k^2L_h^2\delta_0\ell_z^{-2}R^2\lambda^{-1}n^{-2}\log^2n+\gamma a_kL_h+\lambda/4)\cdot \|\hat{\bbeta}-\bbeta^0\|_1\\
 &~~~~+\lambda \|\bbeta^0\|_1 \\
 & \leq \frac{\lambda}{2}\|\hat{\bbeta}-\bbeta^0\|_1+\lambda \|\bbeta^0\|_1.
\end{align*}
A standard argument in the literature of LASSO (e.g., Lemma 6.3 in \cite{buhlmann2011statistics}) simplifies the above to
\begin{align*}
 &\|\tilde{\bX}(\hat{\bbeta}-\bbeta^0)\|_n^2+\lambda\|\hat{\bbeta}-\bbeta^0\|_1\leq 4\lambda \|\hat{\bbeta}_S-\bbeta^0_S\|_1 \leq 4\sqrt{s}\lambda \|\hat{\bbeta}-\bbeta^0\|_2 \\
 \leq & 4\sqrt{s}\lambda \Lambda_{min}^{-1/2} \|\tilde{\bx}'(\hat{\bbeta}-\bbeta^0)\| \leq 8s\lambda^2 \Lambda^{-1}_{min} + \frac{1}{2}\|\tilde{\bx}'(\hat{\bbeta}-\bbeta^0)\|^2 \\
  \leq & 8s\lambda^2 \Lambda^{-1}_{min} + \frac{1}{2}\|\tilde{\bX}(\hat{\bbeta}-\bbeta^0)\|_n^2+ \frac{\lambda}{2}\|\hat{\bbeta}-\bbeta^0\|_1,
\end{align*} 
where the second to last equality is due to $ab\leq a^2/2+b^2/2$, and the last inequality holds on $\mathcal{S}_2$. Rearranging the terms leads to 
\begin{align}
\label{lasso:b}
\|\tilde{\bX}(\hat{\bbeta}-\bbeta^0)\|_n^2+\lambda\|\hat{\bbeta}-\bbeta^0\|_1 \leq 16\Lambda^{-1}_{min}s\lambda^2.
\end{align}
Now \eqref{lasso:b} combined with the condition from $\mathcal{S}_2$ implies
\begin{align*}
\|\hat{\bbeta}-\bbeta^0\|^2_2 &\leq \Lambda_{min}^{-1} \cdot \|\tilde{\bx}'(\hat{\bbeta}-\bbeta^0)\|^2 \leq \Lambda_{min}^{-1} \cdot (\|\tilde{\bX}(\hat{\bbeta}-\bbeta^0)\|_n^2+\lambda\|\hat{\bbeta}-\bbeta^0\|_1) \\
&\leq 16\Lambda^{-2}_{min}s\lambda^2,  \\
\|\bx'(\hat{\bbeta}-\bbeta^0)\|^2 &= (\|\tilde{\bx}'(\hat{\bbeta}-\bbeta^0)\|+\|\bh(z)'(\hat{\bbeta}-\bbeta^0)\|)^2 \\
&\leq 2\|\tilde{\bx}'(\hat{\bbeta}-\bbeta^0)\|^2+2\|\bh(z)'(\hat{\bbeta}-\bbeta^0)\|^2 \\
&\leq 32\Lambda^{-1}_{min}s\lambda^2+2\Lambda_{max}\|\hat{\bbeta}-\bbeta^0\|^2_2 \leq  32(\Lambda_{min}+\Lambda_{max})\Lambda^{-2}_{min}s\lambda^2.
\end{align*}
Further intersecting with $\mathcal{S}_3$, we obtain
\[
\|\bX(\hat{\bbeta}-\bbeta^0)\|_n^2\leq \|\bx'(\hat{\bbeta}-\bbeta^0)\|^2+ \lambda  \|\hat{\bbeta}-\bbeta^0\|_1\leq 32(\Lambda_{min}+\Lambda_{max})\Lambda^{-2}_{min}s\lambda^2+ 16\Lambda^{-1}_{min}s\lambda^2.
\]
\end{proof}

\begin{lemma}
\label{fast:hpe}
Assume Conditions \ref{con:one}-\ref{con:five} hold.
\begin{itemize}
\item[(i)] Suppose
\begin{align}
&D_1 \delta_0^{\frac{-4k-4}{2k+3}}K_{\mathcal{F}}^2n^{\frac{-2k-2}{2k+3}} \leq R^2, \quad D_2(1+K_{\mathcal{F}}^2)R^2\log R^{-1}\leq n\lambda^2, \label{relb:1}\\
&D_3(\lambda^2 nR^{\frac{-2k-1}{k+1}}K_{\mathcal{F}}^{\frac{-1}{k+1}} \wedge \lambda \sqrt{n}R^{-1}) \geq \log n, \label{relb:2}\\
&n^{-1}\log n (1+K_{\mathcal{F}}+\sigma+R^2\lambda^{-1})+\sigma^{-1}R^2\leq D_4\lambda, \label{relb:3}
\end{align}
where $K_{\mathcal{F}}$ is defined in \eqref{kf:def} of Lemma \ref{hg:one}. Then,
\begin{align*}
\mathbb{P}(\mathcal{S}_1) \geq & 1-6pe^{-n}-2(p+p^2)e^{-D_5n\min(\sigma^{-2}\lambda^2, \sigma^{-1}\lambda)} \\
&-8pe^{-D_6(\lambda^2 nR^{\frac{-2k-1}{k+1}}K_{\mathcal{F}}^{\frac{-1}{k+1}} \wedge \lambda \sqrt{n}R^{-1})}-4pe^{-\frac{\delta_0^2R^2n}{D_7K_{\mathcal{F}}^2}}.
\end{align*}
\item[(ii)] Suppose $\frac{\log p}{n}\leq 1, \delta_0\sqrt{\frac{\log p}{n}}\frac{R^2}{\lambda^2}\leq D_8$. Then,
\begin{align*}
\mathbb{P}(\mathcal{S}_2)\geq 1-2p^{-10}, \quad \mathbb{P}(\mathcal{S}_3)\geq 1-2p^{-10}.
\end{align*}
\end{itemize}
\end{lemma}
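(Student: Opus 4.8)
The plan is to prove $\mathcal{S}_1$ by expanding the defining bilinear form coordinatewise into six groups of empirical averages, and to reduce $\mathcal{S}_2,\mathcal{S}_3$ to a single maximum-entry deviation bound of the type already controlled as term $I$ in the proof of Lemma \ref{hg:one}. Fixing $1\le j\le p$, the $j$-th entry of $\frac1n(\bDelta(\bZ)+\tilde{\bX})'(g(\bZ)-\bh(\bZ)\bbeta+\beps)$ equals $\frac1n\sum_{i=1}^n(\Delta_j(z_i)+\tilde{x}_{ij})(g(z_i)-\bh(z_i)'\bbeta+\varepsilon_i)$, which I split into the three ``stochastic'' averages carrying the centered covariate, $\frac1n\sum_i\tilde{x}_{ij}\varepsilon_i$, $\frac1n\sum_i\tilde{x}_{ij}g(z_i)$, $-\frac1n\sum_i\tilde{x}_{ij}\bh(z_i)'\bbeta$, and the three ``approximation'' averages carrying $\Delta_j$, $\frac1n\sum_i\Delta_j(z_i)\varepsilon_i$, $\frac1n\sum_i\Delta_j(z_i)g(z_i)$, $-\frac1n\sum_i\Delta_j(z_i)\bh(z_i)'\bbeta$. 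The structural facts I will use throughout are that $\mathbb{E}(\tilde{x}_j\mid z)=0$, so every $\tilde{x}_{ij}$ average is centered; that $\tilde{x}_{ij}=x_{ij}-h_j(z_i)$ is sub-Gaussian, because $x_{ij}$ is sub-Gaussian (Condition \ref{con:one}) and $\|h_j\|_\infty$ is bounded by a constant (apply Lemma \ref{basics}(ii) to $h_j$, whose $k$-th derivative has bounded total variation by Condition \ref{con:five}); and that every $\Delta_j$ and every $g\in\tilde{\mathcal{F}}$ obeys $\|\cdot\|_\infty\lesssim n^{-1}\log n$.

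For the three $\tilde{x}_{ij}$ averages I would proceed as follows. The noise term $\sup_j|\frac1n\sum_i\tilde{x}_{ij}\varepsilon_i|\le\lambda/8$ follows from Bernstein's inequality (Theorem \ref{hoeffding:quote}) for the sub-exponential products $\tilde{x}_{ij}\varepsilon_i$ and a union bound over $j$, giving the $p$-part of $2(p+p^2)e^{-D_5n\min(\sigma^{-2}\lambda^2,\sigma^{-1}\lambda)}$. The term $\frac1n\sum_i\tilde{x}_{ij}\bh(z_i)'\bbeta=\bbeta'\bigl(\frac1n\sum_i\tilde{x}_{ij}\bh(z_i)\bigr)$ is handled by H\"older together with $\|\bbeta\|_1\le20\delta_0R^2/\lambda$ from \eqref{beta:def}: bounding $\max_{j,k}|\frac1n\sum_i\tilde{x}_{ij}h_k(z_i)|$ by Bernstein over the $p^2$ pairs (using $\mathbb{E}\tilde{x}_jh_k(z)=0$ and $\|h_k\|_\infty$ bounded) at the threshold $\lambda^2/(CR^2)$, which by the hypothesis $\sigma^{-1}R^2\le D_4\lambda$ exceeds $\lambda/(C\sigma)$, simultaneously yields $|\cdot|\le\lambda/8$ and the $p^2$-part with the stated $\min(\sigma^{-2}\lambda^2,\sigma^{-1}\lambda)$ exponent. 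Finally, $\sup_{g\in\mathcal{F}}|\frac1n\sum_i\tilde{x}_{ij}g(z_i)|$ is controlled verbatim as term $III$ of Lemma \ref{hg:one} --- Dudley's entropy integral over the class $\mathcal{F}$ in \eqref{fset:def}, the entropy bound of \cite{sadhanala2019additive}, and the symmetrization step of Theorem \ref{emp:sc} --- which produces exactly the $8pe^{-D_6(\cdots)}$ and $4pe^{-\delta_0^2R^2n/(D_7K_{\mathcal{F}}^2)}$ contributions under conditions \eqref{relb:1}--\eqref{relb:2}.

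The three $\Delta_j$ averages, and the leftover $g\in\tilde{\mathcal{F}}$ piece of the previous supremum, need no empirical-process machinery: using $\|\Delta_j\|_\infty,\|g\|_\infty\lesssim n^{-1}\log n$ I bound $|\frac1n\sum_i\Delta_j(z_i)\varepsilon_i|\le\|\Delta_j\|_\infty\cdot\frac1n\sum_i|\varepsilon_i|$, $|\frac1n\sum_i\Delta_j(z_i)g(z_i)|\le\|\Delta_j\|_\infty\|g\|_\infty$, $|\frac1n\sum_i\Delta_j(z_i)\bh(z_i)'\bbeta|\le\|\Delta_j\|_\infty\|\bbeta\|_1\max_k\|h_k\|_\infty$, and $\sup_{g\in\tilde{\mathcal{F}}}|\frac1n\sum_i\tilde{x}_{ij}g(z_i)|\le\|g\|_\infty\cdot\frac1n\sum_i|\tilde{x}_{ij}|$; the w.h.p.\ controls $\frac1n\sum_i|\varepsilon_i|\lesssim\sigma$ and $\frac1n\sum_i|\tilde{x}_{ij}|\lesssim1$ (the $6pe^{-n}$ term) then render each of these $\lesssim n^{-1}\log n\,(1+K_{\mathcal{F}}+\sigma+R^2\lambda^{-1})$, which is $\le D_4\lambda$ by \eqref{relb:3}. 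Taking a union bound over the six groups and splitting the budget $\lambda/8$ among them delivers the claimed probability for $\mathcal{S}_1$. For $\mathcal{S}_2$ and $\mathcal{S}_3$, since $\|\tilde{\bX}\bbeta\|_n^2-\|\tilde{\bx}'\bbeta\|^2=\bbeta'\bigl(\frac1n\tilde{\bX}'\tilde{\bX}-\mathbb{E}(\tilde{\bx}\tilde{\bx}')\bigr)\bbeta$, I bound this by $\|\bbeta\|_1^2\|\frac1n\tilde{\bX}'\tilde{\bX}-\mathbb{E}(\tilde{\bx}\tilde{\bx}')\|_{\max}$ and invoke the sub-exponential Bernstein-plus-union argument of term $I$ in Lemma \ref{hg:one} to get $\|\frac1n\tilde{\bX}'\tilde{\bX}-\mathbb{E}(\tilde{\bx}\tilde{\bx}')\|_{\max}\lesssim\sqrt{\log p/n}$ with probability $\ge1-2p^{-10}$; then $\|\bbeta\|_1\le20\delta_0R^2/\lambda$ converts $\|\bbeta\|_1^2\sqrt{\log p/n}\le\lambda\|\bbeta\|_1$ into the hypothesis $\delta_0\sqrt{\log p/n}\,R^2/\lambda^2\le D_8$, and the same argument with $\bX$ replacing $\tilde{\bX}$ gives $\mathcal{S}_3$.

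The main obstacle is not any individual estimate --- each of the six groups in $\mathcal{S}_1$ is either a routine Bernstein/H\"older bound or a direct transcription of the entropy argument already executed for term $III$ of Lemma \ref{hg:one}, and $\mathcal{S}_2,\mathcal{S}_3$ merely reuse term $I$. The real difficulty is the bookkeeping: I must match each of the six heterogeneous contributions to exactly one of the scaling hypotheses \eqref{relb:1}--\eqref{relb:3}, keep the two different union-bound cardinalities ($p$ for the noise term, $p^2$ for the covariate--$\bh$ Gram term) consistent with the stated probability, and, crucially, verify that subtracting $h_j$ preserves the sub-Gaussianity of $\tilde{x}_{ij}$ --- which is exactly where Lemma \ref{basics}(ii) and Condition \ref{con:five} are needed.
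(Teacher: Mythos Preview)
Your proposal is correct and follows essentially the same approach as the paper: the same six-term decomposition for $\mathcal{S}_1$, the same Bernstein/H\"older/entropy estimates for each piece (reusing the arguments for terms $I$ and $III$ of Lemma~\ref{hg:one}), and the same reduction of $\mathcal{S}_2,\mathcal{S}_3$ to term $I$. One minor point: the paper justifies sub-Gaussianity of $\tilde{x}_{ij}$ directly by noting that $h_j(z)=\mathbb{E}(x_j\mid z)$ inherits the sub-Gaussian norm of $x_j$ (so Condition~\ref{con:five} is not needed for this step), rather than appealing to a $\|h_j\|_\infty$ bound via the argument behind Lemma~\ref{basics}(ii).
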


\begin{proof}
For $\bDelta(z)=(\Delta_1(z),\ldots, \Delta_p(z))$ with $\Delta_j\in \tilde{\mathcal{F}}, 1\leq j \leq p$,
\begin{align*}
&\Big\|\frac{1}{n}\bDelta(\bZ)'(g(\bZ)-\bh(\bZ)\bbeta+\beps)\Big\|_{\infty} \\
\leq&~ \|\bDelta(\bZ)\|_{\max}\cdot \Big(\frac{1}{n}\|g(\bZ)\|_1+\frac{1}{n}\|\bh(\bZ)\bbeta\|_1+\frac{1}{n}\|\beps\|_1\Big) \\
\leq &~ 22b_k(L_g+L_h)\ell_z^{-1}n^{-1}\log n \cdot \Big(\|g\|_{\infty}+\|\bbeta\|_1\cdot \max_j \|h_j(\bZ)\|_1/n+ \|\beps\|_1/n\Big).
\end{align*}
Using Hoeffding's inequality in Theorem \ref{hoeffding:quote}, we obtain
\begin{align*}
\mathbb{P}( \|\beps\|_1/n\leq C_3\sigma K_{\varepsilon}) \geq 1- 2e^{-n}, \quad \mathbb{P}(\max_j \|h_j(\bZ)\|_1/n \leq C_4 K_x)\geq 1-2pe^{-n}. 
\end{align*}
Given that $\|x_j\|_{\psi_2}\leq K_x$, it holds that $\|h_j(z)\|_{\psi_2}\leq C_1K_x, \|\tilde{x}_j\|_{\psi_2}\leq C_2K_x$. Thus with probability at least $1-2(p+1)e^{-n}$, 
\begin{align}
\label{proof:1}
&\sup_{\bbeta\in\mathcal{B} , g\in \mathcal{F} \cup \tilde{\mathcal{F}},\Delta_j\in \tilde{\mathcal{F}}}\Big\|\frac{1}{n}\bDelta(\bZ)'(g(\bZ)-\bh(\bZ)\bbeta+\beps)\Big\|_{\infty} \nonumber \\
\leq &  D_1n^{-1}\log n \cdot \big(K_{\mathcal{F}}+ n^{-1}\log n +R^2\lambda^{-1}+\sigma \big).
\end{align}
Next we have
\begin{align*}
&\Big\|\frac{1}{n}\tilde{\bX}'(g(\bZ)-\bh(\bZ)\bbeta+\beps)\Big\|_{\infty} \leq \Big\|\frac{1}{n}\tilde{\bX}'g(\bZ)\Big\|_{\infty}+ \Big\|\frac{1}{n}\tilde{\bX}'\bh(\bZ)\bbeta \Big\|_{\infty}+ \Big\|\frac{1}{n}\tilde{\bX}'\beps\Big\|_{\infty} \\
\leq & \max_{1\leq j\leq p} \big|\frac{1}{n}\sum_{i=1}^n\tilde{x}_{ij}g(z_i)\big|+\|\beta\|_1\cdot \max_{1\leq j,k\leq p} \big|\frac{1}{n}\sum_{i=1}^n\tilde{x}_{ij}h_k(z_i)\big|+\big\|\frac{1}{n}\sum_{i=1}^n \tilde{\bx}_i\varepsilon_i\big\|_{\infty}.
\end{align*}
Note that we have used Bernstein's inequality to bound $\big\|\frac{1}{n}\sum_{i=1}^n \bx_i\varepsilon_i\big\|_{\infty}$ in the proof of Lemma \ref{hp:event2}. The same result holds here,
\[
\mathbb{P}\Big(\Big\|\frac{1}{n}\sum_{i=1}^n\tilde{\bx}_i\varepsilon_i\Big\|_{\infty} \leq \frac{\lambda}{40}\Big)\geq 1-2pe^{-C_5n\min\big(\frac{\lambda^2}{(K_xK_{\varepsilon}\sigma)^2}, \frac{\lambda}{K_xK_{\varepsilon}\sigma}\big)}.
\]
Since $\{\tilde{x}_{ij}h_k(z_i)\}_{i=1}^n$ are independent, zero-mean, sub-exponential random variables with $\|\tilde{x}_{ij}h_k(z_i)\|_{\psi_1}\leq \|\tilde{x}_{ij}\|_{\psi_2}\cdot \|h_k(z_i)\|_{\psi_2}\leq C_6K_x^2$, we use again Bernstein's inequality to obtain
\[
\mathbb{P}\Big(\max_{1\leq j,k\leq p} \big|\frac{1}{n}\sum_{i=1}^n\tilde{x}_{ij}h_k(z_i)\big| \leq \lambda/(20\sigma)\Big)\geq 1-2p^2e^{-C_7n\min\big(\frac{\lambda^2}{K^4_x\sigma^2}, \frac{\lambda}{K^2_x\sigma}\big)}.
\]
Hence with probability at least $1-2(p+p^2)e^{-C_8n\min\big(\frac{\lambda^2}{K_x^2\sigma^2(K^2_{\varepsilon}+K_x^2)}, \frac{\lambda}{K_x\sigma(K_{\varepsilon}+K_x)}\big)}$,
\begin{align}
\label{proof:2}
\sup_{\bbeta\in\mathcal{B} , g\in \mathcal{F} \cup \tilde{\mathcal{F}},\Delta_j\in \tilde{\mathcal{F}}} \Big\|\frac{1}{n}\tilde{\bX}'(g(\bZ)-\bh(\bZ)\bbeta+\beps)\Big\|_{\infty} &\leq \sup_{g\in g\in \mathcal{F} \cup \tilde{\mathcal{F}}, 1\leq j\leq p}  \big|\frac{1}{n}\sum_{i=1}^n\tilde{x}_{ij}g(z_i)\big|\nonumber \\
&~~+ \sigma^{-1}\delta_0R^2+\frac{\lambda}{40}.
\end{align}
We continue in the following way,
\begin{align}
\label{proof:3}
\sup_{g\in \mathcal{F} \cup \tilde{\mathcal{F}}, 1\leq j\leq p}  \big|\frac{1}{n}\sum_{i=1}^n\tilde{x}_{ij}g(z_i)\big| &\leq \sup_{ g\in \mathcal{F}, 1\leq j\leq p}  \big|\frac{1}{n}\sum_{i=1}^n\tilde{x}_{ij}g(z_i)\big|+ \sup_{g\in \tilde{\mathcal{F}}, 1\leq j\leq p}  \big|\frac{1}{n}\sum_{i=1}^n\tilde{x}_{ij}g(z_i)\big| \nonumber \\
&\hspace{-0.3cm} \leq \sup_{g\in \mathcal{F}, 1\leq j\leq p}  \big|\frac{1}{n}\sum_{i=1}^n\tilde{x}_{ij}g(z_i)\big|\nonumber \\
&~~+ \frac{22b_k(L_g+L_h)\log n}{\ell_z n}\cdot \max_{1\leq j\leq p} \frac{1}{n}\sum_{i=1}^n|\tilde{x}_{ij}|
\end{align}
As in bounding $\max_j \|h_j(\bZ)\|_1/n$ earlier, we bound 
\[
\mathbb{P}\Big(\max_{1\leq j\leq p} \frac{1}{n}\sum_{i=1}^n|\tilde{x}_{ij}| \leq C_9 K_x\Big)\geq 1-2pe^{-n}. 
\]
Moreover, applying the same arguments for bounding $\sup_{g\in \mathcal{F}}|\frac{1}{n}\sum_{i=1}^n(x_{ij}g(z_i)-\mathbb{E}x_jg(z))|$ in the proof of Lemma \ref{hg:one}, we have that under the conditions \eqref{relb:1}-\eqref{relb:2},
\begin{align}
\label{proof:4}
  \mathbb{P}\Big(\sup_{g\in \mathcal{F}, 1\leq j\leq p}  \big|\frac{1}{n}\sum_{i=1}^n\tilde{x}_{ij}g(z_i)\big| \geq\frac{\lambda}{120}\Big)\leq 8pe^{-D_2(\lambda^2 nR^{\frac{-2k-1}{k+1}}K_{\mathcal{F}}^{\frac{-1}{k+1}} \wedge \lambda \sqrt{n}R^{-1})}+4pe^{-\frac{\delta_0^2R^2n}{D_3K_{\mathcal{F}}^2}}.
\end{align}
Combining \eqref{relb:3}-\eqref{proof:4} completes the bound for $\mathbb{P}(\mathcal{S}_1)$.

Regarding the bound for $\mathbb{P}(\mathcal{S}_2)$ and $\mathbb{P}(\mathcal{S}_3)$, the proof follows the same arguments used for bounding term $I$ in the proof of Lemma \ref{hg:one}.
\end{proof}

\subsection{Completion of the proof}

See Section \ref{prove:both}. 

\section{Reference materials for proofs}

\begin{theorem} 
\label{hoeffding:quote}
(General Hoeffding's inequality). Let $x_1, \ldots, x_n \in \RR$ be independent, zero-mean, sub-gaussian random variables. Then for every $t \geq 0$, we have
\begin{align*}
\mathbb{P}\bigg(\Big|\sum_{i=1}^nx_i \Big|\geq t \bigg) \leq 2\exp \Big(-\frac{ct^2}{\sum_{i=1}^n\|x_i\|^2_{\psi_2}}\Big),
\end{align*}
where $c>0$ is an absolute constant, and $\|\cdot \|_{\psi_2}$ is the sub-gaussian norm defined as $\|x \|_{\psi_2}=\inf\{t>0: \mathbb{E}e^{x^2/t^2}\leq 2\}$. 

\vspace{0.1cm}

\noindent (Bernstein's inequality). Let $x_1,\ldots, x_n \in \RR $ be independent, zero-mean, sub-exponential random variables. Then for every $t\geq 0$, we have
\begin{align*}
\mathbb{P}\bigg(\Big|\sum_{i=1}^nx_i \Big|\geq t \bigg) \leq 2\exp\Bigg[-c \min \bigg(\frac{t^2}{\sum_{i=1}^n\|x_i\|^2_{\psi_1}}, \frac{t}{\max_i \|x_i\|_{\psi_1}} \bigg) \Bigg],
\end{align*}
where $c>0$ is an absolute constant, and $\|\cdot\|_{\psi_1}$ is the sub-exponential norm defined as $\|x\|_{\psi_1}=\inf\{t>0: \mathbb{E}e^{|x|/t}\leq 2\}$.
\end{theorem}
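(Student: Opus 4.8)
The plan is to prove both concentration inequalities by the classical exponential-moment (Chernoff) method, which reduces each tail bound to controlling the moment generating function (MGF) of a single centered summand and then optimizing over a free parameter $\lambda$. In each case I would first bound the one-sided upper tail $\mathbb{P}(\sum_i x_i \geq t)$, then apply the identical argument to $-x_1,\ldots,-x_n$ and a union bound to obtain the two-sided statement; this accounts for the leading factor $2$. Throughout, write $V=\sum_{i=1}^n \|x_i\|^2_{\psi_2}$ (resp.\ $\sum_{i=1}^n \|x_i\|^2_{\psi_1}$) for the relevant variance proxy.

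For the General Hoeffding's inequality, the first step is to convert the sub-gaussian norm into a global MGF estimate: from $\|x_i\|_{\psi_2}\leq K_i$ and $\mathbb{E} x_i=0$ one has $\mathbb{E} e^{\lambda x_i}\leq e^{C_0\lambda^2 K_i^2}$ for every $\lambda\in\mathbb{R}$, with $C_0$ absolute. Independence gives $\mathbb{E} e^{\lambda\sum_i x_i}=\prod_i \mathbb{E} e^{\lambda x_i}\leq e^{C_0\lambda^2 V}$, and Markov's inequality applied to $e^{\lambda\sum_i x_i}$ yields $\mathbb{P}(\sum_i x_i\geq t)\leq e^{-\lambda t+C_0\lambda^2 V}$. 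Choosing $\lambda=t/(2C_0 V)$ minimizes the exponent and produces $e^{-t^2/(4C_0 V)}$; setting $c=1/(4C_0)$ and doubling for the lower tail completes this half.

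For Bernstein's inequality the skeleton is identical, but the MGF control is only \emph{local}: from $\|x_i\|_{\psi_1}\leq K_i$ and $\mathbb{E} x_i=0$ one gets $\mathbb{E} e^{\lambda x_i}\leq e^{C_1\lambda^2 K_i^2}$ valid only for $|\lambda|\leq c_1/K_i$, so the Chernoff exponent $-\lambda t+C_1\lambda^2 V$ may be optimized only over $|\lambda|\leq c_1/\max_i K_i$. This forces the characteristic two-regime analysis. When $t$ is small, the unconstrained minimizer $\lambda^\ast=t/(2C_1 V)$ lies inside the admissible interval and gives the sub-gaussian tail $e^{-t^2/(4C_1 V)}$. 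When $t$ is large, $\lambda^\ast$ exceeds the boundary, so one sets $\lambda=c_1/\max_i K_i$; using precisely the defining inequality $t>2C_1 V c_1/\max_i K_i$ of this regime, the quadratic term is absorbed into half the linear term, leaving the sub-exponential tail $e^{-c\,t/\max_i K_i}$. Taking the minimum of the two exponents and doubling reproduces the stated $\min(t^2/V,\,t/\max_i\|x_i\|_{\psi_1})$ form.

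The main obstacle I anticipate is not the Chernoff optimization, which is routine, but the passage from the $\psi_\alpha$-norm definitions to the MGF estimates and the bookkeeping of absolute constants so that a single $c$ serves in the final display. In the sub-gaussian case one must show that $\mathbb{E} e^{x^2/K^2}\leq 2$ implies a \emph{global} quadratic MGF bound for the centered variable; in the sub-exponential case one must track the radius $c_1/K$ of validity of the quadratic bound and confirm that the boundary choice of $\lambda$ exactly covers the complementary range of $t$. These are the standard equivalences among characterizations of sub-gaussian and sub-exponential variables, and once they are established the two inequalities follow from the Chernoff argument sketched above.
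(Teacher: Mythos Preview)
Your proof sketch is correct and follows the standard Chernoff (exponential-Markov) route: establish the MGF bound from the $\psi_\alpha$-norm definition, use independence to factorize, and optimize over $\lambda$, with the sub-exponential case splitting into the two regimes dictated by the constraint $|\lambda|\le c_1/\max_i K_i$. There is no gap.

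However, the paper does not actually prove this theorem. It is placed in a ``Reference materials'' subsection and is immediately followed by the sentence ``The above two results are Theorem 2.6.2 and Theorem 2.8.1, respectively in \cite{vershynin2018high}.'' In other words, the paper simply quotes the two inequalities from Vershynin's textbook and uses them as black boxes in the proofs of the main results. Your argument is precisely the proof Vershynin gives there, so in substance you have reproduced what the cited source does; the paper itself offers no alternative argument to compare against.
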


The above two results are Theorem 2.6.2 and Theorem 2.8.1, respectively in \cite{vershynin2018high}.

\begin{theorem}
\label{maximal:noteq}
Let $x_1,\ldots, x_p \in \mathbb{R}$ be sub-gaussian random variables, which are not necessarily independent. Then there exists an absolute constant $c>0$ such that for all $p>1$,
\begin{eqnarray*}
\mathbb{E}\max_{1\leq i\leq p} |x_i| \leq c \sqrt{\log p} \max_{1\leq i\leq p} \|x_i\|_{\psi_2}.
\end{eqnarray*}
\end{theorem}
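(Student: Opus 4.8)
The plan is to use the standard exponential-moment (Chernoff) device together with Jensen's inequality, which is the canonical route for maximal inequalities over a sub-Gaussian collection and crucially needs no independence. Write $K=\max_{1\le i\le p}\|x_i\|_{\psi_2}$ for the common scale. The first step is to record a uniform one-sided moment generating function bound for each $|x_i|$. Starting from the elementary inequality $\lambda|x_i|\le \tfrac{\lambda^2K^2}{2}+\tfrac{x_i^2}{2K^2}$ (an instance of $ab\le\tfrac12 a^2+\tfrac12 b^2$), exponentiating and taking expectations gives $\mathbb{E}\,e^{\lambda|x_i|}\le e^{\lambda^2K^2/2}\,\mathbb{E}\,e^{x_i^2/(2K^2)}$, and since $x_i^2/(2K^2)\le x_i^2/K^2$ pointwise, the very definition of the $\psi_2$ norm yields $\mathbb{E}\,e^{x_i^2/(2K^2)}\le \mathbb{E}\,e^{x_i^2/K^2}\le 2$. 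Hence $\mathbb{E}\,e^{\lambda|x_i|}\le 2e^{\lambda^2K^2/2}$ for every $\lambda>0$, uniformly in $i$.

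Next I would pass from the individual bounds to the maximum. For any $\lambda>0$, Jensen's inequality applied to the convex map $t\mapsto e^{\lambda t}$ gives
\[
\exp\!\Big(\lambda\,\mathbb{E}\max_{1\le i\le p}|x_i|\Big)\le \mathbb{E}\exp\!\Big(\lambda\max_{1\le i\le p}|x_i|\Big)=\mathbb{E}\max_{1\le i\le p} e^{\lambda|x_i|}\le \sum_{i=1}^p \mathbb{E}\,e^{\lambda|x_i|}\le 2p\,e^{\lambda^2K^2/2},
\]
where the only place the (possible) dependence among the $x_i$ could matter is the crude bound of the maximum by the sum, and this is precisely what lets us dispense with any independence assumption. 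Taking logarithms and rearranging yields
\[
\mathbb{E}\max_{1\le i\le p}|x_i|\le \frac{\log(2p)}{\lambda}+\frac{\lambda K^2}{2}.
\]

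Finally I would optimize the free parameter $\lambda$. Minimizing the right-hand side over $\lambda>0$ (the minimizer is $\lambda=\sqrt{2\log(2p)}/K$) gives $\mathbb{E}\max_i|x_i|\le K\sqrt{2\log(2p)}$, and since $p\ge 2$ we have $\log(2p)\le 2\log p$, so $\mathbb{E}\max_i|x_i|\le 2K\sqrt{\log p}$, which is the claim with the explicit absolute constant $c=2$. There is no genuine obstacle here, as the argument is entirely standard; the only point demanding mild care is the uniform MGF bound of the first step, where one must work directly from the $\psi_2$ definition rather than from a tail bound, and the harmless factor of $2$ it produces must be absorbed through $\log(2p)\asymp\log p$. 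Everything else is routine bookkeeping.
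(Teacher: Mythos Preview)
Your argument is correct and entirely standard: the MGF bound via $ab\le \tfrac12 a^2+\tfrac12 b^2$ together with the definition of $\|\cdot\|_{\psi_2}$, the Jensen-plus-union step, and the optimization in $\lambda$ all go through as written, and the final conversion $\log(2p)\le 2\log p$ for integer $p\ge 2$ gives the stated conclusion with $c=2$.

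There is essentially nothing to compare against: the paper does not prove this statement but simply cites it as Lemma~2.4 of Boucheron, Lugosi and Massart (2013). Your proof is precisely the classical Chernoff--Jensen argument that appears there (and in most textbook treatments), so the approaches coincide.
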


The above result can be found in Lemma 2.4 of \cite{blm13}.

\begin{theorem}
\label{emp:norm}
(Uniform convergence of empirical norms). Denote $\|g\|_n^2=\frac{1}{n}\sum_{i=1}^ng(z_i)^2$ for i.i.d. samples $z_i\in \mathcal{Z}$. For a class $\mathcal{F}$ of functions on $\mathcal{Z}$, let $\mathcal{J}_0(t, \mathcal{F})=C_0 t\int_{0}^1\sup_{\{z_1,\ldots, z_n\}\subseteq \mathcal{Z}}\sqrt{\mathcal{H}(ut/2,\mathcal{F}, \|\cdot\|_n)}du, t>0$, where $C_0>0$ is some universal constant and $\mathcal{H}$ is the metric entropy. Let $R_{ \mathcal{F}}:=\sup_{g\in \mathcal{F}}\|g\|, K_{ \mathcal{F}}:=\sup_{g\in \mathcal{F}}\|g\|_{\infty}$, and $H(t)$ be the convex conjugate of $G(t):=(\mathcal{J}^{-1}_0(t,\mathcal{F}))^2$. Then, for all $R^2_{ \mathcal{F}}\geq H(4K_{ \mathcal{F}}/\sqrt{n})$ and all $t>0$,
\begin{align*}
\mathbb{P}\Bigg( \sup_{g\in \mathcal{F}}\Big|\|g\|_n^2-\|g\|^2\Big|\geq C_1\bigg(\frac{2K_{ \mathcal{F}}\mathcal{J}_0(2R_{ \mathcal{F}},\mathcal{F})+K_{ \mathcal{F}}R_{ \mathcal{F}}\sqrt{t}}{\sqrt{n}}+\frac{K^2_{ \mathcal{F}}t}{n}\bigg)\Bigg)\leq e^{-t},
\end{align*}
where $C_1>0$ is a universal constant. 
\end{theorem}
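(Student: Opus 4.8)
The plan is to read the left-hand side as the supremum of a centered empirical process indexed by the squared function class and to control it with the standard triad of \emph{concentration}, \emph{symmetrization/contraction}, and \emph{chaining}, with the convex-conjugate hypothesis serving as a localization device. Writing $Z=\sup_{g\in\mathcal{F}}\big|\frac{1}{n}\sum_{i=1}^n(g^2(z_i)-\mathbb{E}g^2(z))\big|$, the target is a tail bound for $Z$, which I would assemble as (expectation bound) $+$ (fluctuation terms).

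First I would produce the two data-dependent summands via concentration. Applying Talagrand's inequality in Bousquet's form to the class $\{g^2:g\in\mathcal{F}\}$, whose members are bounded by $K_{\mathcal{F}}^2$ and satisfy $\mathrm{Var}(g^2(z))\le\mathbb{E}g^4(z)\le K_{\mathcal{F}}^2\,\mathbb{E}g^2(z)\le K_{\mathcal{F}}^2 R_{\mathcal{F}}^2$, gives, with probability at least $1-e^{-t}$, a bound of the form $Z\le \mathbb{E}Z+C\big(\frac{K_{\mathcal{F}}R_{\mathcal{F}}\sqrt{t}}{\sqrt{n}}+\frac{K_{\mathcal{F}}^2 t}{n}\big)$. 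This already matches the $\frac{K_{\mathcal{F}}R_{\mathcal{F}}\sqrt{t}}{\sqrt{n}}$ and $\frac{K_{\mathcal{F}}^2 t}{n}$ terms in the statement, so it remains to show $\mathbb{E}Z$ is at most a constant multiple of $\frac{K_{\mathcal{F}}\mathcal{J}_0(2R_{\mathcal{F}},\mathcal{F})}{\sqrt{n}}$.

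For $\mathbb{E}Z$ I would symmetrize to pass to the Rademacher average $\mathbb{E}\sup_{g}\big|\frac{1}{n}\sum_i\epsilon_i g^2(z_i)\big|$, then invoke the Ledoux--Talagrand contraction inequality: since $x\mapsto x^2$ vanishes at $0$ and is $2K_{\mathcal{F}}$-Lipschitz on $[-K_{\mathcal{F}},K_{\mathcal{F}}]$, the squared process is dominated by the Lipschitz constant $2K_{\mathcal{F}}$ times the Rademacher average of $\mathcal{F}$ itself (this furnishes the prefactor $2K_{\mathcal{F}}$ in the statement, with the remaining numerical factors absorbed into $C_1$). Conditionally on the design, Dudley's entropy integral bounds that Rademacher average by $\frac{C}{\sqrt{n}}\int_0^{\hat{R}}\sqrt{\mathcal{H}(u,\mathcal{F},\|\cdot\|_n)}\,du$ with $\hat{R}=\sup_{g}\|g\|_n$; taking the worst-case design inside $\mathcal{H}$ and using $\mathcal{J}_0(t,\mathcal{F})=2C_0\int_0^{t/2}\sqrt{\mathcal{H}(v,\mathcal{F},\|\cdot\|_n)}\,dv$ identifies this with $\frac{C K_{\mathcal{F}}}{\sqrt{n}}\mathcal{J}_0(2\hat{R},\mathcal{F})$ (the same computation as in \eqref{ddy:1}).

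The hard part is that $\hat{R}$ is random and satisfies only $\hat{R}^2\le R_{\mathcal{F}}^2+Z$, so the resulting bound on $\mathbb{E}Z$ is self-referential in the very quantity being controlled. I would close the loop by a localization/bootstrapping argument showing that the tail inequality forces $\hat{R}\le 2R_{\mathcal{F}}$ on the relevant event, whence $\mathcal{J}_0(2\hat{R},\mathcal{F})$ is comparable to $\mathcal{J}_0(2R_{\mathcal{F}},\mathcal{F})$ by monotonicity of $\mathcal{J}_0$ (with the mild scaling of the entropy integral absorbed into $C_1$). This is exactly where the hypothesis $R_{\mathcal{F}}^2\ge H(4K_{\mathcal{F}}/\sqrt{n})$ enters: by the Fenchel--Young inequality for the conjugate pair $(G,H)$ with $G(t)=(\mathcal{J}_0^{-1}(t))^2$, choosing $t=\mathcal{J}_0(2R_{\mathcal{F}},\mathcal{F})$ (so that $G(t)=4R_{\mathcal{F}}^2$) and $u=4K_{\mathcal{F}}/\sqrt{n}$ yields $\frac{4K_{\mathcal{F}}}{\sqrt{n}}\mathcal{J}_0(2R_{\mathcal{F}},\mathcal{F})\le G(t)+H(4K_{\mathcal{F}}/\sqrt{n})\le 4R_{\mathcal{F}}^2+R_{\mathcal{F}}^2$, so the chaining contribution is itself $O(R_{\mathcal{F}}^2)$. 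Hence $\mathbb{E}Z=O(R_{\mathcal{F}}^2)$, which keeps $\hat{R}^2\le R_{\mathcal{F}}^2+Z$ within a constant multiple of $R_{\mathcal{F}}^2$ and validates the substitution, closing the bootstrap. Combining the controlled expectation with the concentration step then delivers the claimed inequality. I expect the genuine difficulties to be making the bootstrapping rigorous (a peeling argument or a fixed point over the radius $\hat{R}$), tracking the universal constants faithfully through symmetrization and contraction, and justifying the passage from the random Dudley ceiling $\hat{R}$ to the deterministic $R_{\mathcal{F}}$ inside $\mathcal{J}_0$.
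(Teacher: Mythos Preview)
The paper does not prove this statement at all: it is listed under ``Reference materials'' and is simply quoted as Theorem~2.2 of \cite{van2014uniform}. So there is no in-paper argument to compare against; your proposal \emph{is} a proof sketch while the paper offers only a citation.

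That said, your outline is essentially the standard route (and, up to presentation, the one in the cited reference): Talagrand/Bousquet concentration for the bounded class $\{g^2:g\in\mathcal F\}$ using the variance bound $K_{\mathcal F}^2R_{\mathcal F}^2$, symmetrization and Ledoux--Talagrand contraction to peel off the $2K_{\mathcal F}$ factor, Dudley's integral conditionally on the design to get $\mathcal J_0(2\hat R,\mathcal F)/\sqrt n$, and then the localization step that replaces the random radius $\hat R$ by $R_{\mathcal F}$. Your use of Fenchel--Young with $G(t)=(\mathcal J_0^{-1}(t))^2$ and $H$ its conjugate is exactly the mechanism behind the hypothesis $R_{\mathcal F}^2\ge H(4K_{\mathcal F}/\sqrt n)$, and your identification of the genuine difficulty---making the bootstrap from $\hat R^2\le R_{\mathcal F}^2+Z$ back to $\hat R\le 2R_{\mathcal F}$ rigorous via a peeling or fixed-point argument---is accurate. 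Nothing in your sketch is wrong; it is simply more than what the paper itself supplies.
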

The above is Theorem 2.2 from \cite{van2014uniform}.

\begin{theorem}
\label{emp:sc}
(Symmetrization and concentration). Let $\mathbf{Y}=(y_1,\ldots, y_n)$ be i.i.d. samples in some sample space $\mathcal{Y}$ and let $\mathcal{F}$ be a class of real-valued functions on $\mathcal{Y}$. Define $Q_{\mathcal{F}}=\sup_{f\in \mathcal{F}}\|f\|, \hat{Q}_{\mathcal{F}}=\sup_{f\in \mathcal{F}}\|f\|_n$. Then,
\begin{align*}
 &\mathbb{P}\Big(\sup_{f\in \mathcal{F}}\Big|\frac{1}{n}\sum_{i=1}^n(f(y_i)-\mathbb{E}f(y_i))\Big|\geq 4Q_{\mathcal{F}}\sqrt{\frac{2t}{n}}\Big)\leq 4\mathbb{P}\Big(\sup_{f\in \mathcal{F}}\Big|\frac{1}{n}\sum_{i=1}^n\epsilon_if(y_i)\Big|\geq Q_{\mathcal{F}}\sqrt{\frac{2t}{n}}\Big), ~~\forall t\geq 4, \\
 & \mathbb{P}\Big(\sup_{f\in \mathcal{F}}\Big|\frac{1}{n}\sum_{i=1}^n\epsilon_if(y_i)\Big| \geq C_2\cdot \Big(\mathbb{E}\Big(\sup_{f\in \mathcal{F}}\Big|\frac{1}{n}\sum_{i=1}^n\epsilon_if(y_i)\Big| \Big| \mathbf{Y}\Big)+\hat{Q}_{\mathcal{F}}\sqrt{t/n}\Big) \Big| \mathbf{Y}\Big) \leq e^{-t}, ~~\forall t >0,
\end{align*}
where $\epsilon_1,\ldots, \epsilon_n$ is a Rademacher sequence independent of $\mathbf{Y}$ and $C_2>0$ is a universal constant.
\end{theorem}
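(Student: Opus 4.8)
The plan is to treat the two displayed inequalities separately, since the first is a tail-form symmetrization bound and the second is a conditional concentration bound for a Rademacher process; both are classical tools from empirical process theory, and I would assemble them rather than invent anything new.

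For the first inequality, I would run the standard ghost-sample argument in its tail form. Write $G_n:=\sup_{f\in\mathcal F}|\frac1n\sum_i(f(y_i)-\mathbb{E}f(y_i))|$ and introduce an independent copy $\mathbf Y'=(y_1',\ldots,y_n')$ of $\mathbf Y$, so that $\mathbb{E}f(y_i)=\mathbb{E}_{\mathbf Y'}f(y_i')$. On the event $\{G_n\ge a\}$ select a near-maximizer $f^*$. The key quantitative step is a Chebyshev bound on the ghost fluctuation: since $\mathrm{Var}(f^*(y_i'))\le\|f^*\|^2\le Q_{\mathcal F}^2$, the ghost average deviates from its mean by more than $a/2$ with probability at most $4Q_{\mathcal F}^2/(na^2)$, which is $\le 1/2$ once $a=4Q_{\mathcal F}\sqrt{2t/n}$ and $t\ge 4$. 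This forces the difference process $\frac1n\sum_i(f^*(y_i)-f^*(y_i'))$ to exceed $a/2$ with probability at least $\frac12\mathbb{P}(G_n\ge a)$. Because each summand $f(y_i)-f(y_i')$ is symmetric, multiplying by independent signs $\epsilon_i$ leaves the law of the difference process unchanged; a triangle inequality splits it into two identically distributed Rademacher processes, and a union bound supplies the remaining factor $2$. Collecting these yields the stated factor $4$ and threshold $a/4=Q_{\mathcal F}\sqrt{2t/n}$.

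For the second inequality I would argue conditionally on $\mathbf Y$, where $Z:=\sup_{f\in\mathcal F}|\frac1n\sum_i\epsilon_i f(y_i)|$ is a convex, Lipschitz function of the Rademacher vector $\epsilon\in\{-1,1\}^n$, being a supremum of absolute values of linear functionals. The subgradient at a maximizing $f^*$ is $\frac1n(f^*(y_1),\ldots,f^*(y_n))$, whose Euclidean norm equals $\frac1{\sqrt n}\|f^*\|_n\le \hat Q_{\mathcal F}/\sqrt n$, so $Z$ is $L$-Lipschitz with $L\le \hat Q_{\mathcal F}/\sqrt n$. Talagrand's concentration inequality for convex Lipschitz functions of independent bounded variables then yields a sub-Gaussian tail $\mathbb{P}(Z\ge \mathrm{med}(Z)+u)\le c\,e^{-u^2/(CL^2)}$; replacing the median by $\mathbb{E}[Z\mid\mathbf Y]$ (they differ by $O(L)$) and setting $u\asymp \hat Q_{\mathcal F}\sqrt{t/n}$ produces the claimed bound with an absorbing constant $C_2$ and tail $e^{-t}$.

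The main obstacle is obtaining the sharp variance proxy $\hat Q_{\mathcal F}^2/n$ in the second part: a naive bounded-differences (McDiarmid) argument only controls each coordinate difference by $\frac1n\sup_f|f(y_i)|$ and hence yields the cruder quantity $\frac1{n^2}\sum_i\sup_f f(y_i)^2\ge \hat Q_{\mathcal F}^2/n$. Recovering the correct $\hat Q_{\mathcal F}$ requires exploiting the convexity of $Z$ together with Talagrand's inequality (or, equivalently, a Bousquet/Klein--Rio-type bound for suprema of empirical processes), and the careful identification of the Lipschitz constant at the maximizer is the delicate point. In the first part the analogous care is confined to the Chebyshev step that pins down the constant $4$ and the side condition $t\ge 4$.
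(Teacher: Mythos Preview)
Your proposal is correct and follows the classical route: the first inequality is the Gin\'e--Zinn tail symmetrization via a ghost sample and a Chebyshev step, and the second is concentration of a convex Lipschitz functional of independent Rademacher variables (or, equivalently, concentration for suprema of sub-Gaussian processes) with the sharp variance proxy $\hat Q_{\mathcal F}^2/n$.

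For comparison, the paper does not prove this theorem at all: it is listed under ``Reference materials'' and is simply cited, the first inequality as Lemma~16.1 and the second as a consequence of Theorem~16.4 in van~de~Geer (2016). So your write-up is strictly more informative than what the paper supplies. One small point: your justification of the Lipschitz constant via ``the subgradient at the maximizer'' is a bit informal (the supremum need not be attained and the absolute value is not differentiable at zero); it is cleaner to bound $|Z(\epsilon)-Z(\epsilon')|\le \sup_{f\in\mathcal F}\bigl|\tfrac1n\sum_i(\epsilon_i-\epsilon_i')f(y_i)\bigr|\le \tfrac{\hat Q_{\mathcal F}}{\sqrt n}\,\|\epsilon-\epsilon'\|_2$ directly, which gives the same constant without any selection argument.
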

The first result is Lemma 16.1 in \cite{van2016estimation} and the second result is implied by Theorem 16.4 in \cite{van2016estimation}. 

\begin{theorem}
\label{ddy:2}
(Dudley's integral tail bound). Let $(X_t)_{t\in T}$ be a separable random process on a metric space $(T,d)$ with sub-Gaussian increments: $\|X_t-X_s\|_{\psi_2}\leq Kd(t,s), \forall t, s\in T$. Then, for every $u>0$, the event
\[
\sup_{t,s\in T}|X_t-X_s|\leq CK\Big[\int_0^{{\rm diam(T)}}\sqrt{\mathcal{H}(\varepsilon, T,d)}d\varepsilon+u\cdot {\rm diam(T)}\Big]
\]
holds with probability at least $1-2e^{-u^2}$. Here, $C>0$ is a universal constant, $\mathcal{H}$ is the metric entropy, and ${\rm diam(T)}=\sup_{s,t\in T}d(s,t)$.
\end{theorem}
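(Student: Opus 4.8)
The plan is to establish the bound by the classical chaining (successive-approximation) argument, tracking the tail probability throughout. By rescaling we may assume $K=1$, and since $|X_t-X_s|\le |X_t-X_{t_0}|+|X_{t_0}-X_s|$ for any fixed base point $t_0\in T$, it suffices to control $\sup_{t\in T}|X_t-X_{t_0}|$ uniformly and then double the resulting bound.

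First I would set up a dyadic sequence of nets. Write $D={\rm diam}(T)$ and, for each integer $k\ge 0$, fix a minimal $2^{-k}D$-net $T_k\subseteq T$, so that $\log|T_k|=\mathcal{H}(2^{-k}D,T,d)$ and $T_0=\{t_0\}$. For each $t$ let $\pi_k(t)\in T_k$ be a nearest point, so $d(t,\pi_k(t))\le 2^{-k}D$ and hence, by the triangle inequality, $d(\pi_{k-1}(t),\pi_k(t))\le 3\cdot 2^{-k}D$. Separability of $(X_t)$ justifies passing to the limit and writing the telescoping identity $X_t-X_{t_0}=\sum_{k\ge1}\bigl(X_{\pi_k(t)}-X_{\pi_{k-1}(t)}\bigr)$.

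The core step is to bound every scale simultaneously. At level $k$ there are at most $|T_{k-1}|\,|T_k|\le |T_k|^2$ distinct links $X_{\pi_k(t)}-X_{\pi_{k-1}(t)}$, each a mean-zero sub-Gaussian variable with parameter $\le 3\cdot2^{-k}D$ by the increment hypothesis. Applying the sub-Gaussian tail bound together with a union bound over these links, I would show that at deviation level $z_k\asymp 2^{-k}D\bigl(\sqrt{\mathcal{H}(2^{-k}D)}+\sqrt{k}+u\bigr)$ the largest link at level $k$ exceeds $z_k$ with probability at most $2e^{-k}e^{-u^2}$; here the $\log|T_k|^2=2\mathcal{H}(2^{-k}D)$ cost of the union bound is absorbed into the first summand, while the extra $\sqrt{k}$ produces the geometric factor $e^{-k}$ that makes the failure probabilities summable. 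Summing $z_k$ over $k$ and comparing the Riemann-type sum $\sum_k 2^{-k}D\sqrt{\mathcal{H}(2^{-k}D)}$ to the Dudley integral $\int_0^{D}\sqrt{\mathcal{H}(\varepsilon,T,d)}\,d\varepsilon$ (using monotonicity of $\mathcal{H}$ in $\varepsilon$), while $\sum_k 2^{-k}D(\sqrt{k}+u)\lesssim u\,D$, yields the advertised bound, and doubling for the symmetric term $|X_{t_0}-X_s|$ completes the argument.

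The main obstacle is the bookkeeping in this multi-scale union bound: the per-level levels $z_k$ must be chosen so that (i) their sum collapses to exactly the entropy integral plus a clean $u\cdot{\rm diam}(T)$ term, and (ii) the associated failure probabilities form a geometric series summing to at most $2e^{-u^2}$, all while the $2\mathcal{H}(2^{-k}D)$ cost of the union bound is charged to the $\sqrt{\mathcal{H}}$ contribution rather than to the tail. Pinning down the universal constant $C$ and the precise $1-2e^{-u^2}$ probability requires carefully balancing the $\sqrt{\mathcal{H}(2^{-k}D)}$, $\sqrt{k}$, and $u$ contributions at each scale.
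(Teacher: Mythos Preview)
Your chaining sketch is correct and is essentially the standard proof of this result; the paper, however, does not prove the statement at all. It is listed in the ``Reference materials'' subsection and is simply cited as Theorem~8.1.6 of Vershynin's \emph{High-Dimensional Probability}, with no proof given. So there is nothing in the paper to compare against beyond the citation, and your proposal in fact reproduces the argument that the cited reference uses.
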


The result above is Theorem 8.1.6 in \cite{vershynin2018high}. 

\section{Details regarding IDATA study and the data collection}\label{app:IDATA}

\subsection{Additional rationale for our study}
Identifying biomarkers of Ultra-processed food (UPF) intake has the potential to address limitations of traditional dietary assessment methods, like food frequency questionnaires (FFQ), that are prone to measurement error \citep{subar2003using, subar2001comparative, park2018comparison}. FFQ tend to underestimate nutrient intake, and do not systematically capture data on food source or method preparation, which inform accurate NOVA classification \citep{steele2023identifying}. In addition, biomarkers of UPF may provide novel insight into biological mechanisms underlying potential UPF-disease associations. Domiciled feeding trials offer a reliable way to control UPF consumption, study its metabolic effects, and identify candidate biomarkers of UPF intake \citep{o2023metabolomic}. A recent metabolomics investigation within a randomized, crossover, controlled-feeding trial found that consuming a diet with 80\% energy from UPF for two weeks, compared to one with 0\% energy from UPF, resulted in measurable changes to more than 250 circulating metabolites and more than 600 urine metabolites, within individuals. Building on these findings, we apply our method to investigate associations between UPF intake and a range of metabolites and to uncover nuanced connections that may improve understanding of UPF-chronic disease associations and ultimately inform public health guidance on UPF.

\subsection{Interactive Diet and Activity Tracking in AARP (IDATA) Study}
The Interactive Diet and Activity Tracking in AARP (IDATA) Study was specifically designed to evaluate the efficacy of web-based dietary assessment tools, including the Automated Self-Administered 24-hour Dietary Assessment Tool (ASA-24), four-day food records (4DFRs), and the Dietary History Questionnaire (DHQ) II, in comparison to reference biomarkers \citep{subar2020performance}. Briefly, participants were recruited from a cohort of AARP members aged 50-74 years, residing in or near Pittsburgh, Pennsylvania, who met the following inclusion criteria: English-speaking, internet access, not currently engaged in a weight-loss diet, a body mass index (BMI) below 40 kg/m$^2$, and no significant medical conditions or mobility limitations. Between 2012 and 2013, 1,082 participants were enrolled in the study, all of whom provided consent for biospecimen collection \citep{park2018comparison}. The study protocol was approved by the NCI Special Studies Institutional Review Board and registered on ClinicalTrials.gov (Identifier: NCT03268577). All participants provided written informed consent \citep{subar2020performance}.
Additional details on the design and methodology of the IDATA study have been thoroughly outlined on \href{https://cdas.cancer.gov/idata/}{https://cdas.cancer.gov/idata/}. 

\subsection{Dietary data and biological sample collections}


Participants were divided into four groups to minimize the impact of seasonal variations in diet and for practical reasons related to study centers. Over a 12-month period, participants completed up to six web-based ASA-24 assessments, administered on randomly assigned days approximately every two months \citep{park2018comparison}. Each reported food and beverage item was assigned a unique 8-digit food code based on the "What We Eat in America" (WWEIA) database \citep{martin2014usda}, as part of NHANES \citep{steele2023identifying}. These codes were linked to the Food and Nutrient Database for Dietary Studies (FNDDS) \citep{martin2014usda}, which is used to convert food and beverage portions into gram amounts and estimate nutrient values, including energy, by employing standard reference codes (SR codes) from the USDA National Nutrient Database for Standard Reference.

The intake of ultra-processed foods (UPF) was estimated using the NOVA system, which categorizes foods and beverages into four groups according to the degree and purpose of industrial processing. Group 1 includes unprocessed or minimally processed foods, such as fresh, dried, or frozen fruits and vegetables, grains, legumes, meat, fish, and milk. In contrast, Group 4 comprises UPFs, such as ready-to-eat products like commercially prepared breads and baked goods, which contain ingredients not commonly used in traditional culinary practices \citep{monteiro2018decade}. We disaggregated each FNDDS food code into its corresponding SR codes through a series of merges. Each IDATA food item (identified by food code and SR code) was then assigned to one of the four Nova groups and one of 37 mutually exclusive food subgroups, as specified by the WWEIA, NHANES database developed by \citet{steele2019dietary, steele2023identifying}. For FNDDS food codes that had not yet been classified into a Nova group or subgroup, a manual review was conducted, and the classification was determined using the "reference approach" outlined by \citep{steele2023identifying}.

For each ASA-24, food and beverage intake, in gram weight, was estimated and classified according to Nova. For each participant, we averaged the absolute gram weight derived from UPF across multiple recall days. Further details regarding the calculation of UPF intake using the Nova system can be found in \citet{steele2019dietary, steele2023identifying}.

Serum metabolomics data were generated using blood samples collected during two study center visits—either at months 1 and 6 or at months 6 and 12. Urine metabolomics data were generated using home collections, which occurred approximately 7 to 10 days after a study center visit. Participants were asked to collect 100 mL of their first-morning void (FMV) prior to beginning a 24-hour urine collection.

Serum and urine metabolomic analyses were conducted by Metabolon Inc. using ultra-high-performance liquid chromatography (UPLC) coupled with tandem mass spectrometry (MS/MS) to measure a wide range of metabolites. These included endogenously derived amino acids, carbohydrates, lipids, cofactors and vitamins, energy metabolism intermediates, as well as xenobiotics originating from external sources like food or drugs. Serum and urine were analyzed separately, and all samples were prepared using the automated MicroLab STAR system (Hamilton Company). Metabolite values below the detection limit were assigned the minimum observed value for each metabolite and were standardized. For analysis, metabolite levels were log-transformed.

\section{Supplementary tables}\label{supp:tables}
\setcounter{table}{0}
\renewcommand{\thetable}{S\arabic{table}}

We present the full list of the top 10 selected metabolites for both serum and urine datasets, stratified by gender, in Tables \ref{supp:tab:serum_selected:f}-\ref{supp:tab:urine_selected:m}.

\begin{table}[!htbp] 
\caption{Top 10 selected variables results for serum dataset, stratified by female and ordered based on the magnitude of coefficients.} 
\small
\centering 
\begin{tabular}{@{\extracolsep{0pt}} cccc} 
\\[-1.8ex]\hline 
\hline 
 & & & Selected Metabolites \\ 
\hline 
 & \multirow{20}{*}{Female} & \multirow{5}{*}{Age} & Diglycerol, X-21807, \\ 
 &&& PC(18:0/22:4(7Z,10Z,13Z,16Z)), 2-Aminobenzoic acid, \\ 
 &&& 4-allylphenol sulfate, X-25523, \\ 
 &&& N-Acetylglucosamine/N-Acetylgalactosamine, Eicosapentaenoylcholine, \\ 
 &&& FIBRINOPEPTIDEA(3-15), Branched chain 14:0 dicarboxylic acid \\ \cline{3-4}
&&\multirow{5}{*}{BMI} & X-21807, Quinic acid, \\ 
 &&& Eicosapentaenoylcholine, Diglycerol, \\ 
 &&& X-25523, 4-Deoxythreonic acid, \\ 
 &&& (R)-2,3-Dihydroxy-isovalerate, FIBRINOPEPTIDEA(3-15), \\ 
 &&& 4-allylphenol sulfate, Deoxyuridine \\  \cline{3-4} 
&&\multirow{5}{*}{Hip} & Diglycerol, X-21807, \\ 
 &&& Quinic acid, Eicosapentaenoylcholine, \\ 
 &&& X-25523, PC(18:0/22:4(7Z,10Z,13Z,16Z)), \\ 
 &&& 4-allylphenol sulfate, Branched chain 14:0 dicarboxylic acid, \\ 
 &&& 2-Aminobenzoic acid, 1-Methyluric acid \\  \cline{3-4} 
&&\multirow{5}{*}{Waist} & Diglycerol, PC(18:0/22:4(7Z,10Z,13Z,16Z)), \\ 
 &&& X-21807, X-25523, \\ 
 &&& 2-Aminobenzoic acid, 4-allylphenol sulfate, \\ 
 &&& Eicosapentaenoylcholine, N2-gamma-Glutamylglutamine, \\ 
 &&& Branched chain 14:0 dicarboxylic acid, Pentoic acid \\
\hline \hline \\ \\[-1.8ex] 
\end{tabular} 
\label{supp:tab:serum_selected:f} 
\end{table}

\begin{table}[!htbp] 
\caption{Top 10 selected variables results for serum dataset, stratified by male and ordered based on the magnitude of coefficients.} 
\label{supp:tab:serum_selected:m} 
\small
\centering 
\begin{tabular}{@{\extracolsep{0pt}} cccc} 
\\[-1.8ex]\hline 
\hline 
 & & & Selected Metabolites \\ 
\hline 
& \multirow{20}{*}{Male} & \multirow{5}{*}{Age} & X-21442, X-23655, \\ 
 &&& 1-Methylhistidine, Quinic acid, \\ 
 &&& LysoPC(24:0), X-19183, \\ 
 &&& PE(P-16:0/18:1(9Z)), Saccharin, \\ 
 &&& N,N-DIMETHYLALANINE, Orotidine \\  \cline{3-4} 
&&\multirow{5}{*}{BMI} & X-21442, X-23655, \\ 
 &&& 1-Methylhistidine, Quinic acid, \\ 
 &&& decadienedioic acid (C10:2-DC), X-19183, \\ 
 &&& LysoPC(24:0), 23585, \\ 
 &&& Saccharin, PE(P-16:0/18:1(9Z)) \\  \cline{3-4} 
&&\multirow{5}{*}{Hip} & X-21442, X-23655, \\ 
 &&& Quinic acid, 1-Methylhistidine, \\ 
 &&& N-delta-acetylornithine, Branched chain 14:0 dicarboxylic acid, \\ 
 &&& LysoPC(24:0), Xanthurenic acid, \\ 
 &&& X-19183, PE(P-16:0/18:1(9Z)) \\  \cline{3-4} 
&&\multirow{5}{*}{Waist} & X-21442, X-23655, \\ 
 &&& Quinic acid, 1-Methylhistidine, \\ 
 &&& LysoPC(24:0), X-19183, \\ 
 &&& N-delta-acetylornithine, Saccharin, \\ 
 &&& Branched chain 14:0 dicarboxylic acid, Xanthurenic acid \\ 
\hline \hline \\ \\[-1.8ex] 
\end{tabular} 
\end{table}

\begin{table}[!htbp] 
\caption{Top 10 selected variables results for urine dataset, stratified by female and ordered based on the magnitude of coefficients.} 
\label{supp:tab:urine_selected:f} 
\small
\centering 
\begin{tabular}{@{\extracolsep{5pt}} cccc} 
\\[-1.8ex]\hline 
\hline 
 & & & Selected Metabolites \\ 
\hline 
 & \multirow{20}{*}{Female} & \multirow{5}{*}{Age} & X-25952, X-12096, \\ 
 &&& Saccharin, 2,3-dihydroxypyridine, \\ 
 &&& X-23680, Galactonic acid, \\ 
 &&& Glutamine conjugate of C8H12O2 (3), X-12753, \\ 
 &&& Sucrose, X-25936 \\  \cline{3-4} 
&& \multirow{5}{*}{BMI} & X-25952, X-12096, \\ 
 &&& Saccharin, 2,3-dihydroxypyridine, \\ 
 &&& Galactonic acid, Glutamine conjugate of C8H12O2 (3), \\ 
 &&& X-23680, Riboflavin, \\ 
 &&& X-12753, X-24352 \\  \cline{3-4} 
&& \multirow{5}{*}{Hip} & X-25952, X-12096, \\ 
 &&& 2,3-dihydroxypyridine, Saccharin, \\ 
 &&& Galactonic acid, Glutamine conjugate of C8H12O2 (3), \\ 
 &&& X-23680, X-24352, \\ 
 &&& Riboflavin, X-25936 \\  \cline{3-4} 
&& \multirow{5}{*}{Waist} & X-25952, X-12096, \\ 
 &&& Saccharin, 2,3-dihydroxypyridine, \\ 
 &&& Galactonic acid, Glutamine conjugate of C8H12O2 (3), \\ 
 &&& X-23680, Riboflavin, \\ 
 &&& X-25936, X-12753 \\ 
\hline \hline \\ \\[-1.8ex] 
\end{tabular} 
\end{table} 

\begin{table}[!htbp] 
\caption{Top 10 selected variables results for urine dataset, stratified by male and ordered based on the magnitude of coefficients.} 
\label{supp:tab:urine_selected:m} 
\small
\centering 
\begin{tabular}{@{\extracolsep{5pt}} cccc} 
\\[-1.8ex]\hline 
\hline 
 & & & Selected Metabolites \\ 
\hline 
 & \multirow{20}{*}{Male} & \multirow{5}{*}{Age} & X-25952, X-23161, \\ 
 &&& Quinic acid, Levoglucosan, \\ 
 &&& 3-Methoxytyramine, Cortisone, \\ 
 &&& 5-Aminolevulinic acid, X-13847, \\ 
 &&& N-a-Acetylcitrulline, X-21807 \\  \cline{3-4} 
&& \multirow{5}{*}{BMI} & X-25952, X-23161, \\ 
 &&& Quinic acid, X-13847, \\ 
 &&& 3-Methoxytyramine, Levoglucosan, \\ 
 &&& Cortisone, N-a-Acetylcitrulline, \\ 
 &&& Succinyladenosine, Ursocholic acid \\  \cline{3-4} 
&& \multirow{5}{*}{Hip} & X-23161, X-25952, \\ 
 &&& Quinic acid, X-13847, \\ 
 &&& 3-Methoxytyramine, N-a-Acetylcitrulline, \\ 
 &&& Levoglucosan, Cortisone, \\ 
 &&& Ursocholic acid, Thioproline \\  \cline{3-4} 
&& \multirow{5}{*}{Waist} & X-23161, X-25952, \\ 
 &&& Quinic acid, X-13847, \\ 
 &&& 3-Methoxytyramine, Levoglucosan, \\ 
 &&& Thioproline, Ursocholic acid, \\ 
 &&& N-a-Acetylcitrulline, Cortisone \\ 
\hline \hline \\ \\[-1.8ex] 
\end{tabular} 
\end{table}

\section{Additional simulation results for $n=100$ cases}\label{supp:n_100}
\setcounter{figure}{0}
\renewcommand{\thefigure}{S\arabic{figure}}
{\color{black} We further consider simulations for $n=100$, to evaluate the performance of PLTF and PLSS under smaller sample size. Since the number of observations is significantly reduced, we slightly modify the simulation setup in Section \ref{sec:simul_res} to obtain more stable results: (1) The values for $\beta_j, j=1,2,3,4$ are $(1.5,3,3,4.5)$; (2) The nonlinear function in Model 3 is ${\rm sin}(1/z_i)$; (3) The tSNR varies from 14 to 40. The results are presented in Figures \ref{supp:fig:model1_supp}-\ref{supp:fig:model3_supp}. We observe similar patterns when comparing PLTF and PLSS as those shown in Figures \ref{fig:model1}-\ref{fig:model3}.}

\begin{figure}[!t]
\centering
\includegraphics[width=0.78\textwidth, height=0.7\textheight]{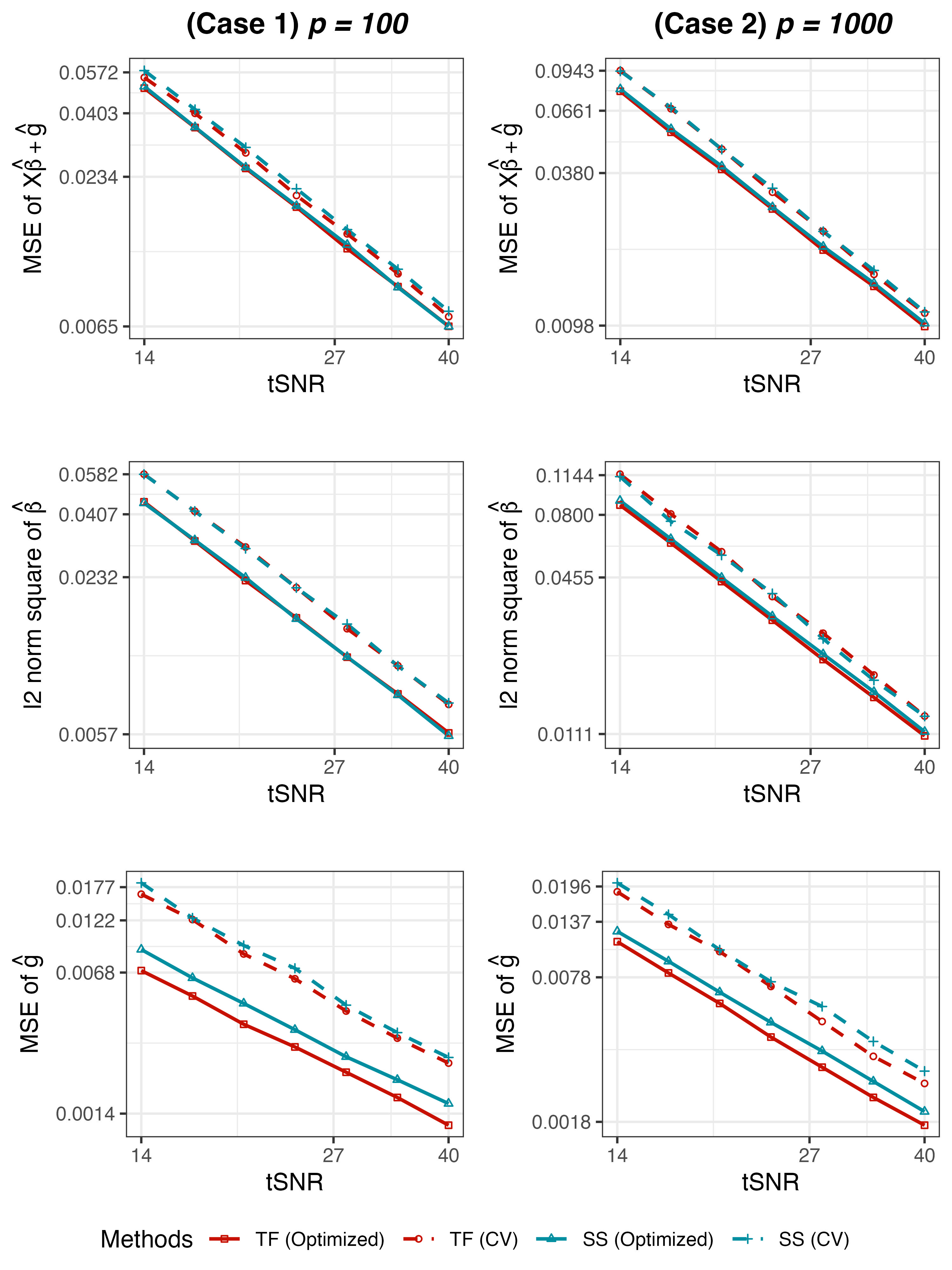}
\caption{PLTF v.s. PLSS under Model 1, with tSNR ranging from 14 to 40, and $n$ is set to 100. TF (Optimized or CV) denotes PLTF with (optimally or CV) tuned parameters. SS (Optimized or CV) denotes PLSS with (optimally or CV) tuned parameters.
}
\label{supp:fig:model1_supp}
\end{figure}

\begin{figure}[!t]
\centering
\includegraphics[width=0.78\textwidth, height=0.7\textheight]{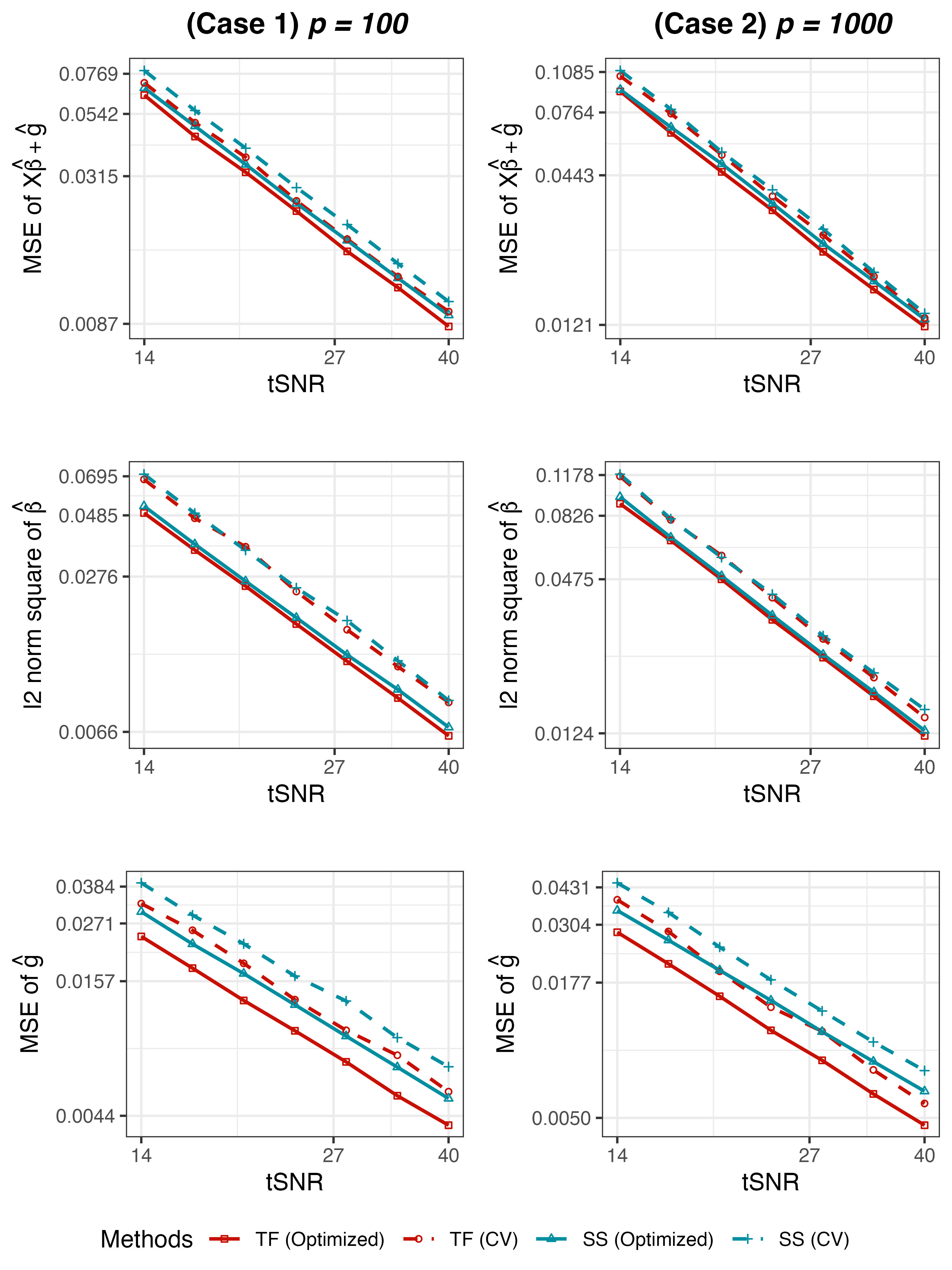}
\caption{PLTF v.s. PLSS under Model 2, with tSNR ranging from 14 to 40, and $n$ is set to 100. TF (Optimized or CV) denotes PLTF with (optimally or CV) tuned parameters. SS (Optimized or CV) denotes PLSS with (optimally or CV) tuned parameters.
}
\label{supp:fig:model2_supp}
\end{figure}

\begin{figure}[!t]
\centering
\includegraphics[width=0.78\textwidth, height=0.7\textheight]{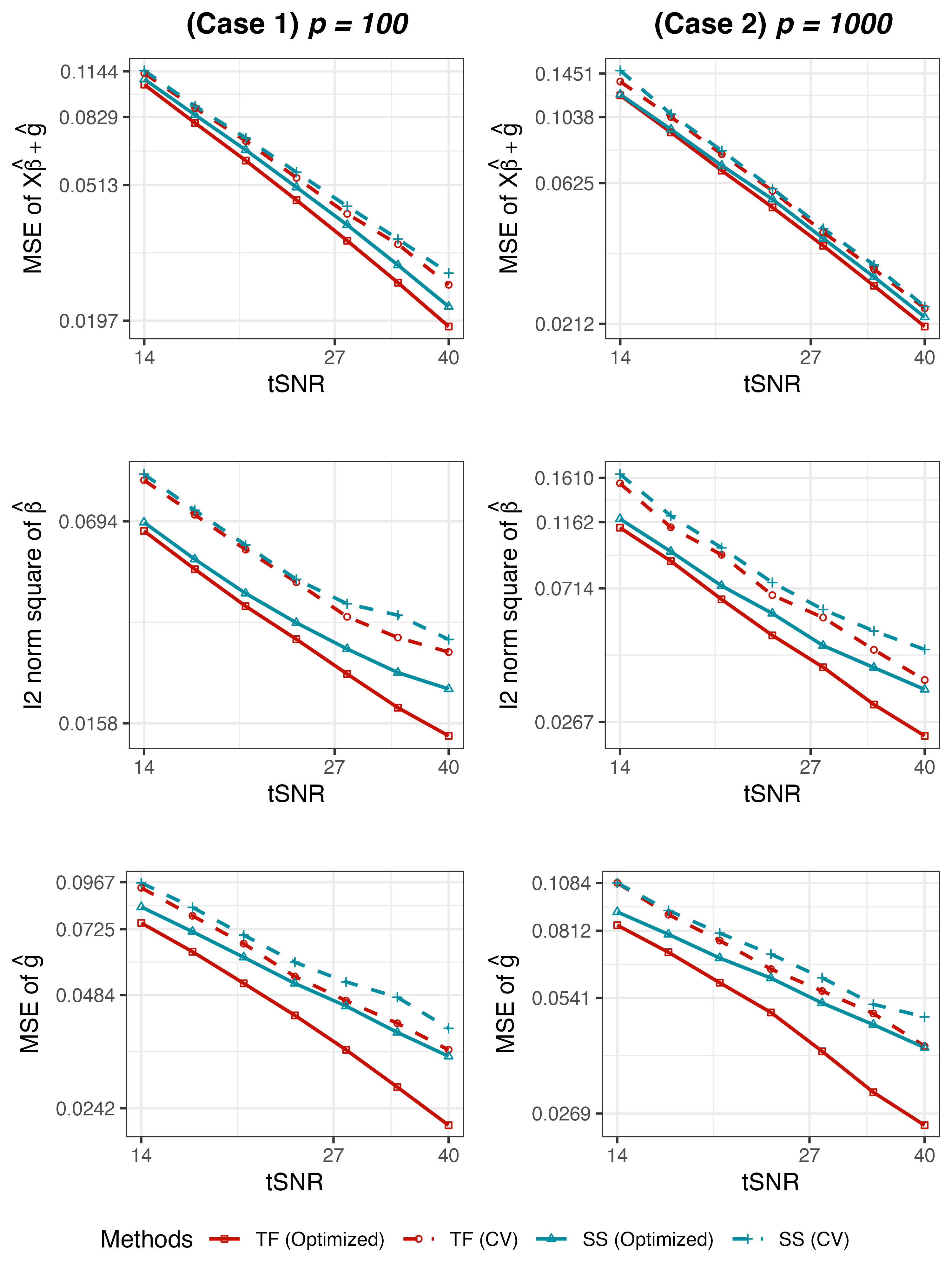}
\caption{PLTF v.s. PLSS under Model 3, with tSNR ranging from 14 to 40, and $n$ is set to 100. TF (Optimized or CV) denotes PLTF with (optimally or CV) tuned parameters. SS (Optimized or CV) denotes PLSS with (optimally or CV) tuned parameters.
}
\label{supp:fig:model3_supp}
\end{figure}

\end{appendix}

\begin{acks}[Acknowledgments]
The authors would like to thank the anonymous referee, an Associate Editor and the Editor for their constructive comments that improved the quality of this paper.
\end{acks}

\begin{funding}
Sang Kyu Lee was supported by the National Research Foundation of Korea (NRF) grant funded by the Korea government (RS-2026-25494847). Haolei Weng was supported by NSF Grant DMS-1915099.
\end{funding}

\bibliographystyle{imsart-nameyear} 
\bibliography{mc_pltf.bib}       






\end{document}